\numberwithin{equation}{section}
\begin{document}
\title[Scattering for Schr\"{o}dinger operators with conical decay]{Scattering for Schr\"{o}dinger operators with conical decay}
\author{Adam Black}
\address{Department of Mathematics\\
  Yale University\\
 New Haven, CT 06511}
\email[]{adam.black@yale.edu}
\author{Tal Malinovitch}
\address{Department of Mathematics\\
  Yale University\\
 New Haven, CT 06511}
  \email[]{tal.malinovitch@yale.edu}
\maketitle
\begin{abstract}
     We study the scattering properties of Schr\"{o}dinger operators with potentials that have short-range decay along a collection of rays in $\bbR^d$. This generalizes the classical setting of short-range scattering in which the potential is assumed to decay along \emph{all} rays. For these operators, we show that any state decomposes into an asymptotically free piece and a piece which may interact with the potential for long times. We give a microlocal characterization of the scattering states in terms of the dynamics and a corresponding description of their complement. We also show that in certain cases these characterizations can be purely spatial.
\end{abstract}
\tableofcontents
 \clearpage
\section{Introduction}\label{Intro}
In this paper, we study the scattering properties of Schr\"{o}dinger operators with potentials that have short-range decay along a collection of rays in $\bbR^d$. This generalizes the classical setting of short-range scattering in which the potential is assumed to decay along \emph{all} rays. The classes of potentials we consider will typically be concentrated near some subset of $\bbR^d$ with unbounded complement. For these operators, we show that any state decomposes into an asymptotically free piece and a piece which may interact with the potential for long times. We give a microlocal characterization of the scattering states in terms of the dynamics and a corresponding description of their complement. We also show that in certain cases these characterizations can be purely spatial.\par
Let us now recall the classical picture of short-range scattering in order to situate our result. We consider a self-adjoint Schr\"{o}dinger operator of the form
\begin{align*}
    H=H_0+V
\end{align*}
on $\calH=L^2(\bbR^d)$ where $H_0=-\frac{1}{2}\Delta$ and $V$ is a bounded multiplication operator. There are a variety of decay conditions one can impose on $V$ in order to consider it short-range (see \cite{agmon1975spectral} and the references therein), but we focus on the \emph{Enss condition}
 \begin{align}\label{classicalEnssCondition}
     \|\chi_{B_r^c}V\|_{\textrm{op}}\in L^1([0,\infty),dr)
\end{align}
where $\chi$ is the indicator function of a subset of $\bbR^d$ and $B_r$ is the ball of radius $r$ in $\bbR^d$. This condition was originally posited in \cite{Enss}, in which it was proven that the wave operators
\begin{align*}
    \Omega^\pm=\slim_{t\rightarrow \mp \infty}e^{itH}e^{-itH_0}
\end{align*}
whose range consists of the scattering states, exist on all of $\calH$ and are asymptotically complete. This means that 
\begin{align*}
   \calH=\Ran(\Omega^\pm)\oplus \calH_{\textrm{pp}}(H)
\end{align*}
or equivalently
\begin{align}\label{asymptoticCompleteness}
    \calH_{\textrm{c}}(H)=\Ran(\Omega^\pm)
\end{align}
The proof of this result due to Enss \cite{Enss}, as well as its refinement by Davies \cite{davies1977scattering}, rely on studying the phase space localization of a state as it evolves under $H$. Expanding on these ideas, Kitada and Yajima \cite{kitadaYajima} proved a microlocal characterization of the set of scattering states for time-dependent short-range potentials, among other examples.\par
Motivated by this classical theory, in \cite{black2022scattering} we studied the scattering properties of potentials which are assumed to decay only in some coordinate directions. Formally, if $S_r$ is the set of points of distance of less than $r$ from some subspace of $\bbR^d$ then we studied potentials satisfying the \emph{subspace Enss condition}
\begin{align*}
    \|\chi_{S_r^c}V\|_{\textrm{op}} \in L^1([0,\infty),dr)
\end{align*}
In this setting, we showed that $\Omega^-$ exists for all $\psi \in \calH$ and that the orthogonal complement of its range is given by the set of surface states
\begin{align*}
    \calH_{\textrm{sur}}=\{\psi \in \calH \mid \forall v>0, \lim\limits_{t\rightarrow \infty}\|\chi_{S_{vt}^c}e^{-itH} \psi\|= 0\}
\end{align*}
so that
\begin{align*}
    \calH=\Ran(\Omega^-)\oplus\calH_\textrm{sur}
\end{align*}
Thus, even though asymptotic completeness in the sense of (\ref{asymptoticCompleteness}) does not generally hold in this setting, we were still able to provide a dynamical characterization of the non-scattering states, albeit not a spectral one. That work proceeds via the phase space scattering methodology of Enss but makes heavy use of the subspace structure of the potential. Naturally then, one can ask whether it is feasible to obtain a similar result for more general geometries, which is the object of this paper.\par
To study scattering, we must choose a potential which will admit asymptotically free trajectories. Classically, one expects that a particle may escape along some ray so long as the strength of the potential attenuates fast enough along that ray. Here, we study the quantum analogue of this phenomenon: we let $V$ decay inside a (possibly infinite) collection of \emph{cones}. For a ray in the interior of a cone, the distance to the boundary of the cone increases along the ray. Thus, the use of cones enforces that the effect of the potential must decrease along any classical free trajectory.\par
Finally, we remark that our results are related to the question of how the geometry of the potential effects the spectrum of $H$. There has been much recent progress in this direction, for instance, the study of geometrically-induced bound states. One representative example is \cite{exner2020spectral} in which a condition is given for the existence of bound states due to singular potentials supported on certain curves in $\bbR^2$. Where our theorem applies to these settings, such states will appear in the interacting subspace $\calH_\textrm{int}$.

\section{Model}
As mentioned above, we consider a self-adjoint operator $H$ on $\calH=L^2(\bbR^d)$ of the form
\begin{align}
    H=H_0+V
\end{align} 
 where $H_0=-\frac{1}{2}\Delta$ and $V$ is a real-valued bounded potential that decays inside a collection of cones.\par
 To be more precise, let us first fix some notation. For any $ x\in \bbR^d, \vec{v}\in \bbS^{d-1}$, and $ \gamma\in (0,\pi)$ let
\begin{align*}
	\mathcal{C}_{x,\gamma,\vec{v}}=\{y\in\bbR \mid \braket{(y-x),\vec{v}}>\cos(\gamma)\|y-x\|\}	
\end{align*}
be the open cone with vertex $ x $ in the $ \vec{v} $ direction with aperture $ 2\gamma $. Since we will often work with cones with vertex at the origin, we let $ \mathcal{C}_{\vec{v},\gamma} $ denote $ \mathcal{C}_{0,\vec{v},\gamma} $. Furthermore, for any cone $\mathcal{C}$, let $A_r(\mathcal{C})$ be the set of points a distance greater than $r>0$ from $\calC^c$:
\begin{align*}
    A_r(\calC)=\{x\in\bbR^d\mid d(x,\calC^c)>r\}
\end{align*}
See Figure \ref{fig:cones}. We will use the shorthand $A_r^c(\calC)=[A_r(\calC)]^c$.\par
For some collection of cones $\{\calC_i\}_{i\in\calI}$, let
\begin{align*}
    &\calA_r=\bigcup\limits_{i\in\calI} A_r(\calC_i)
\end{align*}
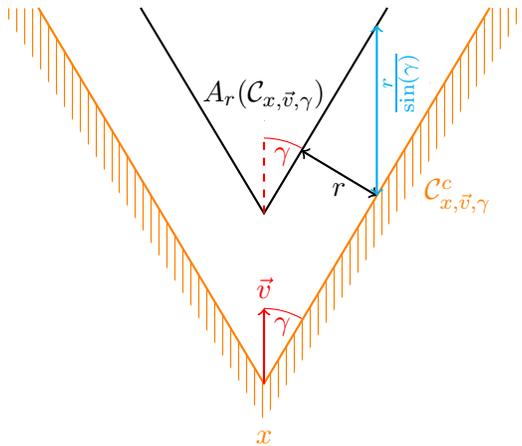
\begin{figure}[ht]
\centering
\begin{tikzpicture}

 \pgfmathsetmacro {\coneendx }{3}
 \pgfmathsetmacro {\coneendy }{5}
 \pgfmathsetmacro {\shift }{0.5}
 \pgfmathsetmacro {\smallconeendx }{41/25}
 \pgfmathsetmacro {\smallconeendy }{34/15}
 \pgfmathsetmacro {\pnty }{31/10}
\draw[white,pattern=vertical lines, pattern color=orange] (\coneendx,\coneendy) -- (0,0) -- (-\coneendx,\coneendy) -- (-\coneendx-\shift ,\coneendy) -- (0,-\shift ) -- (\coneendx+\shift ,\coneendy); 
\draw[orange,thick] (\coneendx,\coneendy) -- (0,0) -- (-\coneendx,\coneendy);     
\filldraw[orange] (\coneendx/2+\shift ,\coneendy/2) circle (0pt) node[anchor=west] {$\calC_{x,\vec{v},\gamma}^c$};
\draw[orange] (0,-\shift ) circle (0pt) node[anchor=north] {$x$};

\draw[red] (0,1) arc (90:60:1);
\filldraw[red] (0,\shift) circle (0pt) node[anchor=south west] {$\gamma$};

\draw[red,->,thick] (0,0) -- (0,1);
\filldraw[red] (0,1) circle (0pt) node[anchor=south] {$\vec{v}$};

\draw[black,thick] (\smallconeendx,5) -- (0,\smallconeendy) -- (-\smallconeendx,5);  
\filldraw[black] (0,3.5) circle (0pt) node[anchor=south] {$A_{r}(\calC_{x,\vec{v},\gamma})$};

\draw[red,dashed,thick] (0,\smallconeendy) -- (0,\smallconeendy+1);  
\draw[red] (0,\smallconeendy+1) arc (90:60:1);
\filldraw[red] (0,\smallconeendy+\shift) circle (0pt) node[anchor=south west] {$\gamma$};

\draw[black,<->,thick] (\coneendx/2,\coneendy/2) -- (\shift,\pnty);
\filldraw[black] (1,56/20) circle (0pt) node[anchor=north] {$r$};
\draw[cyan,<->,thick] (1.5,2.5) -- (1.5,143/30);
\filldraw[cyan] (2.2,4) circle (0pt) node[rotate=90,anchor=south] {$\frac{r}{\sin(\gamma)}$};
\end{tikzpicture}
\caption{Illustration of $\calC_{x,\vec{v},\gamma}$ and $A_{r}(\calC_{x,\vec{v},\gamma})$ for $d=2$: in orange we have the complement of $\calC_{x,\vec{v},\gamma}$, which is where the potential is concentrated. In black we have the set $A_{r}(\calC_{x,\vec{v},\gamma})$, in red we indicate $\vec{v}$ and $\gamma$.
}\label{fig:cones}
\end{figure}
We assume that $V$ satisfies the following \emph{generalized Enss condition} with respect to $\{\calC_i\}_{i\in\calI}$:
\begin{align*}
    \|\chi_{\calA_{r}}V\|_{\textrm{op}}\in L^1([0,\infty),dr)
\end{align*}
which should be compared to (\ref{classicalEnssCondition}). Note that $\calA_r$ depends implicitly on the collection $\{\calC_i\}_{i\in\calI}$ and therefore this condition depends only on the geometry of $V$.\par
We will study the scattering properties of $H$ via the (positive time) wave operator $\Omega^-$, which we simply write as $\Omega$. Our results may be easily reformulated for $\Omega^+$, but we focus our attention on the limit $t\rightarrow\infty$. Before stating a precise theorem (see Section \ref{Def}), let us give a few examples of the geometries we plan to consider.
\begin{example}[Single cone]
    It is already interesting to consider $ V $ which decays in some cone $ \mathcal{C} $, that is, $ \{\mathcal{C}_i\}_{i\in \calI} $ consists of a single cone. For such potentials, we will show that $ \Ran(\Omega) $ consists of states which evolve into $ \mathcal{C} $ with momenta lying in $ \mathcal{C} $ whereas $ \Ran(\Omega)^\perp $ consists of states which may interact with $ V $ for arbitrarily long times. These characterizations are microlocal in the sense that they depend on the position and momentum localization of a state.
\end{example}
\begin{example}[Short-range scattering]
    The condition above is closely related to the classical Enss condition (\ref{classicalEnssCondition}) for short-range potentials. One may study short-range potentials in the present setting by writing
	\begin{align*}
		B_r= \bigcup_{\vec{v}\in \bbS^{d-1}}A_r(\mathcal{C}_{\vec{v},\frac{\pi}{2}})
	\end{align*}
	which may be readily verified. 
\end{example}
\begin{example}[Subspace potentials]
	In \cite{black2022scattering}, we studied potentials which are supported near a subspace of $ \bbR^d $, as explained in Section \ref{Intro}. Using the product structure of this geometry, we proved that $ \Omega $ exists for all $\psi \in \calH$ and gave a purely spatial characterization of $ \Ran(\Omega)^\perp $. We will show that one may recover these results in the present setting since it is easy to see that
	\begin{align*}
		S_r= \bigcup_{\vec{v}\in \bbR^k\times \bbS^{d-k-1}}A_r(\mathcal{C}_{\vec{v},\frac{\pi}{2}})
	\end{align*}
	by similar considerations as in the above example.
\end{example}
\begin{example}[Broken subspace]\label{broken}
A variant of the above example is a ``broken subspace,'' written here in $d=2$ for simplicity: consider $\vec{v}_1,\vec{v}_2\in \bbS^1$ and let $\vec{r}_1$ and $\vec{r}_2$ be the rays $\{t\vec{v}_1\mid t\geq 0\}$ and $\{t\vec{v}_2\mid t\geq0\}$, respectively. Then consider $V$ such that
\begin{align*}
    \supp V\subset T_r:= \{x\in\bbR^2\mid d(x,\vec{r}_1\cup \vec{r}_2)<r\}
\end{align*}
We may accommodate such potentials by observing that 
\begin{align*}
    T_r = (\calC_{r\vec{v}_*,\vec{v}_*,\gamma}\cup \calC_{-r\vec{v}_*,-\vec{v}_*,\pi - \gamma})^c
\end{align*}
where $\vec{v}_*=\frac{\vec{v}_1+\vec{v}_2}{\|\vec{v}_1+\vec{v}_2\|}$ and $\gamma$ is half of the (non-obtuse) angle between $\vec{v}_1$ and $\vec{v}_2$, see Figure \ref{fig:broken}.
\begin{figure}[ht]
\centering
\begin{tikzpicture}

 \pgfmathsetmacro {\deg }{45}
 \pgfmathsetmacro {\size }{7}
 \pgfmathsetmacro {\coneendx }{\size*sin(\deg)}
 \pgfmathsetmacro {\coneendy }{\size*cos(\deg)}
 \pgfmathsetmacro {\lowscale }{1.5}
 \pgfmathsetmacro {\scale }{2}
 \pgfmathsetmacro {\scalev }{3}
 \pgfmathsetmacro {\Twoscale }{1.5*\scale}
 \pgfmathsetmacro {\shift }{1.5}
 \pgfmathsetmacro {\sqrshift }{\shift*sqrt(2)}
 \draw[orange,->,thick] (0,0) -- (\coneendx/\scale,\coneendy/\scale);
\draw[orange] (\coneendx/\scale,\coneendy/\scale) circle (0pt) node[anchor=west] {$v_1$};
 \draw[orange,->,thick] (0,0) -- (-\coneendx/\scalev,\coneendy/\scalev);
\draw[orange] (-\coneendx/\scalev,\coneendy/\scalev) circle (0pt) node[anchor=east] {$v_2$};

 \draw[red,dashed,->,thick] (0,0) -- (0,1);  
\filldraw[red] (0,0.7) circle (0pt) node[anchor=east] {$\vec{v}_*$};
\draw[red] (0,1) arc (90:90-\deg:1);
\filldraw[red] (0,0.5) circle (0pt) node[anchor=south west] {$\gamma$};

 \draw[blue] (0,0+\shift) -- (\coneendx/\scale,\coneendy/\scale+\shift);
 \draw[blue] (0,0+\shift) -- (-\coneendx/\scale,\coneendy/\scale+\shift);
 \draw[blue] (0,0-\shift) -- (\coneendx/\lowscale,\coneendy/\lowscale-\shift);
 \draw[blue] (0,0-\shift) -- (-\coneendx/\lowscale,\coneendy/\lowscale-\shift);
 
 \draw[red,dashed,->,thick] (0,0-\shift) -- (0,-\shift-1);  
\filldraw[red] (0,-\shift-0.7) circle (0pt) node[anchor=east] {$-\vec{v}_*$};
\draw[red] (0,-\shift-1) arc (-90:\deg:1);
\filldraw[red] (0+0.5,-\shift-0.5) circle (0pt) node[anchor=south] {$\pi-\gamma$};
 
 \draw[blue,<->] (\coneendx/\Twoscale,\coneendy/\Twoscale-\shift) -- (\coneendx/\Twoscale-\shift,\coneendy/\Twoscale);
\filldraw[blue] (\coneendx/\Twoscale-\shift/2,\coneendy/\Twoscale-\shift/2+0.1) circle (0pt) node[anchor=south] {$2r$};
\filldraw[blue] (-\coneendx/\Twoscale-\shift/2,\coneendy/\Twoscale-\shift/2+0.1) circle (0pt) node[anchor=south] {$V$};
 
\end{tikzpicture}
\caption{The geometry of the broken subspace: in orange we have the vectors $v_1,v_2$, in red we have the vectors $\vec{v}_*$ and $-\vec{v}_*$, in blue we have the outline of $T_r$- which contains $\supp V$. 
}\label{fig:broken}
\end{figure}
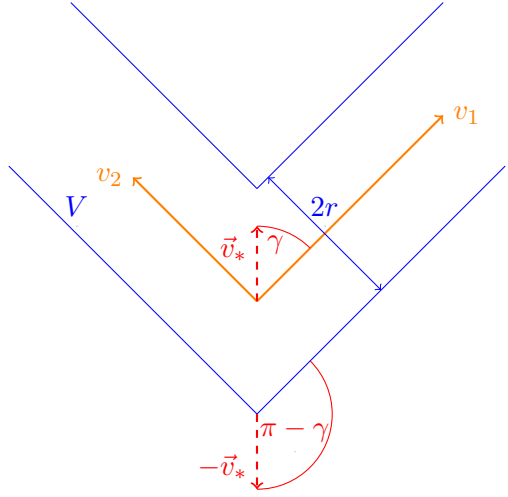

\end{example}
\section{Definitions and Results}\label{Def}
\subsection{Notation and Conventions}
\begin{itemize}
    \item We let $\calH$ denote $L^2(\bbR^d)$ with norm $\|\cdot\|$ and use the convention that its inner product $\braket{\cdot,\cdot}$ is anti-linear in the first argument and linear in the second.
    \item The symbols $\|\cdot\|$ and $\braket{\cdot,\cdot}$ will also be used for the norm and inner product on $\bbR^d$.
    \item $d(\cdot,\cdot)$ is used for the distance between points or subsets of $\bbR^d$.
    \item $B_r$ will mean the ball of radius $r$ centered at the origin in either $\bbR^d$ or $\calH$ depending on context.
    \item For $A\subset \bbR^d$, $A^c$ denotes its complement.
    \item $\chi_A$ will mean the indicator function of $A\subset \bbR^d$.
    \item $A\Subset B$ denotes that $A$ is compactly contained in $B$.
    \item $\calS=\calS(\bbR^d)$, the Schwartz space. 
    \item We use the following convention for the Fourier transform of $f\in\calH$:
    \begin{align*}
        &\hat{f}(\xi)=\calF(f)(\xi)=(2\pi)^{-\frac{d}{2}}\int\limits_{\bbR^d}f(x)e^{-ix\xi} \,dx\\
        &\calF^{-1}(\hat{f})(x)=(2\pi)^{-\frac{d}{2}}\int\limits_{\bbR^d}\hat{f}(\xi)e^{ix\xi} \,d\xi
    \end{align*}
    \item For some cone $\mathcal{C}_{x,\vec{v},\gamma}$ and $r>0$, we define $A_{r}(\mathcal{C}_{x,\vec{v},\gamma})\subset \bbR^d$ to be the set of all points at a distance greater than $r$ from  $\calC_{x,\vec{v},\gamma}^c$
    \begin{align*}
       A_{r}(\calC_{x,\vec{v},\gamma})=\{x\in\bbR^d\mid d(x,\calC_{x,\vec{v},\gamma}^c)>r\}
    \end{align*}
    As explained below,
    \begin{align*}
        A_{r}(\calC_{x,\vec{v},\gamma})=\calC_{x,\vec{v},\gamma}+\frac{r}{\sin(\gamma)}\vec{v}
    \end{align*}
    which we will use to define $A_r$ for $r\leq 0$.
    We will also use the shorthand
    \begin{align*}
        A_{r}^c(\calC_{x,\vec{v},\gamma})=[A_{r}(\calC_{x,\vec{v},\gamma})]^c
    \end{align*}
    \item We will also let 
    \begin{align*}
        &\calA_r=\bigcup\limits_{i\in\calI} A_r(\calC_i)
    \end{align*}
    for $\{\calC_i\}_{i\in\calI}$ some collection of cones. 
    \item For some cone $\mathcal{C}_{x,\vec{v},\gamma}$ and $k>0$ we let
    \begin{align*}
        &\calD_k(\mathcal{C}_{x,\vec{v},\gamma})=\{\psi \in \calS\mid \supp \hat{\psi} \Subset A_k(\calC_{\vec{v},\gamma})\}\\
        &\calD(\mathcal{C}_{x,\vec{v},\gamma})=\{\psi \in \calH\mid \supp \hat{\psi} \subset \calC_{\vec{v},\gamma}\}
    \end{align*}
    \item For the definition of $P_{\delta}(\cdot)$ see Appendix \ref{DaviesProperties}.
    \item $\psi_t$ will always denote the evolution of $\psi$ under $H$ at time $t$: \begin{align*}
        \psi_t = e^{-itH}\psi
    \end{align*}
    \item We will also use the following notation:
    \begin{align*}
        &\Omega(t)=e^{itH}e^{-itH_0}\\
        &\Omega^*(t)=e^{itH_0}e^{-itH}
    \end{align*}
    and we will denote by $\Omega$ the positive time wave operator $\Omega^-$.
    \item $\Ran(\Omega)$ will refer to the range of $\Omega$ on its natural domain $\calD$.
\end{itemize}

\subsection{Definition of the scattering and interacting subspaces}\label{HsurTildeDef}
In order to give the aforementioned microlocal characterizations, we will need a suitable way to describe a state's localization in phase space. To this end, for every $\delta>0$, we define a positive operator-valued measure (POVM), denoted $P_\delta$, on the phase space $\bbR^d_x\times \bbR^d_p$, with the following properties:
\begin{enumerate}
    \item(Observable) $P_\delta(\bbR^{2d})=\id$.
    \item(Momentum localization) Let $B\subset \bbR^d$ and $D\subset \bbR^d$ be Borel sets such that $d(B,D)>\delta$. Then for any $E\subset \bbR^d\times B$ Borel and $\psi \in \calH$ such that $\supp \hat{\psi} \subset D$
    \begin{align*}
        P_\delta(E) \psi =0
    \end{align*}
    \item(Approximate space localization) Let $A\subset \bbR^d$ and $D\subset\bbR^d$ be Borel sets such that $\sloppy{d(D,A)>0}$. Then for any $\ell>0$ there exists some constant $C>0$ so that for all $E\subset A\times \bbR^d$
    \begin{align*}
        \|P_{\delta}(E)\chi_{D}\|_{\textrm{op}}<C[d(A,D)]^{-\ell}
    \end{align*}
    \item(Microlocal non-stationary phase estimate) Let $\mathfrak{C}_t(E)\subset \bbR^{d}$ denote the classically allowed region associated to $E\subset \bbR^{2d}$ at time $t$:
    \begin{align*}
        \mathfrak{C}_t(E)=\{x+tp\mid (x,p)\in E\}
    \end{align*}
    Let $F\subset \bbR^d$ be Borel. For any $\ell>0$ there exists $C>0$ such that 
    \begin{align*}
        \|\chi_{F}e^{-itH_0}P_\delta(E)\|_\mathrm{op}\leq Cd(|t|)^{-\ell}
    \end{align*}
    for all $t$ such that $d(t):=d(\mathfrak{C}_t(E),F)>\delta |t|$.
    \item(Spatial non-stationary phase estimate) Let $\{A_t\}_{t\geq0}$ be a collection of Borel subsets of $\bbR^d$.\par
    Then for any $\varphi \in \calS(\bbR^d)$ Schwartz such that $\supp \hat{\varphi} \Subset D$ Borel, $\ell>0$, and $\varepsilon>0$ there exists some constant $C(\psi,\ell,\varepsilon,\delta)>0$ such that 
    \begin{align*}
        \|P_{\delta}(A_t\times\bbR^d)e^{-itH_0}\varphi \|< Ct^{-\ell}
    \end{align*}
    for all $t$ such that $d(A_{t},tD)>\varepsilon t$.
\end{enumerate}
We refer the reader to \cite{davies1976quantum} for the definition of POVMs and relegate the construction of a POVM satisfying the above properties to Appendix \ref{DaviesProperties}.\par
To specify the domain of $ \Omega $, we let
\begin{align*}
	\calD(\mathcal{C}_{x,\vec{v},\gamma})=\{\psi \in \calH\mid \supp \hat{\psi} \subset \calC_{\vec{v},\gamma}\}
\end{align*}
In particular, $\calD(\mathcal{C}_{x,\vec{v},\gamma})$ is independent of the vertex $x$. For some collection of cones $\{\calC_i\}_{i\in\calI}$, we let
\begin{align*}
    \calD=\overline{\bigcup_{i\in\calI}\calD(\calC_i)}
\end{align*}\par
For $n>0$ and some cone $ \mathcal{C}_{x,\vec{v},\gamma} $, we let the corresponding \emph{outgoing} subset of phase space be the set of points with space coordinates in $A_{n}(\mathcal{C}_{x,\vec{v},\gamma})$ and momentum coordinates in  $\calC_{\vec{v},\gamma}$:
\begin{align*}
	W_{n;\textrm{out}}(\mathcal{C}_{x,\vec{v},\gamma})=\{(x,p)\in\bbR^{2d}\mid x\in A_{n}(\mathcal{C}_{x,\vec{v},\gamma})\text{ and }p\in \mathcal{C}_{\vec{v},\gamma}\}
    \end{align*}
    and let the total outgoing subset be
    \begin{align*}
    	\calW_{n; \textrm{out}}=\bigcup_{i\in \calI}W_{n; \textrm{out}}(\mathcal{C}_i)
    \end{align*}
We also define a variant of $W_{n;\textrm{out}}(\calC)$ which is restricted away from $0$ in the momentum variable:
\begin{align*}
	W_{n,m;\textrm{out}}(\mathcal{C}_{x,\vec{v},\gamma})=\{(x,p)\in\bbR^{2d}\mid x\in A_{n}(\mathcal{C}_{x,\vec{v},\gamma})\text{ and }p\in A_m(\mathcal{C}_{\vec{v},\gamma})\}
\end{align*} 
and it's respective total set
\begin{align*}
    	\calW_{n,m; \textrm{out}}=\bigcup_{i\in \calI}W_{n,m; \textrm{out}}(\mathcal{C}_i)
\end{align*}
 This allows us to define the scattering subspace 
   \begin{align*}
    \calH_{\textrm{scat}}=\{\psi \in \calH\mid \exists v,m,\delta_0>0\text{ so that }\forall\delta\in (0,\delta_0)\lim_{t\rightarrow\infty}\|(P_\delta(\calW_{vt;\mathrm{out}})-\Id)\psi_t\|=0\}
\end{align*}
which we will prove below is dense in $\Ran(\Omega):=\Omega(\calD)$.\par
\begin{remark}
The above characterization of $\Ran(\Omega)$ is similar to those given in \cite{kitadaYajima} and \cite{yoneyama} in the short-range setting.
\end{remark}
We also define the interacting subspace
\begin{align*}
	\calH_{\mathrm{int}}=\{\psi  \in \calH\mid  \forall v,m>0, \exists \delta_0>0\text{ so that }\forall \delta\in(0,\delta_0)\,\lim\limits_{t\rightarrow\infty} \|P_\delta(\calW_{vt,m;\mathrm{out}})\psi_t\|=0\}
\end{align*}
which consists of states that can interact with $V$ for arbitrarily long times. We will show that this subspace is equal to $\Ran(\Omega)^\perp$.\par
With these definitions, we may state our main theorem:
\begin{theorem}\label{thms}
\leavevmode
Let $H=H_0+V$ where $H_0=-\frac{1}{2}\Delta$ and $V$ is a real-valued multiplication operator such that
\begin{itemize}
    \item $V\in L^\infty(\bbR^d)$
    \item There exists a collection of cones $ \{\mathcal{C}_i\}_{i\in \calI} $ for which $V$ satisfies the \emph{generalized Enss condition}
    \begin{align}\label{EnssCond}
    	    \|\chi_{\calA_r}V\|_{\mathrm{op}}\in L^1([0,\infty),dr)
    	\end{align}
    \end{itemize}
Then
\begin{enumerate}[label={(\roman*)},itemindent=1em]
	\item \label{existenceTheorem}(Existence) Let $ \calD = \overline{\bigcup\limits_{i\in \calI}\calD(\mathcal{C}_i)} $. For all $\psi \in \calD$ the limit $\Omega\psi$ exists. Furthermore, ${\sigma(H_0)\subset \sigma_\mathrm{ac}(H)}$.
    \item \label{Complete} (Dynamical description of scattering states and their complement) We have 
    \begin{align*}
        &\Omega(\calD)=\overline{\calH_{\mathrm{scat}}}\\
        &\Omega(\calD)^\perp=\calH_{\mathrm{int}}
    \end{align*}
\end{enumerate}
\end{theorem}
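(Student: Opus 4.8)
\emph{Part \ref{existenceTheorem} (Existence).} The plan is Cook's method. First I would show $\Omega(t)\psi$ converges for the dense class of Schwartz $\psi$ with $\supp\hat\psi\Subset A_k(\calC_{\vec v_i,\gamma_i})$ for some $i\in\calI$ and $k>0$; since $\|\Omega(t)\|=1$ this extends the limit to all of $\calD$. For such $\psi$ the identity $A_r(\calC)=\calC+\tfrac r{\sin\gamma}\vec v$ gives $t\cdot\supp\hat\psi\subset A_{ct}(\calC_i)\subset\calA_{ct}$ for some $c>0$ and $t$ large, so by a non-stationary phase bound $e^{-itH_0}\psi$ is supported in $\calA_{ct}$ up to an error that is $O(t^{-N})$ for every $N$; hence $\|Ve^{-itH_0}\psi\|\le\|\chi_{\calA_{ct}}V\|_{\mathrm{op}}\|\psi\|+O(t^{-N})$, which lies in $L^1(dt)$ by \eqref{EnssCond}. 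The spectral assertion follows because $\Omega$ is isometric on $\calD$ and intertwines, $\Omega e^{-isH_0}=e^{-isH}\Omega$, so each $\Omega(\calD(\calC_i))$ reduces $H$ and carries it to multiplication by $\tfrac12|\xi|^2$ on $L^2(\calC_{\vec v_i,\gamma_i})$, whose spectrum is $[0,\infty)=\sigma(H_0)$ and is purely absolutely continuous.

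\emph{The easy inclusions of \ref{Complete}.} Since $\Omega$ is an isometry of the closed space $\calD$ its range is closed, so each identity reduces to two inclusions. For $\calH_{\mathrm{int}}\subset\Omega(\calD)^\perp$: taking $\psi\in\calH_{\mathrm{int}}$ and $\phi$ in the dense class above with $\supp\hat\phi\Subset A_k(\calC_{\vec v_i,\gamma_i})$, I would write $\braket{\psi,\Omega\phi}=\lim_t\braket{\psi_t,e^{-itH_0}\phi}$; the cone computation, the momentum-localization property, and the spatial non-stationary phase property of $P_\delta$ show that for $v,m<k$ and $\delta$ small $e^{-itH_0}\phi$ becomes phase-space localized in $W_{vt,m;\mathrm{out}}(\calC_i)\subset\calW_{vt,m;\mathrm{out}}$, so $\braket{\psi_t,e^{-itH_0}\phi}=\braket{P_\delta(\calW_{vt,m;\mathrm{out}})\psi_t,e^{-itH_0}\phi}+o(1)\to0$. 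The same localization, applied the other way, gives $\calH_{\mathrm{scat}}\perp\calH_{\mathrm{int}}$: for $\psi\in\calH_{\mathrm{scat}}$ with parameters $v,m$ and $\phi\in\calH_{\mathrm{int}}$, $\braket{\psi,\phi}=\braket{\psi_t,\phi_t}=\braket{\psi_t,P_\delta(\calW_{vt,m;\mathrm{out}})\phi_t}+o(1)\to0$ once $\delta$ is below the thresholds attached to $v,m$ for both states.

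\emph{The hard inclusion $\Omega(\calD)^\perp\subset\calH_{\mathrm{int}}$.} This is the crux and I expect it to be the main obstacle; the plan is an Enss-type argument by contradiction. If $\psi\perp\Omega(\calD)$ is not in $\calH_{\mathrm{int}}$, then for some $v,m>0$, some $\delta\le m/2$, some $\varepsilon>0$ and some $t_n\to\infty$ the outgoing part $\eta_n:=P_\delta(\calW_{vt_n,m;\mathrm{out}})\psi_{t_n}$ satisfies $\|\eta_n\|\ge\varepsilon$. The decisive geometric point is that the classically allowed region of the outgoing set runs into the cones: a direct computation with $A_r(\calC)=\calC+\tfrac r{\sin\gamma}\vec v$ and convexity yields $\mathfrak C_s(\calW_{vt,m;\mathrm{out}})\subset\calA_{vt+ms}$ for all $s\ge0$, hence $d\big(\mathfrak C_s(\calW_{vt,m;\mathrm{out}}),\calA^c_{(vt+ms)/2}\big)\ge(vt+ms)/2>\delta s$, and the microlocal non-stationary phase property bounds $\|\chi_{\calA^c_{(vt+ms)/2}}e^{-isH_0}P_\delta(\calW_{vt,m;\mathrm{out}})\|_{\mathrm{op}}$ by $C\big((vt+ms)/2\big)^{-\ell}$. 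Splitting $Ve^{-isH_0}\eta_n$ through $\chi_{\calA_{(vt_n+ms)/2}}$ and its complement and integrating in $s$ bounds $\int_0^\infty\|Ve^{-isH_0}\eta_n\|\,ds$ by $\|\psi\|\big(\tfrac 2m\int_{vt_n/2}^\infty\|\chi_{\calA_r}V\|_{\mathrm{op}}\,dr+C'(vt_n)^{1-\ell}\big)$, which tends to $0$ (the first term by \eqref{EnssCond}, the second for $\ell>1$). Thus $\Omega\eta_n$ exists with $\|\Omega\eta_n-\eta_n\|\to0$; moreover $\widehat{\eta_n}$ is supported within $\delta$ of the momentum part of $\calW_{vt_n,m;\mathrm{out}}$, which for $\delta<m$ lies in $\bigcup_i\calC_{\vec v_i,\gamma_i}$, so $\eta_n\in\calD$ and $e^{it_nH}\Omega\eta_n=\Omega e^{it_nH_0}\eta_n\in\Omega(\calD)$. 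Then $\braket{\psi_{t_n},\eta_n}=\braket{\psi,e^{it_nH}\eta_n}=\braket{\psi,e^{it_nH}(\eta_n-\Omega\eta_n)}\to0$, contradicting $\braket{\psi_{t_n},\eta_n}=\braket{\psi_{t_n},P_\delta(\calW_{vt_n,m;\mathrm{out}})\psi_{t_n}}\ge\|\eta_n\|^2\ge\varepsilon^2$. Hence $\Omega(\calD)^\perp=\calH_{\mathrm{int}}$, so $\Omega(\calD)=\calH_{\mathrm{int}}^\perp$.

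\emph{Conclusion and the main difficulty.} It remains to prove $\Omega(\calD)\subset\overline{\calH_{\mathrm{scat}}}$, which together with $\calH_{\mathrm{scat}}\perp\calH_{\mathrm{int}}$ and $\Omega(\calD)=\calH_{\mathrm{int}}^\perp$ gives $\Omega(\calD)=\overline{\calH_{\mathrm{scat}}}$; for this it suffices that $\Omega\phi\in\calH_{\mathrm{scat}}$ for $\phi$ in the dense class above. Since $e^{-itH}\Omega\phi=\Omega e^{-itH_0}\phi$ and the Cook estimate gives $\|\Omega e^{-itH_0}\phi-e^{-itH_0}\phi\|\le\int_t^\infty\|Ve^{-isH_0}\phi\|\,ds\to0$, the phase-space localization of $e^{-itH_0}\phi$ inside $W_{vt,m;\mathrm{out}}(\calC_i)$ transfers to $e^{-itH}\Omega\phi$ and yields $\|(P_\delta(\calW_{vt,m;\mathrm{out}})-\Id)e^{-itH}\Omega\phi\|\to0$ for appropriate $v,m$ and small $\delta$. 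The parts carrying the real weight, which I would isolate as preliminary lemmas, are the uniform free-propagation (non-stationary phase) estimates for $P_\delta$ quoted from Appendix \ref{DaviesProperties}, the cone geometry --- above all $\mathfrak C_s(\calW_{vt,m;\mathrm{out}})\subset\calA_{vt+ms}$ and the distance bounds it rests on --- and the care needed to keep every constant uniform over the possibly infinite family $\{\calC_i\}_{i\in\calI}$, which is exactly what lets the contradiction above close.
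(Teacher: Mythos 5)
Your proposal is correct in substance and runs on the same Enss-type machinery as the paper (Cook's method with the splitting $\|Ve^{-itH_0}\psi\|\le\|\chi_{\calA_{\varepsilon t}}V\|_{\mathrm{op}}\|\psi\|+O(t^{-N})$, the cone identity $A_n(\calC)+tA_m(\calC)=A_{n+tm}(\calC)$, and the POVM localization/non-stationary-phase properties), but it is organized differently in two places. First, for $\Omega(\calD)^\perp\subset\calH_{\mathrm{int}}$ you argue by contradiction along a sequence $t_n$, showing the outgoing pieces $\eta_n=P_\delta(\calW_{vt_n,m;\mathrm{out}})\psi_{t_n}$ satisfy $\|\Omega\eta_n-\eta_n\|\to0$ and lie in $\calD$; the paper instead proves the same inclusion directly, bounding $\|(\Omega(s-t)-\Id)P_\delta(\calW_{vt,m;\mathrm{out}})\|_{\mathrm{op}}$ uniformly (its Lemma \ref{MainEstNoMB}) and letting $s\to\infty$ — the underlying estimate is identical, and your claim $\eta_n\in\calD$ is exactly the paper's use of Proposition \ref{psupport}. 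Second, you obtain $\overline{\calH_{\mathrm{scat}}}\subset\Omega(\calD)$ indirectly, from $\calH_{\mathrm{scat}}\perp\calH_{\mathrm{int}}$ together with $\calH_{\mathrm{int}}=\Omega(\calD)^\perp$, which needs $\Omega(\calD)=(\Omega(\calD)^\perp)^\perp$, i.e.\ that $\Omega(\calD)$ is a closed \emph{subspace}; this is fine under the intended reading of $\calD$ (closed span), and closedness follows from $\Omega$ being isometric on the complete set $\calD$, but note that with the paper's literal definition of $\calD$ as $\overline{\bigcup_i\calD(\calC_i)}$ linearity is not automatic, and the paper sidesteps the issue by proving $\calH_{\mathrm{scat}}\subset\Omega(\calD)$ directly: for $\psi\in\calH_{\mathrm{scat}}$ it shows $\Omega^*(t)\psi$ is Cauchy (Theorem \ref{OmegaStarArg}, including a dyadic-in-$t$ argument for the polynomially decaying variant) and that the limit has momentum support in $\overline{B_\delta+\bigcup_i\calC_i}$ for every small $\delta$, hence lies in $\calD$. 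What your route buys is economy — one hard estimate (your $\int_0^\infty\|Ve^{-isH_0}\eta_n\|\,ds\to0$, via $\mathfrak C_s(\calW_{vt,m;\mathrm{out}})\subset\calA_{vt+ms}$) serves both the orthogonality statement and, by duality, the inclusion of $\overline{\calH_{\mathrm{scat}}}$; what the paper's route buys is that every inclusion is proved at the level of sets, without invoking the biorthocomplement identity, and it additionally yields the sharper remark that $\Omega(\calD)$ can be characterized without the parameter $m$.
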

We also show that for cones of large enough aperture, there are spatial characterizations of $\Omega(\calD)$ and $\Omega(\calD)^\perp$:
\begin{theorem}\label{spaceOnlyThm}
    Suppose that $\{\calC_i\}_{i\in \calI}$ consists of cones with aperture greater than or equal to $\pi$. Then with $\calD=\overline{\bigcup_{i\in\calI}\calD(C_i)}$ we have that
    \begin{align*}
    &\Omega(\calD)=\overline{\{\psi \in\calH\mid \exists v>0,\lim\limits_{t\rightarrow \infty }\|\chi_{\calA_{vt}^c}\psi_t\|=0\}}\\
    &\Omega(\calD)^\perp=\{\psi \in\calH\mid \forall v>0,\lim\limits_{t\rightarrow \infty }\|\chi_{\calA_{vt}}\psi_t\|=0\}
\end{align*}
\end{theorem}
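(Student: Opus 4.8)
Write $\calH_{\mathrm{esc}}=\{\psi\in\calH\mid\exists v>0,\ \lim_{t\to\infty}\|\chi_{\calA_{vt}^{c}}\psi_t\|=0\}$ and $\calH_{\mathrm{sur}}=\{\psi\in\calH\mid\forall v>0,\ \lim_{t\to\infty}\|\chi_{\calA_{vt}}\psi_t\|=0\}$ for the two sets appearing on the right-hand sides; a routine argument shows $\calH_{\mathrm{sur}}$ is a closed subspace and $\calH_{\mathrm{esc}}$ a subspace. The plan is to deduce everything from Theorem~\ref{thms}, which gives $\Omega(\calD)=\overline{\calH_{\mathrm{scat}}}$ and $\Omega(\calD)^{\perp}=\calH_{\mathrm{int}}$, together with the observation that $\Omega|_{\calD}$ is an isometry, so $\Omega(\calD)$ is closed. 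It then suffices to establish the chain of inclusions
\begin{align*}
\calH_{\mathrm{sur}}^{\perp}\ \subseteq\ \Omega(\calD)\ =\ \overline{\calH_{\mathrm{scat}}}\ \subseteq\ \overline{\calH_{\mathrm{esc}}}\ \subseteq\ \calH_{\mathrm{sur}}^{\perp},
\end{align*}
since it forces all four spaces to coincide, giving $\Omega(\calD)=\overline{\calH_{\mathrm{esc}}}$ and, on taking complements (using that $\calH_{\mathrm{sur}}$ is closed), $\Omega(\calD)^{\perp}=\calH_{\mathrm{sur}}$.

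Two of the three nontrivial links require no hypothesis on the apertures; both rest on the approximate space localization property of $P_{\delta}$ and the elementary metric estimate $d(\calA_{vt},\calA_{v't}^{c})\ge(v-v')t$ for $v>v'>0$ (if $y\in A_{vt}(\calC_j)$ and $z\in\calA_{v't}^{c}$ then $d(y,\calC_j^{c})>vt\ge v't\ge d(z,\calC_j^{c})$, so $|y-z|>(v-v')t$). For $\overline{\calH_{\mathrm{scat}}}\subseteq\overline{\calH_{\mathrm{esc}}}$: given $\psi\in\calH_{\mathrm{scat}}$ with parameters $v,\delta$, insert $\Id=P_{\delta}(\calW_{vt;\mathrm{out}})+(\Id-P_{\delta}(\calW_{vt;\mathrm{out}}))$ into $\chi_{\calA_{vt/2}^{c}}\psi_t$; since $\calW_{vt;\mathrm{out}}\subseteq\calA_{vt}\times\bbR^{d}$, the space localization property yields $\|\chi_{\calA_{vt/2}^{c}}P_{\delta}(\calW_{vt;\mathrm{out}})\|_{\mathrm{op}}\le C(vt/2)^{-\ell}\to0$, so $\|\chi_{\calA_{vt/2}^{c}}\psi_t\|\to0$ and $\psi\in\calH_{\mathrm{esc}}$. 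For $\overline{\calH_{\mathrm{esc}}}\subseteq\calH_{\mathrm{sur}}^{\perp}$: if $\psi\in\calH_{\mathrm{esc}}$ with parameter $v$ and $\phi\in\calH_{\mathrm{sur}}$ then $|\braket{\psi,\phi}|=|\braket{\psi_t,\phi_t}|\le\|\chi_{\calA_{vt}}\psi_t\|\,\|\chi_{\calA_{vt}}\phi_t\|+\|\chi_{\calA_{vt}^{c}}\psi_t\|\,\|\phi\|\to0$. (The same ingredients also give $\calH_{\mathrm{sur}}\subseteq\calH_{\mathrm{int}}$, which is not needed for the chain but is the easy half of the complement identity.)

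The heart of the theorem is the third link $\calH_{\mathrm{sur}}^{\perp}\subseteq\Omega(\calD)$, equivalently $\calH_{\mathrm{int}}\subseteq\calH_{\mathrm{sur}}$, and it is here that $\gamma_i\ge\pi/2$ is used. Fix $\psi\in\calH_{\mathrm{int}}$ and $v>0$; one must show $\|\chi_{\calA_{vt}}\psi_t\|\to0$. Since $\calH_{\mathrm{int}}=\Ran(\Omega)^{\perp}$ is closed and $H$-invariant, we may assume $\psi$ lies in the range of a spectral projection $\chi_{[-R,R]}(H)$; then $\braket{\psi_t,H_0\psi_t}\le(R+\|V\|_{\infty})\|\psi\|^{2}$ uniformly in $t$, so $\psi_t$ is momentum-localized in $\{|\xi|\le M\}$ up to an error tending to $0$ as $M\to\infty$ (uniformly in $t$), and in particular obeys a ballistic bound. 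Split $\Id=P_{\delta}(\calW_{v't,m;\mathrm{out}})+P_{\delta}(\calW_{v't,m;\mathrm{out}}^{c})$ with $v',m$ small relative to $v$. The first term is controlled directly by the definition of $\calH_{\mathrm{int}}$: $\|\chi_{\calA_{vt}}P_{\delta}(\calW_{v't,m;\mathrm{out}})\psi_t\|\le\|P_{\delta}(\calW_{v't,m;\mathrm{out}})\psi_t\|\to0$. For the second term the wide-aperture geometry enters twice: first, $\calC_{\vec{v}_i,\gamma_i}\supseteq\{\braket{\cdot,\vec{v}_i}>0\}$ and $A_{vt}(\calC_i)$ scales like $t\,A_v(\calC_i)$, so a phase-space point with $x\in A_{vt}(\calC_i)$ and $|x/t-p|<\varepsilon$ would satisfy $d(p,\calC_i^{c})>v-\varepsilon\ge m$ and hence lie in $\calW_{v't,m;\mathrm{out}}$, whence $(\calA_{vt}\times\bbR^{d})\cap\calW_{v't,m;\mathrm{out}}^{c}$ sits inside the classically forbidden set $\{|x-tp|\ge\varepsilon t\}$; second, the half-space containment makes a suitable sub-cone of the outgoing region \emph{forward-invariant} under the free flow.

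The main obstacle is to convert these geometric facts into an estimate on $e^{-itH}\psi$ rather than on the free evolution. The approach I would take is a Cook/Duhamel argument: trim the high-momentum and far-field tails of $\psi_t$ using the ballistic bound, peel the spatial fuzz of $P_{\delta}$ off the forbidden set via the space localization property, and compare $e^{-itH}$ with the free flow on the remaining momentum-localized piece; the Duhamel error is $\int_{t}^{\infty}\|\chi_{\calA_{\rho s}}V\|_{\mathrm{op}}\,ds$ plus a fast-decaying tail produced by the spatial non-stationary phase estimate, both integrable by the generalized Enss condition, after which the microlocal non-stationary phase estimate disposes of the main term. Equivalently—and this is the cleanest way to see why $\gamma_i\ge\pi/2$ cannot be dropped—if $\|\chi_{\calA_{vt_n}}\psi_{t_n}\|\not\to0$ then forward-invariance of the outgoing region (which fails for cones of aperture $<\pi$, where mass deep in a cone can drift sideways back toward $\supp V$) together with the integrable smallness of $V$ on $\calA_{vs}$ would keep a fixed amount of mass inside $\calW_{v''s,m';\mathrm{out}}$ for all $s\ge t_n$, contradicting $\psi\in\calH_{\mathrm{int}}$; I would expect that making this propagation estimate clean is where most of the work lies, and a viable fallback is to reduce to the half-space case and argue in the transverse variables as in \cite{black2022scattering}. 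Granting the third link, the chain collapses as above and yields both asserted identities.
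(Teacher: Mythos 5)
Your reduction is structurally sound and its easy links are fine: the chain $\calH_{\mathrm{sur}}^{\perp}\subseteq\Omega(\calD)=\overline{\calH_{\mathrm{scat}}}\subseteq\overline{\calH_{\mathrm{esc}}}\subseteq\calH_{\mathrm{sur}}^{\perp}$ does collapse as you say, and your proofs of $\overline{\calH_{\mathrm{scat}}}\subseteq\overline{\calH_{\mathrm{esc}}}$ and $\overline{\calH_{\mathrm{esc}}}\subseteq\calH_{\mathrm{sur}}^{\perp}$, via the approximate space localization of $P_\delta$ and the estimate $d(\calA_{vt},\calA_{v't}^{c})\geq (v-v')t$, are essentially the same computations the paper performs. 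The entire content of the theorem, however, is the remaining link $\calH_{\mathrm{int}}\subseteq\calH_{\mathrm{sur}}$, and there your proposal is a plan rather than a proof, and the plan as stated would not go through.

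The missing idea is the incoming phase-space region $\calW_{vt,m;\mathrm{in}}$ and the time-reversed Enss argument of Proposition \ref{InProp}: the paper proves $\slim_{t\to\infty}P_\delta(\calW_{vt,m;\mathrm{in}})e^{-itH}=0$ by estimating the \emph{operator norm} of $P_\delta(\calW_{vt,m;\mathrm{in}})(e^{-itH}-e^{-itH_0})$ via Duhamel, using that the \emph{backward} free flow of the incoming region stays at distance at least $vt-mw-\varepsilon(t+w)$ from $\calA_{\varepsilon(t+w)}^c$ (Proposition \ref{coneOnlyGeo} applied to $\mathfrak{C}_{-w}$) together with Lemma \ref{generalnonstationary}; the aperture hypothesis enters only afterwards, through the covering identity (\ref{inOutGeo}), $\calW_{vt,m;\mathrm{out}}\cup\calW_{vt,m;\mathrm{in}}=\calA_{vt}\times\bbR^d$, which says that for $\gamma\geq\pi/2$ mass deep in a cone that is not outgoing must be incoming. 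Your sketch never isolates this incoming piece. After your splitting, the term you must kill is $\chi_{\calA_{vt}}P_\delta(\calW_{v't,m;\mathrm{out}}^c)\psi_t$, i.e.\ mass far inside a cone with non-outgoing momentum; forward-invariance of the outgoing region says nothing about it, and the forward-in-time Cook/Duhamel comparison you propose is precisely what is unavailable for $\psi\in\calH_{\mathrm{int}}$: such a state may interact with $V$ for all time, so $(e^{-itH}-e^{-itH_0})\psi$ need not be small and the Enss condition gives no integrable bound along its orbit --- this is why the paper runs the comparison backward from time $t$, in operator norm against the incoming projection, rather than forward on the state. Likewise, Lemma \ref{generalnonstationary} bounds $\chi_F e^{-itH_0}P_\delta(E)$, so locating $(\calA_{vt}\times\bbR^d)\cap\calW_{v't,m;\mathrm{out}}^c$ inside $\{|x-tp|\geq\varepsilon t\}$ does not by itself control $P_\delta(\cdot)\psi_t$ for the interacting evolution. (Two smaller points: the Lipschitz estimate $d(p,\calC_i^c)>v-\varepsilon$ requires no aperture assumption, so the first place you invoke $\gamma_i\geq\pi/2$ is not where it is actually used; and the fallback of reducing to the half-space case of \cite{black2022scattering} covers only a single cone of aperture exactly $\pi$, not apertures greater than $\pi$ or unions of cones.) You acknowledge that making the propagation estimate clean is where most of the work lies; that work is exactly Proposition \ref{InProp} plus (\ref{inOutGeo}), and without it the proof is incomplete.
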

\section{Existence of the Wave Operator $\Omega$}\label{Existence}
First, we record a geometric fact:
\begin{proposition}\label{AGeo}
We may write $A_r(\mathcal{C}_{x,\vec{v},\gamma})=\mathcal{C}_{x,\vec{v},\gamma}+\frac{r}{\sin(\gamma)}\vec{v}$.
\end{proposition}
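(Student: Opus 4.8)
The plan is to translate the cone so that its vertex is at the origin, compute the distance from a point to the complement of such a cone by elementary trigonometry, and then rearrange the resulting inequality into the defining inequality of a translated cone. Reduction to the origin is immediate: directly from the definition $\mathcal{C}_{x,\vec v,\gamma}=\mathcal{C}_{0,\vec v,\gamma}+x$, and $d(y,\mathcal{C}_{x,\vec v,\gamma}^c)=d(y-x,\mathcal{C}_{0,\vec v,\gamma}^c)$, so $A_r(\mathcal{C}_{x,\vec v,\gamma})=A_r(\mathcal{C}_{0,\vec v,\gamma})+x$; hence it suffices to prove $A_r(\mathcal{C})=\mathcal{C}+\tfrac{r}{\sin\gamma}\vec v$ for $\mathcal{C}:=\mathcal{C}_{0,\vec v,\gamma}$.

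The heart of the argument is the formula
\begin{align*}
  d(y,\mathcal{C}^c)=\sin\gamma\,\langle y,\vec v\rangle-\cos\gamma\,\|y_\perp\|,\qquad y_\perp:=y-\langle y,\vec v\rangle\vec v,
\end{align*}
valid for $y\in\mathcal{C}$. To establish it, first note that for $y$ in the open set $\mathcal{C}$ one has $d(y,\mathcal{C}^c)=d(y,\partial\mathcal{C})$, since any segment from $y$ to a point of $\mathcal{C}^c$ crosses $\partial\mathcal{C}$ and the distance to a set equals the distance to its closure. Since $\mathcal{C}$ and $\partial\mathcal{C}$ are invariant under the rotations fixing $\vec v$, we may rotate $y$ into a fixed $2$-plane $\Pi$ through $\vec v$, and the nearest point of $\partial\mathcal{C}$ to such a $y$ can be taken in $\Pi$; there, $\partial\mathcal{C}$ is a pair of rays from $0$ at angle $\gamma$ to $\vec v$, and with $\phi:=\angle(y,\vec v)\in[0,\gamma)$ the nearest point on the nearer ray is the foot of the perpendicular, at distance $\|y\|\sin(\gamma-\phi)$. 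Expanding $\sin(\gamma-\phi)=\sin\gamma\cos\phi-\cos\gamma\sin\phi$ and using $\|y\|\cos\phi=\langle y,\vec v\rangle$, $\|y\|\sin\phi=\|y_\perp\|$ yields the formula; this is precisely the picture in Figure \ref{fig:cones}. (One also checks the right-hand side is nonpositive for $y\notin\mathcal{C}$, so in all cases $A_r(\mathcal{C})\subseteq\mathcal{C}$.)

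It then remains to match this with the translated cone. Set $x_0=\tfrac{r}{\sin\gamma}\vec v$, $a=\langle y-x_0,\vec v\rangle=\langle y,\vec v\rangle-\tfrac{r}{\sin\gamma}$, and $b=\|y_\perp\|\ge 0$; since $x_0$ is parallel to $\vec v$ we have $\|y-x_0\|^2=a^2+b^2$, and therefore $y\in\mathcal{C}_{x_0,\vec v,\gamma}\iff a>\cos\gamma\sqrt{a^2+b^2}$. The elementary equivalence
\begin{align*}
  a>\cos\gamma\sqrt{a^2+b^2}\iff a\sin\gamma>\cos\gamma\,b,
\end{align*}
which one proves in a line by writing $(a,b)=\rho(\cos\alpha,\sin\alpha)$ with $\rho\ge0$, $\alpha\in[0,\pi]$ (both sides then say $\alpha<\gamma$), turns the condition into $\sin\gamma\,\langle y,\vec v\rangle-r>\cos\gamma\,b$, that is $d(y,\mathcal{C}^c)>r$, that is $y\in A_r(\mathcal{C})$. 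Hence $A_r(\mathcal{C})=\mathcal{C}+\tfrac{r}{\sin\gamma}\vec v$.

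The step I expect to require the most care is the distance formula — in particular the assertion that the nearest point of $\partial\mathcal{C}$ to $y$ is the foot of a perpendicular (giving $\|y\|\sin(\gamma-\phi)$) rather than the apex $0$ of the cone; pinning this down, together with the reduction to $\Pi$, is cleanest when $\mathcal{C}$ is convex, i.e.\ $\gamma\le\tfrac{\pi}{2}$, and for wide apertures it is natural to run the same computation on the convex complementary cone $\mathcal{C}^c=\overline{\mathcal{C}_{0,-\vec v,\pi-\gamma}}$. The algebraic rearrangement in the last step is then routine.
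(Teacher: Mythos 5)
Your argument is correct, and essentially a careful version of the paper's one-line proof, in the convex case $\gamma\le\tfrac{\pi}{2}$: there the reduction to the origin, the distance formula $d(y,\mathcal{C}^c)=\sin\gamma\,\langle y,\vec v\rangle-\cos\gamma\,\|y_\perp\|$, and the algebraic equivalence $a>\cos\gamma\sqrt{a^2+b^2}\iff a\sin\gamma>b\cos\gamma$ (which in fact holds for every $\gamma\in(0,\pi)$, $b\ge 0$) fit together exactly as you say, and this is just a rigorous rendering of ``clear from Figure \ref{fig:cones}.''

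The genuine gap is the wide-aperture case $\gamma>\tfrac{\pi}{2}$, which you flag at the end but do not resolve, and which cannot be resolved by ``running the same computation on the convex complementary cone'': there the distance formula, and with it the asserted set equality, actually fail. If $\phi=\angle(y,\vec v)<\gamma-\tfrac{\pi}{2}$, the foot of the perpendicular to the nearer boundary ray has negative parameter, so the nearest point of $\partial\mathcal{C}$ is the vertex and $d(y,\mathcal{C}^c)=\|y-x\|$, which is strictly larger than $\sin\gamma\,\langle y-x,\vec v\rangle-\cos\gamma\,\|(y-x)_\perp\|$. Concretely, in $d=2$ with $x=0$, $\vec v=(0,1)$, $\gamma=\tfrac{3\pi}{4}$, the complement $\mathcal{C}^c=\{(a,b)\mid b\le-|a|\}$, and the axis point $y=(0,h)$ satisfies $d(y,\mathcal{C}^c)=h$, so $y\in A_r(\mathcal{C})$ whenever $h>r$; but $y\in\mathcal{C}+\tfrac{r}{\sin\gamma}\vec v$ only when $h>r\sqrt{2}$. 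Thus for $r<h\le r\sqrt{2}$ the point lies in $A_r(\mathcal{C})$ but not in the translated cone, and only the inclusion $\mathcal{C}+\tfrac{r}{\sin\gamma}\vec v\subset A_r(\mathcal{C})$ survives; equality holds precisely for $\gamma\le\tfrac{\pi}{2}$. (This is really a defect of the statement as written for all $\gamma\in(0,\pi)$ — the paper's figure is drawn for a narrow cone — but since your proof claims the equality in general, the wide case is a step that would fail, and your proposed fix via the convex cone $\mathcal{C}_{0,-\vec v,\pi-\gamma}$ cannot repair a false identity; the honest repair is to restrict the equality to $\gamma\le\tfrac{\pi}{2}$, or to replace it for $\gamma>\tfrac{\pi}{2}$ by the one-sided inclusion, noting that the discrepancy is confined to an $O(r)$ neighborhood of the vertex.)
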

\begin{proof}
By projecting to any plane containing $\vec{v}$, the claim is clear from Figure \ref{fig:cones}.
\end{proof}
For $r\leq 0$ we will use the above as the definition of $A_r(\calC)$.
We now use the following direct application of the Corollary to Theorem XI.14 from \cite{RSVol3}:
\begin{lemma}\label{simpleNonstationary}
	Let $u\in \calS$ and let $\calG$ be an open set such that $\supp \hat{u}\Subset \calG$ . Then for any $\ell\in \bbN $, there is a constant $ C>0$ depending on $ \ell,u,$ and $ \mathcal{G} $ so that
	\begin{align*}
		|e^{-itH_0}u(x)|\leq C(1+\|x\|+|t|)^{-\ell}
	\end{align*}
	for all pairs $ (x,t) $ such that $ \frac{x}{t}\not\in \mathcal{G} $.
\end{lemma}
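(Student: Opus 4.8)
The plan is to represent $e^{-itH_0}u$ as an oscillatory integral and run a non-stationary phase argument, which is precisely the content of the Corollary to Theorem XI.14 in \cite{RSVol3}. With our Fourier conventions and $H_0=-\tfrac12\Delta$, for fixed $(x,t)$ with $t\neq 0$ we have
\begin{align*}
    e^{-itH_0}u(x)=(2\pi)^{-d/2}\int_{\bbR^d}\hat u(\xi)\,e^{i\phi(\xi)}\,d\xi,\qquad \phi(\xi):=x\cdot\xi-\tfrac{t}{2}\|\xi\|^2,
\end{align*}
so that $\nabla_\xi\phi(\xi)=x-t\xi$, which vanishes precisely at $\xi=x/t$. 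The hypothesis $\supp\hat u\Subset\calG$ gives a compact set $K:=\supp\hat u$ with $R:=\sup_{\xi\in K}\|\xi\|<\infty$ and $\delta:=d(K,\calG^c)>0$, and the point of assuming $x/t\notin\calG$ is that it keeps this would-be critical point away from $K$.

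The first substantive step is to make this quantitative: there is $c=c(\delta,R)>0$ so that $x/t\notin\calG$ implies
\begin{align*}
    \|\nabla_\xi\phi(\xi)\|=\|x-t\xi\|\geq c\,(\|x\|+|t|)\qquad\text{for all }\xi\in K.
\end{align*}
Indeed, $x/t\notin\calG$ forces $\|x/t-\xi\|\geq\delta$, hence $\|x-t\xi\|\geq\delta|t|$; one also always has $\|x-t\xi\|\geq\|x\|-R|t|$. Splitting into the cases $\|x\|\geq 2R|t|$ and $\|x\|<2R|t|$ and combining these two lower bounds yields the claim.

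The second step is the usual integration-by-parts bookkeeping. With the first-order operator $L:=-i\|\nabla_\xi\phi\|^{-2}\,\nabla_\xi\phi\cdot\nabla_\xi$, which satisfies $Le^{i\phi}=e^{i\phi}$, we integrate by parts $\ell$ times, moving the transpose $(L^{\mathrm{t}})^\ell$ onto $\hat u$. Each application produces a factor of the amplitude $a:=\|\nabla_\xi\phi\|^{-2}\nabla_\xi\phi$ or one of its $\xi$-derivatives acting on $\hat u$ and its derivatives (all bounded, as $\hat u\in\calS$ is supported in $K$). Since $\partial_\xi\nabla_\xi\phi=-t\,\mathrm{Id}$ and $|t|\leq c^{-1}\|\nabla_\xi\phi\|$ on $K$ by the previous step, every $\xi$-derivative of $a$ on $K$ is bounded by a $c$-dependent constant times $\|\nabla_\xi\phi\|^{-1}\leq c^{-1}(\|x\|+|t|)^{-1}$. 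This gives $|e^{-itH_0}u(x)|\leq C_\ell(\|x\|+|t|)^{-\ell}$ whenever $x/t\notin\calG$ and $\|x\|+|t|\geq 1$; for $\|x\|+|t|<1$ the trivial bound $|e^{-itH_0}u(x)|\leq(2\pi)^{-d/2}\|\hat u\|_{L^1}$ suffices, and adjusting the constant converts these into the stated bound with $(1+\|x\|+|t|)^{-\ell}$.

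The step I expect to be the main obstacle is the geometric estimate of the second paragraph: upgrading the mere absence of a critical point, which only yields $\|\nabla_\xi\phi\|\geq\delta|t|$, to a lower bound of order $\|x\|+|t|$, which is what is needed for joint decay in $x$ and $t$ — here the boundedness of $\supp\hat u$ is essential. By contrast, once that estimate is in hand the integration-by-parts argument is routine, and the constants are automatically uniform in $(x,t)$ over the allowed region since they depend only on $c$, $R$, and finitely many Schwartz seminorms of $u$.
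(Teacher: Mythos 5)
Your argument is correct: the paper itself does not prove this lemma but simply quotes it as a direct application of the Corollary to Theorem XI.14 in \cite{RSVol3}, and your write-up is exactly the standard non-stationary phase proof of that cited result (oscillatory integral with phase $x\cdot\xi-\tfrac{t}{2}\|\xi\|^2$, the lower bound $\|x-t\xi\|\gtrsim \|x\|+|t|$ on $\supp\hat u$ when $x/t\notin\calG$, then $\ell$-fold integration by parts). So you have, in effect, supplied the omitted proof by the same route the paper relies on, with the key quantitative step (upgrading $\delta|t|$ to $c(\|x\|+|t|)$ using the compactness of $\supp\hat u$) handled correctly.
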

 We let
 \begin{align*}
   &\calD_k(\mathcal{C}_{x,\vec{v},\gamma})=\{\psi \in \calS\mid \supp \hat{\psi} \Subset A_k(\calC_{\vec{v},\gamma})\}
 \end{align*}
Note that the set $\calD_k(\mathcal{C}_{x,\vec{v},\gamma})$ is independent of the vertex $x$ and that $\bigcup_{k>0}\calD_k(\mathcal{C}_{x,\vec{v},\gamma})$ is dense in $\calD(\mathcal{C}_{x,\vec{v},\gamma})$.\par
We use this to prove the following proposition which will be useful here and in the sequel.
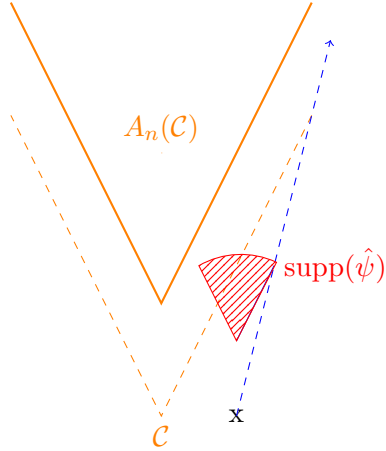
\begin{figure}[ht]
\centering
\begin{tikzpicture}
 \pgfmathsetmacro {\coneendx }{2}
 \pgfmathsetmacro {\coneendy }{4}
 \pgfmathsetmacro {\radius }{sqrt(\coneendy*\coneendy+\coneendx*\coneendx)}
 \pgfmathsetmacro {\scale }{4}
 \pgfmathsetmacro {\pnty }{0}
 \pgfmathsetmacro {\pntx }{1}
 \pgfmathsetmacro {\An }{1.5}
 \pgfmathsetmacro {\Ak }{1}
 \pgfmathsetmacro {\time }{2.5}
\draw[orange,dashed] (\coneendx,\coneendy) -- (0,0) -- (-\coneendx,\coneendy); 
\filldraw[orange] (0,0) circle (0pt) node[anchor=north] {$\mathcal{C}$};

\draw[orange,thick] (\coneendx,\coneendy+\An) -- (0,0+\An) -- (-\coneendx,\coneendy+\An); 

\filldraw[orange] (0,2+\An) circle (0pt) node[anchor=south] {$A_n(\mathcal{C})$};

\draw (\pntx,\pnty) node {x};
\draw[blue,dashed,->] (\pntx,\pnty)-- (\pntx+\time*\coneendx/\scale,\pnty+\time*\coneendy/\scale+\time*\Ak);

\draw[blue] (\pntx,\pnty+\Ak) -- (\pntx+\coneendx/\scale,\pnty+\coneendy/\scale+\Ak);
\draw[red,pattern=north east lines, pattern color=red] (\pntx,\pnty+\Ak) -- (\pntx-\coneendx/\scale,\pnty+\coneendy/\scale+\Ak) arc(120:65:\radius/\scale) -- cycle;
\filldraw[red] (\pntx+\coneendx/\scale,\pnty+\coneendy/\scale+\Ak) circle (0pt) node[anchor=west] {$\supp (\hat{\psi})$};

\end{tikzpicture}
\caption{Illustration of the momentum of $\hat{\psi}$, in red,  with respect to $A_n(\mathcal{C})$, in orange. The dashed blue line corresponds to a classic trajectory from $x$ with momentum at the edge of the red cone.}\label{fig:Existence}
\end{figure}
\begin{proposition}\label{classicnonstationary}
Let $\calC$ be any cone and $k>0$. Then there exists $c(\calC)>0$ such that for all $\psi \in \calD_k(\calC)$ and any $\ell>0$ there exists $C(\psi,\ell)>0$ such that
\begin{align*}
     \|\chi_{A_{n}^c(\mathcal{C})}e^{-itH_0}\psi\|\leq C(1+|t|)^{-\ell}
\end{align*}
for any pair of $(n,t) \in \bbR^2$  satisfying
\begin{align}\label{classicnonstationaryEqn}
    c<kt-n
\end{align}
\end{proposition}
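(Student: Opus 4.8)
The plan is to reduce the estimate to Lemma~\ref{simpleNonstationary} by identifying the geometric cone $\calG$ that controls the non-stationary phase. Write $\calC=\calC_{x,\vec{v},\gamma}$ (the vertex plays no role since $\calD_k(\calC)$ and $A_n(\calC)$ are translates of the vertex-at-origin objects in a compatible way), and fix $\psi\in\calD_k(\calC)$, so $\supp\hat\psi\Subset A_k(\calC_{\vec v,\gamma})$. By Proposition~\ref{AGeo}, $A_k(\calC_{\vec v,\gamma})=\calC_{\vec v,\gamma}+\tfrac{k}{\sin\gamma}\vec v$, and since $\calG:=A_k(\calC_{\vec v,\gamma})$ is open, Lemma~\ref{simpleNonstationary} applies: there is $C>0$ depending on $\ell,\psi,\calG$ with
\begin{align*}
    |e^{-itH_0}\psi(y)|\leq C(1+\|y\|+|t|)^{-\ell}
\end{align*}
whenever $\tfrac{y}{t}\notin \calG$.

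First I would establish the purely set-theoretic claim: there is a constant $c=c(\calC)$ such that if $c<kt-n$ then every $y\in A_n^c(\calC)$ satisfies $\tfrac{y}{t}\notin \calG=A_k(\calC_{\vec v,\gamma})$ (for $t>0$; the regime of relevant $t$ is forced to be positive and large once $kt-n>c$ with $n$ bounded below, and for $n$ large negative the set $A_n^c(\calC)$ shrinks, which only helps). Equivalently, $y\in tA_k(\calC_{\vec v,\gamma})=A_{kt}(\calC_{\vec v,\gamma})$ (using positive homogeneity of the vertex-at-origin cone and Proposition~\ref{AGeo} again: $tA_k(\calC_{\vec v,\gamma})=t\calC_{\vec v,\gamma}+\tfrac{kt}{\sin\gamma}\vec v=\calC_{\vec v,\gamma}+\tfrac{kt}{\sin\gamma}\vec v=A_{kt}(\calC_{\vec v,\gamma})$) must imply $y\in A_n(\calC)$. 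Translating the vertex, this is the inclusion $A_{kt}(\calC_{x,\vec v,\gamma})\subset A_n(\calC_{x,\vec v,\gamma})$ up to a bounded translation error coming from moving the apex from $x$ to $0$ and back. Since both sets are nested cones obtained by pushing $\calC_{x,\vec v,\gamma}$ distance $\tfrac{kt}{\sin\gamma}$ resp.\ $\tfrac{n}{\sin\gamma}$ along $\vec v$ (Proposition~\ref{AGeo}), and since $d(\cdot,\calC^c)$ decreases by at most the displacement, one gets $A_{kt}(\calC)\subset A_{n}(\calC)$ as soon as $kt-n$ exceeds the fixed geometric constant $c(\calC)$ accounting for the apex shift $\|x\|$ (scaled by $1/\sin\gamma$); this is where $c(\calC)$ is defined. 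This is the main obstacle — it is elementary plane geometry but one must be careful about the apex-translation bookkeeping and about the sign conventions when $n\le 0$.

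With that inclusion in hand, for $(n,t)$ with $c<kt-n$ and any $y\in A_n^c(\calC)$ we have $\tfrac{y}{t}\notin\calG$, so the pointwise bound above gives $|e^{-itH_0}\psi(y)|\leq C(1+\|y\|+|t|)^{-\ell}$ on $A_n^c(\calC)$. Then I would bound the $L^2$ norm by taking $\ell$ large: choosing the decay exponent in Lemma~\ref{simpleNonstationary} to be $\ell+d$ (relabeling), we estimate
\begin{align*}
    \|\chi_{A_n^c(\calC)}e^{-itH_0}\psi\|^2\leq \int_{\bbR^d}C^2(1+\|y\|+|t|)^{-2(\ell+d)}\,dy\leq C'(1+|t|)^{-2\ell},
\end{align*}
since $\int_{\bbR^d}(1+\|y\|+|t|)^{-2(\ell+d)}dy\lesssim (1+|t|)^{-2\ell-d}\le (1+|t|)^{-2\ell}$ by scaling $y\mapsto(1+|t|)y$. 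Taking square roots yields $\|\chi_{A_n^c(\calC)}e^{-itH_0}\psi\|\leq C''(1+|t|)^{-\ell}$ with $C''=C''(\psi,\ell)$, which is the claim. (For the finitely many bounded $t$, or $t\le$ some threshold, compatible with $c<kt-n$, the bound is trivial after enlarging the constant, so no separate argument is needed.)
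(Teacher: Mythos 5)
Your proposal is correct and follows essentially the same route as the paper: apply Lemma \ref{simpleNonstationary} with $\calG=A_k(\calC_{\vec{v},\gamma})$, reduce the matter to the inclusion $tA_k(\calC_{\vec{v},\gamma})\subset A_n(\calC_{x,\vec{v},\gamma})$ whenever $kt-n$ exceeds a constant depending on the cone, and then integrate the pointwise decay with $\ell$ chosen large. The only difference is cosmetic: where you track the apex shift through a ``distance decreases by at most the displacement'' bookkeeping (and flag the $n\le 0$ convention), the paper uses Proposition \ref{AGeo} and additivity of the cone to reduce the inclusion to the single condition that $\frac{kt-n}{\sin(\gamma)}\vec{v}-x$ lies in $\calC_{\vec{v},\gamma}$, which handles all signs of $n$ at once.
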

\begin{proof}
	Write $\calC=\calC_{x,\vec{v},\gamma}$. In order to apply Lemma \ref{simpleNonstationary}, we take $\calG=A_k(\calC_{\vec{v},\gamma})$. Thus, we must show that so long as $kt-n$ is sufficiently large, for all $ y\in A_n^c(\mathcal{C}_{x,\vec{v},\gamma}) $,  we have that $ \frac{y}{t} \in A_k^c(\mathcal{C}_{\vec{v},\gamma}) $ or equivalently
    \begin{align*}
    	tA_k(\mathcal{C}_{\vec{v},\gamma})\subset A_n(\mathcal{C}_{x,\vec{v},\gamma})
    \end{align*}
    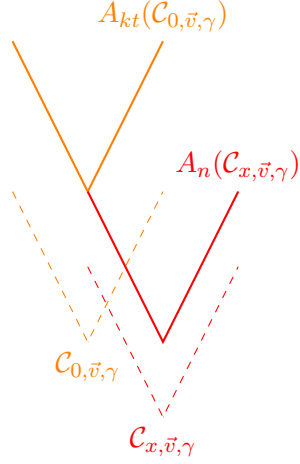
\begin{figure}[ht]
    \centering
    \begin{tikzpicture}[scale=1]
     \pgfmathsetmacro {\pntx }{1}
     \pgfmathsetmacro {\An }{1}
     \pgfmathsetmacro {\Ak }{2}
     \pgfmathsetmacro {\coneendx }{1}
     \pgfmathsetmacro {\coneendy }{2}
    \draw[orange,dashed] (-\coneendx,\coneendy) -- (0,0) -- (\coneendx,\coneendy); 
    \filldraw[orange] (0,0) circle (0pt) node[anchor=north] {$\mathcal{C}_{0,\vec{v},\gamma}$};
    
    \draw[orange,thick] (-\coneendx,\coneendy+\Ak) -- (0,0+\Ak) -- (\coneendx,\coneendy+\Ak); 
    \filldraw[orange] (\coneendx,\coneendy+\Ak) circle (0pt) node[anchor=south] {$A_{kt}(\mathcal{C}_{0,\vec{v},\gamma})$};
    
    \draw[red,thick] (\pntx-\coneendx,\coneendy) -- (\pntx+0,0) -- (\pntx+\coneendx,\coneendy); 
    \filldraw[red]  (\pntx+\coneendx,\coneendy) circle (0pt) node[anchor=south] {$A_n(\mathcal{C}_{x,\vec{v},\gamma})$};
    
    \draw[red,dashed] (\pntx-\coneendx,\coneendy-\An) -- (\pntx+0,0-\An) -- (\pntx+\coneendx,\coneendy-\An); 
    \filldraw[red] (\pntx+0,0-\An) circle (0pt) node[anchor=north] {$\mathcal{C}_{x,\vec{v},\gamma}$};
    \end{tikzpicture}
    \caption{Illustration of the inclusion (\ref{ConeInclsion})}\label{fig:ExInc}
    \end{figure}
    Using Proposition \ref{AGeo} and the fact that $ \mathcal{C}_{\vec{v},\gamma} $ is invariant under scaling, we see that we must show that
    \begin{align}\label{ConeInclsion}
    	\mathcal{C}_{\vec{v},\gamma}+\frac{kt}{\sin(\gamma)}\vec{v}&\subset x+\mathcal{C}_{\vec{v},\gamma}+\frac{n}{\sin(\gamma)}\vec{v}\iff \mathcal{C}_{\vec{v},\gamma}+\frac{kt-n}{\sin(\gamma)}\vec{v}-x\subset \mathcal{C}_{\vec{v},\gamma}
    \end{align}
    In words, we must show that the cone $ \mathcal{C}_{\vec{v},\gamma} $ shifted by the vector $ \frac{kt-n}{\sin(\gamma)}\vec{v}-x $ is contained in $ \mathcal{C}_{\vec{v},\gamma} $, which will be the case as long as this vector lies in the cone. But this is clearly true if $ kt-n $ is large enough with respect to fixed $x$ i.e if (\ref{classicnonstationaryEqn}) holds.\par
    Therefore, we may apply Lemma \ref{simpleNonstationary}, to see that for any $\ell>0$
    \begin{align*}
    	|e^{-itH_0}\psi(y)|\leq C(1+\|y\|+|t|)^{-\ell}
    \end{align*}
    for all $y\in A_n^c(\mathcal{C}_{x,\vec{v},\gamma})$ where $C$ is independent of $y$ and $t$. 
    Choosing $\ell$ large enough, we get that
    \begin{align} \label{ExistEq}
    \begin{split}
    	\|\chi_{A_n^c(\mathcal{C}_{x,\vec{v},\gamma})}e^{-itH_0}\psi\|^2&\leq C  \int\limits_{ A_n^c(\mathcal{C}_{x,\vec{v},\gamma})}(1+\|y\|+|t|)^{-\ell}\,dx<C(1+|t|)^{-\ell+d}
    	\end{split} 
    \end{align}
    as needed.
\end{proof}

This is already enough to prove the existence of the wave operators:
\begin{proof}[Proof of part \ref{existenceTheorem} of Theorem \ref{thms}]
 By Cook's method (see \cite{RSVol3} 
 Theorem XI.4), it suffices to show that for all $\psi$ in some dense subset of $\calD=\overline{\bigcup\limits_{i\in \calI}\calD(\mathcal{C}_{i})}$ 
\begin{align*}
	\int\limits_0^\infty \|Ve^{-itH_0}\psi\|\,dt<\infty
\end{align*}
We will take as our dense subset $\bigcup\limits_{i\in\calI}\bigcup\limits_{k>0}\calD_k(\mathcal{C}_{i})$.\par
For any $i\in \calI$ and any $k>0$, write $\calC_i=\calC_{x,\vec{v},\gamma}$, let $0<\varepsilon<k$, and let $\psi\in \calD_k(\mathcal{C}_{x,\vec{v},\gamma})$. We can then write
\begin{align*}
    \|Ve^{-itH_0}\psi\|&\leq \|V\chi_{\calA_{\varepsilon t}}e^{-itH_0}\psi\|+\|V\chi_{\calA_{\varepsilon t}^c}e^{-itH_0}\psi\|\\
    &\leq \|V\chi_{\calA_{\varepsilon t}}\|_{\textrm{op}}\|\psi\|+M\|\chi_{A_{\varepsilon t}^c (\calC_i)}e^{-itH_0}\psi\|
\end{align*}
as $\calA^c_{\varepsilon t}\subset A_{\varepsilon t}^c (\calC_i)$. The first term is $L^1([0,\infty),dt)$ by the assumption (\ref{EnssCond}) whereas we will estimate the second term via Proposition \ref{classicnonstationary}. For this, let $c_0=c_0(\mathcal{C}_{x,\vec{v},\gamma})$ be the constant from Proposition \ref{classicnonstationary} and let $T_0=\frac{c_0}{k-\varepsilon}$. By Proposition \ref{classicnonstationary} with $n=\varepsilon t$, we see that for any $\ell>0$ and $t>T_0$ there is some $C>0$ so that
\begin{align} \label{ExistEq2}
\begin{split}
	\|\chi_{A_{\varepsilon t}^c(\mathcal{C}_{i})}e^{-itH_0}\psi\|\leq C (1+t)^{-\ell}
	\end{split} 
\end{align}
uniformly in $t$.
It follows immediately that $\|Ve^{-itH_0}\psi\|$ is integrable on $[0,\infty)$ as needed.\par
Furthermore, since $\sigma(H_0\vert_{\calD})=\sigma(H_0)$, the intertwining property of $\Omega$ implies that
${\sigma(H_0)\subset \sigma_\textrm{ac}(H)}$ as claimed.
\end{proof}
\section{Description of $\Ran(\Omega)$ and $\Ran(\Omega)^\perp$}
In this section we give descriptions of $\Ran(\Omega)$ and its orthogonal complement in terms of the dynamics of $H$. In particular, we show that states in these subspaces may be characterized by their location in phase space as $t\rightarrow\infty$. Here, as before, $\Ran(\Omega)$ indicates the range of $\Omega$ on its natural domain $\calD$.\par

\subsection{Characterizing $\Ran(\Omega)$}
\begin{figure}[ht]
\centering
\begin{tikzpicture}
 \pgfmathsetmacro {\coneendx }{3}
 \pgfmathsetmacro {\coneendy }{4}
 \pgfmathsetmacro {\smallconeendx }{sqrt(3/7)}
 \pgfmathsetmacro {\smallconeendy }{4/3*\smallconeendx }
 \pgfmathsetmacro {\pnty }{4}
 \pgfmathsetmacro {\pntx }{0.2}
 \pgfmathsetmacro {\An }{2}
 \pgfmathsetmacro {\Ak }{0.2}
 \pgfmathsetmacro {\deg }{37}
\draw[orange,dashed] (\coneendx,\coneendy) -- (0,0) -- (-\coneendx,\coneendy);     
\filldraw[orange]  (0,0) circle (0pt) node[anchor=north] {$\calC_i$};

\draw[red,dashed,thick] (0,0) -- (0,1);
\draw[red] (0,1) arc (90:60:1);
\filldraw[red] (-0.1,0.7) circle (0pt) node[anchor=west] {$\gamma$};

\draw[black,thick] (\coneendx,\coneendy+\An) -- (0,0+\An) -- (-\coneendx,\coneendy+\An);      
\filldraw[black]  (0,0+\An) circle (0pt) node[anchor=north] {$A_n(\calC_i)$};

\draw (\pntx,\pnty) node {x};
\draw[red,pattern=north east lines, pattern color=red] (\pntx,\pnty) -- (\pntx-\smallconeendx,\pnty+\smallconeendy) arc(90+\deg:90-\deg:1) -- cycle;
\filldraw[red] (\pntx+\smallconeendx,\pnty+\smallconeendy+\Ak) circle (0pt) node[anchor=west] {$W_{n;\textrm{out}}$};

\draw[blue,pattern=north east lines, pattern color=blue] (\pntx,\pnty+\Ak) -- (\pntx-\smallconeendx,\pnty+\smallconeendy+\Ak) arc(90+\deg:90-\deg:1) -- cycle;
\filldraw[blue] (\pntx-\smallconeendx,\pnty+\smallconeendy+\Ak) circle (0pt) node[anchor=east] {$W_{n,m;\textrm{out}}$};

\end{tikzpicture}
\caption{Illustration of the phase space sets $W_{n;\textrm{out}}(\calC_i)$ and $W_{n,m;\textrm{out}}(\calC_i)$: space coordinates are inside the black cone while momentum coordinates point inside the red/blue cone, respectively. }\label{fig:momentumsets}
\end{figure}
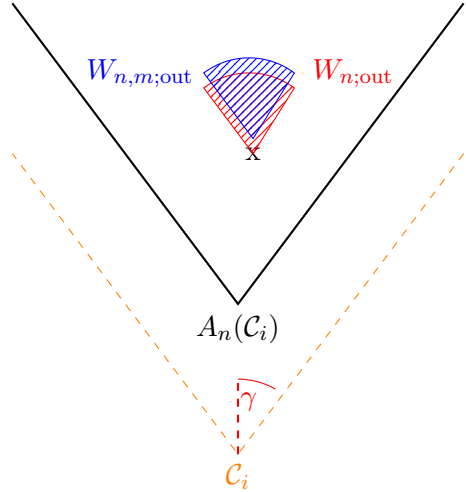

We again record some purely geometric facts:
\begin{proposition}\label{coneOnlyGeo}
\leavevmode
\begin{enumerate}
	\item\label{coneOnlyGeoClaim1} Let $ \mathcal{C} $ be any cone. For any $n,t,r\geq 0$ and $m\in\bbR$		\begin{align*}
		d(\mathfrak{C}_t(W_{n,m;\mathrm{out}}(\mathcal{C})),A_r^c(\calC))\geq n+mt-r
		\end{align*}
	\item\label{coneOnlyGeoClaim2} For any $n,t,r\geq 0$ and $m\in\bbR $
		\begin{align*}
		d(\mathfrak{C}_t(W_{n,m; \mathrm{out}}(\mathcal{C})),\calA_{r}^c)\geq n+mt-r
		\end{align*}
\end{enumerate}
\end{proposition}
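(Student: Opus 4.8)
The plan is to reduce both parts to a single inclusion for the free flow, namely that points in the outgoing region are carried deeper into the cone:
\begin{align}\label{flowInclusion}
	\mathfrak{C}_t\bigl(W_{n,m;\mathrm{out}}(\mathcal{C})\bigr)\subseteq A_{n+mt}(\mathcal{C}).
\end{align}
First I would dispose of part \ref{coneOnlyGeoClaim2}: taking $\mathcal{C}=\mathcal{C}_i$ to be the cone from the collection for which $W_{n,m;\mathrm{out}}(\mathcal{C})$ is the set in question, we have $\calA_r^c=\bigcap_{j}A_r^c(\mathcal{C}_j)\subseteq A_r^c(\mathcal{C})$, so $d(\mathfrak{C}_t(W_{n,m;\mathrm{out}}(\mathcal{C})),\calA_r^c)\geq d(\mathfrak{C}_t(W_{n,m;\mathrm{out}}(\mathcal{C})),A_r^c(\mathcal{C}))$ and part \ref{coneOnlyGeoClaim2} follows from part \ref{coneOnlyGeoClaim1}. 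For part \ref{coneOnlyGeoClaim1}, note that if $n+mt-r\leq 0$ there is nothing to prove, so assume $n+mt-r>0$; in particular $n+mt>0$. Granting \eqref{flowInclusion}, any $q\in\mathfrak{C}_t(W_{n,m;\mathrm{out}}(\mathcal{C}))$ lies in $A_{n+mt}(\mathcal{C})$, which for the positive index $n+mt$ is by definition $\{y\mid d(y,\mathcal{C}^c)>n+mt\}$, so $d(q,\mathcal{C}^c)>n+mt$. Since every $z\in A_r^c(\mathcal{C})$ satisfies $d(z,\mathcal{C}^c)\leq r$ by definition, the triangle inequality gives $\|q-z\|\geq d(q,\mathcal{C}^c)-d(z,\mathcal{C}^c)>(n+mt)-r$, and taking the infimum over $q$ and $z$ yields part \ref{coneOnlyGeoClaim1}.

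It then remains to establish \eqref{flowInclusion}. Write $\mathcal{C}=\mathcal{C}_{x_0,\vec{v},\gamma}=x_0+\mathcal{C}_{\vec{v},\gamma}$ and let $q=x+tp$ with $x\in A_n(\mathcal{C})$ and $p\in A_m(\mathcal{C}_{\vec{v},\gamma})$. By Proposition \ref{AGeo}, $x=x_0+c_1+\tfrac{n}{\sin\gamma}\vec{v}$ and $p=c_2+\tfrac{m}{\sin\gamma}\vec{v}$ for some $c_1,c_2\in\mathcal{C}_{\vec{v},\gamma}$, hence
\begin{align*}
	q=x_0+(c_1+tc_2)+\frac{n+mt}{\sin\gamma}\vec{v}.
\end{align*}
Since $\mathcal{C}_{\vec{v},\gamma}$ is a convex cone and $t\geq0$, we have $c_1+tc_2\in\mathcal{C}_{\vec{v},\gamma}$, so Proposition \ref{AGeo} again (together with its sign convention for the index) gives $q\in x_0+\mathcal{C}_{\vec{v},\gamma}+\tfrac{n+mt}{\sin\gamma}\vec{v}=A_{n+mt}(\mathcal{C})$, which is \eqref{flowInclusion}.

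The substantive step is \eqref{flowInclusion}, and the single geometric input it rests on is that $\mathcal{C}_{\vec{v},\gamma}$ is closed under addition, i.e.\ convexity of the cone; this is the only point where the cone structure (as opposed to that of an arbitrary open set) is used, and it is the step I expect to require the most care. In particular convexity is automatic when $\mathcal{C}$ is a half-space, which is the case relevant to the applications; for wider cones one would either restrict the aperture or first decompose the cone into half-spaces. The remaining ingredients — the reduction of part \ref{coneOnlyGeoClaim2} to part \ref{coneOnlyGeoClaim1}, the trivial case $n+mt-r\leq 0$, the translation to put the vertex at $x_0$, and the triangle-inequality estimate against $\mathcal{C}^c$ — are all routine.
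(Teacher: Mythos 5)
This is essentially the paper's own proof: there the identity $A_n(\calC_{x,\vec v,\gamma})+tA_m(\calC_{\vec v,\gamma})=A_{n+tm}(\calC_{x,\vec v,\gamma})$ is obtained from Proposition \ref{AGeo}, scaling invariance, and additivity of $\calC_{\vec v,\gamma}$, after which the distance bound and the reduction of part (\ref{coneOnlyGeoClaim2}) via $\calA_r^c\subset A_r^c(\calC_i)$ go exactly as in your write-up, so your argument matches the paper's. Only your closing aside is slightly off: additivity of $\calC_{\vec v,\gamma}$ holds for every $\gamma\le\frac{\pi}{2}$ (not just for half-spaces), while for $\gamma>\frac{\pi}{2}$ it genuinely fails --- a point the paper's proof also passes over by simply asserting that $\calC_{\vec v,\gamma}$ is additive --- so your reliance on convexity is the same as the paper's.
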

\begin{proof}
	The proof of (\ref{coneOnlyGeoClaim1}) follows from the fact that for any $t\geq 0$
\begin{align*}
	A_n(\mathcal{C}_{x,\vec{v},\gamma})+tA_m(\mathcal{C}_{\vec{v},\gamma})= A_{n+tm}(\mathcal{C}_{x,\vec{v},\gamma})
\end{align*}
Indeed, because $t\calC_{\vec{v},\gamma}=\calC_{\vec{v},\gamma}$ and $A_n(\mathcal{C}_{x,\vec{v},\gamma})=x+\calC_{\vec{v},\gamma}+\frac{n}{\sin(\gamma)}\vec{v}$ we see that
\begin{align}
	\begin{split}
	&A_n(\mathcal{C}_{x,\vec{v},\gamma})+tA_m(\mathcal{C}_{\vec{v},\gamma})=(x+\calC_{\vec{v},\gamma}+\frac{n}{\sin(\gamma)}\vec{v})+t(\mathcal{C}_{\vec{v},\gamma}+\frac{m}{\sin(\gamma)}\vec{v})\\
	&=x+\mathcal{C}_{\vec{v},\gamma}+(\frac{n}{\sin(\gamma)}+\frac{tm}{\sin(\gamma)})\vec{v}=A_{n+tm}(\mathcal{C}_{x,\vec{v},\gamma})
	\end{split}
\end{align}
where we have used that $\calC_\gamma$ is additive. The claim now follows from the definition of $A_r(\calC)$.\par
The proof of (\ref{coneOnlyGeoClaim2}) is now immediate because for all $ i \in \calI $, $ \calA_{r}^c\subset A_r^c(\mathcal{C}_i) $.
\end{proof}
With this in hand, we can prove the main technical estimate in the proof of Theorem \ref{thms} \ref{Complete}.
\begin{lemma}\label{MainEstNoMB}
For any $v>0$, $\delta<v$, $0<\varepsilon<\frac{v-\delta}{2}$ and $\ell>0$ there exits $C>0$ so that for any $2t\geq s\geq t>0$
\begin{align}\label{OmegaEst}
      \|(\Omega(s-t)-\Id)P_\delta(\calW_{vt;\mathrm{out}})\|_{\mathrm{op}}\leq  \int\limits_t^{s}\|V\chi_{\calA_{\varepsilon w}}\|_{\mathrm{op}}\, dw+Ct^{-\ell}
\end{align}
For any $v,m>0$ and $\ell>0$ if $\delta<m$, and $0<\varepsilon<\min(m-\delta,v)$ there exits $C>0$ so that for any $s\geq t>0$
\begin{align}\label{OmegaEstWithm}
      \|(\Omega(s-t)-\Id)P_\delta(\calW_{vt,m;\mathrm{out}})\|_{\mathrm{op}}\leq  \int\limits_t^{\infty}\|V\chi_{\calA_{\varepsilon w}}\|_{\mathrm{op}}\, dw+ Ct^{-\ell}
\end{align}
\end{lemma}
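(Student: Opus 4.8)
The plan is to deduce both estimates from Cook's (Duhamel) formula and then split the resulting integrand using the generalized Enss condition together with the microlocal non-stationary phase property of $P_\delta$. Since $\Omega(\tau)=e^{i\tau H}e^{-i\tau H_0}$ and $e^{-i\tau H}$ is unitary, $\|(\Omega(\tau)-\Id)\phi\|=\|e^{-i\tau H_0}\phi-e^{-i\tau H}\phi\|$ for every $\phi$, and differentiating $g(u)=e^{-i(\tau-u)H}e^{-iuH_0}\phi$ gives
\[
e^{-i\tau H_0}\phi-e^{-i\tau H}\phi=i\int_0^\tau e^{-i(\tau-u)H}Ve^{-iuH_0}\phi\,du ,
\]
hence $\|(\Omega(\tau)-\Id)\phi\|\le\int_0^\tau\|Ve^{-iuH_0}\phi\|\,du$. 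Applying this with $\tau=s-t$ and $\phi=P_\delta(\calW_{vt;\mathrm{out}})\chi$ (resp.\ $P_\delta(\calW_{vt,m;\mathrm{out}})\chi$), bounding $\|Ve^{-iuH_0}\phi\|\le\|Ve^{-iuH_0}P_\delta(\cdot)\|_{\mathrm{op}}\|\chi\|$ and taking the supremum over $\|\chi\|\le1$, everything reduces to integrating $\|Ve^{-iuH_0}P_\delta(\cdot)\|_{\mathrm{op}}$ over $u\in[0,s-t]$.

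After the substitution $w=t+u$ (so $w$ runs over $[t,s]$, matching the right-hand sides), I would split $V=V\chi_{\calA_{\varepsilon w}}+V\chi_{\calA_{\varepsilon w}^c}$. The first term contributes at most $\int_t^s\|V\chi_{\calA_{\varepsilon w}}\|_{\mathrm{op}}\,dw$ (using $\|P_\delta(\cdot)\|_{\mathrm{op}}\le1$), which is the first term on the right of (\ref{OmegaEst})/(\ref{OmegaEstWithm}) — for (\ref{OmegaEstWithm}) one simply enlarges the interval to $[t,\infty)$. For the second term, bound $\|V\|\le M:=\|V\|_\infty$ and apply property (4) of $P_\delta$ with $E=\calW_{vt;\mathrm{out}}$ (resp.\ $\calW_{vt,m;\mathrm{out}}$), $F=\calA_{\varepsilon w}^c$, and time $u$. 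The geometric input is Proposition \ref{coneOnlyGeo}(\ref{coneOnlyGeoClaim2}): since $\mathfrak{C}_u(\bigcup_iE_i)=\bigcup_i\mathfrak{C}_u(E_i)$ and $d(\bigcup_iA_i,F)=\inf_id(A_i,F)$, and since $W_{vt;\mathrm{out}}(\calC)=W_{vt,0;\mathrm{out}}(\calC)$, one gets $d(\mathfrak{C}_u(\calW_{vt;\mathrm{out}}),\calA_{\varepsilon w}^c)\ge vt-\varepsilon w$ and $d(\mathfrak{C}_u(\calW_{vt,m;\mathrm{out}}),\calA_{\varepsilon w}^c)\ge vt+mu-\varepsilon w$. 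I then verify the hypothesis $d(\cdot)>\delta u$ of property (4). In the first case $t\le w\le s\le2t$ forces $u\le t$ and $vt-\varepsilon w\ge(v-2\varepsilon)t>\delta t\ge\delta u$, where strictness is precisely $\varepsilon<\frac{v-\delta}{2}$ (with $\delta<v$ keeping the range nonempty). In the second case, writing $vt+mu-\varepsilon w=(v-\varepsilon)t+(m-\varepsilon)u$, the requirement becomes $(v-\varepsilon)t>(\delta-m+\varepsilon)u$, whose right side is $\le0$ because $\varepsilon<m-\delta$, while the left side is positive because $\varepsilon<v$; this holds for all $u\ge0$, which is why no upper bound on $s$ is needed. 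Property (4) then gives $\|\chi_{\calA_{\varepsilon w}^c}e^{-iuH_0}P_\delta(\cdot)\|_{\mathrm{op}}\le C[d(u)]^{-\ell}$, and inserting the distance bounds yields $\le C[(v-2\varepsilon)t]^{-\ell}$ integrated over an interval of length $\le t$ (so $Ct^{-\ell+1}$) in the first case, and $\le C\int_0^\infty[(v-\varepsilon)t+(m-\varepsilon)u]^{-\ell}\,du=C'(\ell)t^{-\ell+1}$ in the second (convergent since $m-\varepsilon>0$). Multiplying by $M$ and renaming $\ell$ produces the stated $Ct^{-\ell}$.

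The difficulties I anticipate are bookkeeping rather than conceptual: one must arrange the substitution $w=t+u$ and the inequalities $\delta<v$, $\varepsilon<\frac{v-\delta}{2}$ (resp.\ $\delta<m$, $\varepsilon<\min(m-\delta,v)$) so that the strict inequality demanded by property (4) holds \emph{uniformly} over the admissible $w$, and one must note that property (4) is applied to the single Borel set $\calW_{vt;\mathrm{out}}$ (resp.\ $\calW_{vt,m;\mathrm{out}}$) with a constant depending only on $\ell$ and $\delta$, the possibly infinite union of cones entering only through the uniform lower bound of Proposition \ref{coneOnlyGeo}. The $m=0$ instance of Proposition \ref{coneOnlyGeo}(\ref{coneOnlyGeoClaim2}) together with the identity $W_{n;\mathrm{out}}(\calC)=W_{n,0;\mathrm{out}}(\calC)$ lets the two cases be run in parallel.
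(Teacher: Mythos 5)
Your proposal is correct and follows essentially the same route as the paper: the Duhamel identity for $\Omega(s-t)-\Id$, the splitting $V=V\chi_{\calA_{\varepsilon(t+u)}}+V\chi_{\calA_{\varepsilon(t+u)}^c}$, the first piece controlled by the generalized Enss integral after the substitution $w=t+u$, and the second piece by the microlocal non-stationary phase property of $P_\delta$ combined with Proposition \ref{coneOnlyGeo} (with $m=0$ for $\calW_{vt;\mathrm{out}}$ and $m>0$ for $\calW_{vt,m;\mathrm{out}}$), with the same parameter checks $d>\delta u$ from $\varepsilon<\frac{v-\delta}{2}$, respectively $\varepsilon<\min(m-\delta,v)$. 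The only differences are cosmetic (integrating a power $-\ell$ versus $-\ell-1$ and renaming $\ell$), so the argument matches the paper's proof.
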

\begin{proof}
We use the identity
\begin{align*}
    \Omega(s-t)-\Id = \int\limits_0^{s-t}e^{iwH}(H-H_0)e^{-iwH_0}\,dw
\end{align*}
by writing, for $0<\varepsilon<\frac{v-\delta }{2}$
\begin{align*}
     &\|\int\limits_0^{s-t} e^{iwH}(H-H_0)e^{-iwH_0} P_\delta(\calW_{vt;\textrm{out}})\,dw\|_{\textrm{op}}\leq  \int\limits_0^{s-t}\|Ve^{-iwH_0}P_\delta(\calW_{vt;\textrm{out}})\|_{\textrm{op}}\, dw\\
    &\leq \int\limits_0^{s-t}\|V\chi_{\calA_{\varepsilon (w+t)}}\|_{\textrm{op}}\, dw+ M\int\limits_0^{s-t}\|\chi_{\calA_{\varepsilon (w+t)}^c}e^{-iwH_0}P_\delta(\calW_{vt;\textrm{out}})\|_{\textrm{op}}\, dw
\end{align*}
From the microlocal non-stationary phase estimate on $P_\delta$ (Lemma \ref{generalnonstationary}) and Proposition \ref{coneOnlyGeo} with $m=0$, we see that
\begin{align*}
    \|\chi_{\calA_{\varepsilon (w+t)}^c}e^{-iwH_0}P_\delta(\calW_{vt;\textrm{out}})\|_{\textrm{op}}< C[(v-2\varepsilon)t]^{-(\ell+1)}
\end{align*}
for all $w<t$ since by Proposition \ref{coneOnlyGeo} we have
\begin{align*}
    d(\mathfrak{C}_w(\calW_{vt;\textrm{out}}),\calA_{\varepsilon (w+t)}^c)\geq vt-(w+t)\varepsilon>(v-2\varepsilon )w
\end{align*}
which is in turn greater than $\delta w$ because $\varepsilon<\frac{v-\delta}{2}$.
Therefore, because $s-t\leq t$
\begin{align}\label{intEstofWaveOp}
    \int\limits_0^{s-t}\|\chi_{\calA_{\varepsilon (w+t)}^c}e^{-iwH_0}P_\delta(\calW_{vt;\textrm{out}})\|_{\textrm{op}}\, dw\leq C\frac{s-t}{((v-2\varepsilon)t)^{l}}\leq Ct^{-\ell+1}
\end{align}
In summary, we see that for any $t>0$ and any $s\in[t, 2t]$
\begin{align*}
    \|(\Omega(s-t)-\Id)P_\delta(\calW_{vt;\textrm{out}})\|_{\textrm{op}}\leq  \int\limits_0^{s-t}\|V\chi_{\calA_{\varepsilon(w+t)}}\|_{\textrm{op}}\, dw+ Ct^{-\ell}= \int\limits_t^{s}\|V\chi_{\calA_{\varepsilon w}}\|_{\textrm{op}}\, dw+ Ct^{-\ell}
\end{align*}
for some constant $C$ independent of $t$ and $s$.\par
If we replace $\calW_{vt;\textrm{out}}$ with $\calW_{vt,m;\textrm{out}}$, again from Lemma \ref{generalnonstationary} and Proposition \ref{coneOnlyGeo} with $m>0$, we see that, for $\varepsilon<\min(m-\delta,v)$
\begin{align*}
	\|\chi_{\calA_{\varepsilon (w+t)}^c}e^{iwH_0}P_\delta(\calW_{vt,m;\textrm{out}})\|_{\textrm{op}}< C((v-\varepsilon)t+(m-\varepsilon)w)^{-(\ell+1)}
\end{align*}
for all $w>0$ since by Proposition \ref{coneOnlyGeo}, for $\delta<m$ and $\varepsilon<\min(m-\delta,v)$
\begin{align*}
    d(\mathfrak{C}_w(\calW_{vt,m;\textrm{out}}),\calA_{\varepsilon (w+t)}^c)=(v-\varepsilon)t+(m-\varepsilon)w>(m-\varepsilon)w>\delta w
\end{align*} 
Therefore,
\begin{align*}
	\int\limits_0^{s-t}\|\chi_{\calA_{\varepsilon (w+t)}^c}e^{iwH_0}P_\delta(\calW_{vt,m;\textrm{out}})\|_{\textrm{op}}\, dw\leq C\int\limits_0^{s-t}((v-\varepsilon)t+(m-\varepsilon)w)^{-\ell-1}\, dw\leq C((v-\varepsilon)t)^{-\ell}
\end{align*}
In summary, we see that for any $s>t>0$ 
\begin{align*}
    \|(\Omega(s-t)-\Id)P_\delta(\calW_{vt,m;\textrm{out}})\|_{\textrm{op}}\leq \int\limits_t^{\infty}\|V\chi_{\calA_{\varepsilon w}}\|_{\textrm{op}}\, dw+ Ct^{-\ell}
\end{align*}
for some constant $C$ independent of $t$ and $s$.\par

\end{proof}
Recall that
\begin{align*}
    \calH_{\textrm{scat}}=\{\psi \in \calH\mid \exists v,m,\delta_0>0\text{ so that }\forall\delta\in (0,\delta_0)\lim_{t\rightarrow\infty}\|(P_\delta(\calW_{vt,m;\mathrm{out}})-\Id)\psi_t\|=0\}
\end{align*}
\begin{theorem}\label{OmegaStarArg}
    Let $\psi\in \calH_{\mathrm{scat}}$. Then $\Omega^*\psi$ exists and is in $\calD$, or equivalently $\psi\in\Omega(\calD)$.\par
    Alternatively, if for some $ v>0,\delta<v$ and $\ell>1$ there exists $C>0$ so that
    \begin{align*}
    \|(P_{\delta}(\calW_{vt;\mathrm{out}})-\Id)\psi_{t}\|<Ct^{-\ell}
    \end{align*}
    for all $t>0$ then $ \Omega^*\psi $ exists and lies in $\calD$.
\end{theorem}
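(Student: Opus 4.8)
Here is how I would approach the proof.

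The plan is to show that $\{\Omega^*(t)\psi\}_{t>0}$ is Cauchy as $t\to\infty$, so that $\phi:=\lim_{t\to\infty}\Omega^*(t)\psi$ exists, then to check $\phi\in\calD$, and finally to note that $\psi=\Omega\phi$, which exhibits $\psi$ as an element of $\Omega(\calD)$. The key algebraic reduction is the identity $\Omega^*(s)-\Omega^*(t)=e^{itH_0}\big(\Omega^*(s-t)-\Id\big)e^{-itH}$ for $s\ge t>0$, which by unitarity of $e^{itH_0}$ gives $\|\Omega^*(s)\psi-\Omega^*(t)\psi\|=\|(\Omega^*(s-t)-\Id)\psi_t\|$. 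Since Lemma \ref{MainEstNoMB} is phrased for $\Omega(\cdot)$, I will also use that $\Omega(u)$ and $\Omega^*(u)=\Omega(u)^\ast$ are both unitary, so that expanding $\|(\Omega(u)-\Id)\varphi\|^2=2\|\varphi\|^2-2\,\mathrm{Re}\,\langle\varphi,\Omega(u)\varphi\rangle$ shows $\|(\Omega^*(u)-\Id)\varphi\|=\|(\Omega(u)-\Id)\varphi\|$ for every $\varphi$, hence $\|(\Omega^*(u)-\Id)P_\delta(E)\|_{\mathrm{op}}=\|(\Omega(u)-\Id)P_\delta(E)\|_{\mathrm{op}}$.

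For $\psi\in\calH_{\mathrm{scat}}$, fix $v,m,\delta_0>0$ as in the definition of $\calH_{\mathrm{scat}}$ and choose $\delta\in(0,\delta_0)$ with $\delta<m$, and then $\varepsilon\in(0,\min(m-\delta,v))$. Splitting $\psi_t=P_\delta(\calW_{vt,m;\mathrm{out}})\psi_t+\big(\Id-P_\delta(\calW_{vt,m;\mathrm{out}})\big)\psi_t$, bounding the second term by $2\|(\Id-P_\delta(\calW_{vt,m;\mathrm{out}}))\psi_t\|$, and applying $(\ref{OmegaEstWithm})$ — which holds for \emph{every} $s\ge t>0$ with no upper bound on $s$ — to the first term (together with the norm identity above and $\|\psi_t\|=\|\psi\|$), I obtain
\[
\|\Omega^*(s)\psi-\Omega^*(t)\psi\|\ \le\ \|\psi\|\Big(\int_t^\infty\|V\chi_{\calA_{\varepsilon w}}\|_{\mathrm{op}}\,dw+Ct^{-1}\Big)+2\|(\Id-P_\delta(\calW_{vt,m;\mathrm{out}}))\psi_t\|.
\]
The right side does not depend on $s$ and tends to $0$ as $t\to\infty$: the integral by the generalized Enss condition $(\ref{EnssCond})$, the last term by the definition of $\calH_{\mathrm{scat}}$. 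So $\Omega^*(t)\psi$ is Cauchy and converges to some $\phi$.

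To see $\phi\in\calD$, observe that the orthogonal projection $Q$ onto $\calD$ is the Fourier multiplier by $\chi_{\mathcal U}$ with $\mathcal U:=\bigcup_{i\in\calI}\calC_{\vec v_i,\gamma_i}$, so $Q$ commutes with $e^{itH_0}$ and $\|(\Id-Q)\phi\|=\lim_t\|(\Id-Q)\Omega^*(t)\psi\|=\lim_t\|(\Id-Q)\psi_t\|$. The momentum component of $\calW_{vt,m;\mathrm{out}}$ is $\bigcup_i A_m(\calC_{\vec v_i,\gamma_i})$, which lies at distance $\ge m>\delta$ from $\mathcal U^c$; hence by the momentum-localization property of $P_\delta$ and self-adjointness of $P_\delta(\cdot)$ we get $(\Id-Q)P_\delta(\calW_{vt,m;\mathrm{out}})=0$, so $\|(\Id-Q)\psi_t\|\le\|(\Id-P_\delta(\calW_{vt,m;\mathrm{out}}))\psi_t\|\to0$ and $\phi\in\calD$. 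Then $\psi=\Omega(t)\Omega^*(t)\psi=\Omega(t)\phi+\Omega(t)(\Omega^*(t)\psi-\phi)$, where the second term tends to $0$ in norm and $\Omega(t)\phi\to\Omega\phi$ by part \ref{existenceTheorem} of Theorem \ref{thms} since $\phi\in\calD$; therefore $\psi=\Omega\phi\in\Omega(\calD)$.

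Under the alternative hypothesis the same scheme applies, but the operator bound $(\ref{OmegaEst})$ for $\calW_{vt;\mathrm{out}}$ is only valid for $t\le s\le 2t$; there it gives $\|\Omega^*(s)\psi-\Omega^*(t)\psi\|\le\|\psi\|\int_t^s\|V\chi_{\calA_{\varepsilon w}}\|_{\mathrm{op}}\,dw+Ct^{-\ell}$ after choosing $\varepsilon\in(0,\tfrac{v-\delta}{2})$, taking the free exponent in $(\ref{OmegaEst})$ equal to $\ell$, and using the assumed rate $\|(P_\delta(\calW_{vt;\mathrm{out}})-\Id)\psi_t\|<Ct^{-\ell}$ for the complementary piece. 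To remove the restriction $s\le 2t$ I will chain dyadically: fix $T>0$, set $t_k=2^kT$, and sum this estimate along $t_k\mapsto t_{k+1}=2t_k$. This works precisely because the dyadic intervals $[t_k,t_{k+1}]$ partition $[T,\infty)$, so $\sum_k\int_{t_k}^{t_{k+1}}\|V\chi_{\calA_{\varepsilon w}}\|_{\mathrm{op}}\,dw=\int_T^\infty\|V\chi_{\calA_{\varepsilon w}}\|_{\mathrm{op}}\,dw<\infty$, while $\sum_k t_k^{-\ell}<\infty$ (here $\ell>1$, though $\ell>0$ would do). Hence $\{\Omega^*(t_k)\psi\}_k$ has absolutely summable increments, so it converges to some $\phi$ with $\|\Omega^*(t_k)\psi-\phi\|\le\|\psi\|\int_{t_k}^\infty\|V\chi_{\calA_{\varepsilon w}}\|_{\mathrm{op}}\,dw+C't_k^{-\ell}$; comparing an arbitrary $t$ with the largest $t_k\le t$ (which satisfies $t\le 2t_k$, so $(\ref{OmegaEst})$ applies) shows $\Omega^*(t)\psi\to\phi$. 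The membership $\phi\in\calD$ then follows from the momentum-localization of $P_\delta$ as before, now using that the range of $P_\delta(\calW_{vt;\mathrm{out}})$ consists of states whose Fourier transform is supported in the $\delta$-neighborhood of $\mathcal U$; closing this up in the absence of a momentum cutoff $m$ — i.e.\ controlling the part of $\psi_t$ with momentum in the thin shell just outside $\mathcal U$ — is the one step requiring extra care, which I would resolve either by running the argument for all sufficiently small $\delta$ or via a non-stationary-phase estimate, using that at time $t$ that momentum shell is separated by distance $\ge vt$ from the region $\calA_{vt}$ in which $\psi_t$ concentrates. This last point, together with the dyadic chaining forced by the window $s\le 2t$ in $(\ref{OmegaEst})$, is where I expect essentially all of the (modest) difficulty to lie; the first assertion is nearly immediate once one has the reduction to $\|(\Omega^*(s-t)-\Id)\psi_t\|$ and the $s$-uniform bound $(\ref{OmegaEstWithm})$.
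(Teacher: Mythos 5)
Your argument is essentially the paper's: the Cauchy reduction via conjugating $\Omega^*(s)-\Omega^*(t)$ by unitaries to $\|(\Omega(s-t)-\Id)\psi_t\|$ (your detour through $\Omega^*(s-t)-\Id$ plus the norm identity for a unitary and its adjoint is a cosmetic variation), the splitting of $\psi_t$ along $P_\delta(\calW_{vt,m;\mathrm{out}})$ with the $s$-uniform bound (\ref{OmegaEstWithm}), the dyadic chaining to overcome the window $s\in[t,2t]$ in (\ref{OmegaEst}), and the use of momentum localization (Proposition \ref{psupport}) to place the limit in $\calD$ all coincide with the paper's proof; your multiplier projection $Q$ is just a repackaging of the paper's step of approximating $\Omega^*(t)\psi$ by $e^{itH_0}P_\delta(\calW_{vt,m;\mathrm{out}})\psi_t\in\calD$ and using closedness. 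The one place you leave open --- membership of $\Omega^*\psi$ in $\calD$ under the alternative hypothesis, where only the $\delta$-thickened momentum shell $B_\delta+\bigcup_i\calC_{\vec v_i,\gamma_i}$ is controlled --- is resolved in the paper exactly by the first fix you name: one obtains $\supp\widehat{\Omega^*\psi}\subset\overline{B_\delta+\bigcup_i\calC_{\vec v_i,\gamma_i}}$ and then lets $\delta\downarrow 0$ (which tacitly reads the hypothesis as holding for all sufficiently small $\delta$, as it does in the application in Theorem \ref{scatTheorem}); no additional non-stationary-phase estimate is needed, so your second proposed fix can be discarded. With that choice made, the proposal is correct and follows the paper's route.
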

\begin{proof}
We show that
\begin{align*}
    \Omega^*(t):=e^{itH_0}e^{-itH}
\end{align*}
is Cauchy as $t\rightarrow\infty$.
For that, fix $t>0$ and suppose that $t\leq s $. Observe that
\begin{align*}
    \|(\Omega^*(s)-\Omega^*(t))\psi\|=\|(\Omega(s-t)-\Id)\psi_t\|
\end{align*}
which comes from multiplying by $e^{-itH}\Omega(s)$ and the identity
\begin{align*}
    e^{-itH}\Omega(s)\Omega^{*}(t)=\Omega(s-t)e^{-itH}
\end{align*}
Since $\psi \in \calH_{\textrm{scat}}$, there is some $v,m,\delta_0>0$ such that for any $\delta<\delta_0$
\begin{align*}
    \|(P_\delta(\calW_{vt,m;\textrm{out}})-\Id)\psi_t\| =o(1)
\end{align*}
as $t\rightarrow\infty$. For these $v,m>0$ choose $\delta<\min(m,\delta_0)$, so  we may write
\begin{align*}
   \|(\Omega(s-t)-\Id)\psi_t\| &\leq \|(\Omega(s-t)-\Id)P_\delta(\calW_{vt,m;\textrm{out}})\psi_t\|+ \|(\Omega(s-t)-\Id)(P_\delta(\calW_{vt,m;\textrm{out}})-\Id)\psi_t\|\\
   &= \|(\Omega(s-t)-\Id)P_\delta(\calW_{vt,m;\textrm{out}})\psi_t\|+o(1)
\end{align*}
as $t\rightarrow\infty$. By using Lemma \ref{MainEstNoMB}, we conclude that, for $0<\varepsilon<\min(m-\delta,v)$
\begin{align*}
    \|(\Omega^*(s)-\Omega^*(t))\psi\|\leq Ct^{-\ell}+\int\limits_t^{\infty}\|V\chi_{\calA_{\varepsilon w}}\|_{\textrm{op}}\, dw\|\psi\|+o(1)
\end{align*}
for some constant $C$ independent of $t$ and $s$. The second term decays with $t$, by assumption (\ref{EnssCond}), and thus the entire expression goes to $0$ as $t\rightarrow\infty$.\par
This shows that $\Omega^*\psi$ exists, so to see that it lies in $\calD$ first note that for $\delta<m$, $P_\delta(\calW_{vt,m;\textrm{out}})\psi\in\calD$ by Proposition \ref{psupport} and thus so does $e^{itH_0}P_\delta(\calW_{vt,m;\textrm{out}})$. Now observe
\begin{align*}
    \|\Omega^*(t)\psi-e^{itH_0}P_\delta(\calW_{vt,m;\textrm{out}})\psi_t\|=\|\psi_t-P_\delta(\calW_{vt,m;\textrm{out}})\psi_t\|
\end{align*}
which goes to $0$ be assumption so that the claim follows because $\calD$ is closed.\par
To see the second claim, for $t>0$ and $s\in [t,2t]$, write as before, for the given $v$ and $\delta<v$
\begin{align*}
   \|(\Omega(s-t)-\Id)\psi_t\| &\leq \|(\Omega(s-t)-\Id)P_\delta(\calW_{vt;\textrm{out}})\psi_t\|+ \|(\Omega(s-t)-\Id)(P_\delta(\calW_{vt;\textrm{out}})-\Id)\psi_t\|\\
   &\leq \|(\Omega(s-t)-\Id)P_\delta(\calW_{vt;\textrm{out}})\psi_t\|+Ct^{-\ell}
\end{align*}
and again apply Lemma \ref{MainEstNoMB} to see that
\begin{align*}
    \|(\Omega^*(s)-\Omega^*(t))\psi\|\leq Ct^{-\ell}+ \int\limits_t^{s}\|V\chi_{\calA_{\varepsilon w}}\|_{\textrm{op}}\, dw\|\psi\|
\end{align*}
for some constant $C$ independent of $t$ and $s$.\par
To conclude, for any $s\geq t>0$, fix $N$ so that $s\in [2^Nt,2^{N+1}t]$ and then write
\begin{align*}
    \|(\Omega^*(s)-\Omega^*(t))\psi\|&\leq \sum_{n=0}^{N-1}\|(\Omega^*(2^{n+1}t)-\Omega^*(2^{n}t))\psi\|+\|(\Omega^*(s)-\Omega^*(2^Nt))\psi\|\\
    &\leq C\sum_{n=0}^N (2^nt)^{-\ell}+\sum_{n=0}^{N-1} \int\limits_{2^nt}^{2^{n+1}t}\|V\chi_{\calA_{\varepsilon w}}\|_{\textrm{op}}\, dw\|\psi\|+\int\limits_{2^Nt}^{s}\|V\chi_{\calA_{\varepsilon w}}\|_{\textrm{op}}\, dw\|\psi\|\\
    &\leq Ct^{-\ell}+\int\limits_{t}^{s}\|V\chi_{\calA_{\varepsilon w}}\|_{\textrm{op}}\,dw \|\psi\|
\end{align*}
where $C$ does not depend on $N$. This, combined with condition (\ref{EnssCond}), prove that $\Omega^*\psi$ exists.\par
To see that $\Omega^*\psi$ lies in $\calD$, we proceed as before by noting that for any $\delta>0$, Proposition \ref{psupport} shows that
\begin{align*}
    \supp \calF(P_{\delta}(\calW_{vt;\textrm{out}})\psi)\subset B_\delta+\bigcup_{(x,\vec{v},\gamma)\in\calI}\calC_{\vec{v},\gamma}
\end{align*}
Now we can write
\begin{align*}
    \|\Omega^*\psi-e^{itH_0}P_\delta(\calW_{vt;\textrm{out}})\psi_t\|\leq \|\Omega^*\psi-\Omega^*(t)\psi\|+\|\Omega^*(t)\psi-e^{itH_0}P_\delta(\calW_{vt;\textrm{out}})\psi_t\|
\end{align*}
By taking the limit $t\rightarrow\infty $ we see that for any $v>\delta>0$
\begin{align*}
    \supp \widehat{\Omega^*\psi}\subset \overline{B_\delta+\bigcup_{(x,\vec{v},\gamma)\in\calI}\calC_{\vec{v},\gamma}}
\end{align*}
Varying over all $\delta>0$, we conclude that $\Omega^*\psi\in \calD$, as desired.
\end{proof}
Having shown that $\calH_{\textrm{scat}}\subset \Ran(\Omega)$, we now show that $\calH_{\textrm{scat}}$ is dense in this subspace. For this we will start with a lemma:
\begin{lemma}\label{RanDkLemma}
 Let $\psi \in \Omega(\calD_k(\mathcal{C}_i))$ for some $i\in\calI$ and $k>0$. Then there is some $T_0=T_0(k,\calC_i)$ such that for any $v,m,\varepsilon,$ and $\delta$ satisfying
\begin{align*}
    &v,\varepsilon\in(0,k)&& 0\leq m<k &&&0<\delta<\min(k-m,\frac{k-v}{2})
\end{align*}
there exists constants $\ell,C>0$ such that
\begin{align}\label{DmClaim}
    \|(P_\delta(\calW_{vt,m;\mathrm{out}})-\Id)\psi_t\|\leq Ct^{-\ell}+\int\limits_t^\infty\|\chi_{A_{\varepsilon s}}V\|_{\mathrm{op}}\|\psi\|\,ds
\end{align}
for all $t>T_0$.
\end{lemma}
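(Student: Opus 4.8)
The plan is to prove the estimate \eqref{DmClaim} by running the argument of Lemma~\ref{MainEstNoMB} and Theorem~\ref{OmegaStarArg} ``in reverse'': since $\psi \in \Omega(\calD_k(\calC_i))$, we may write $\psi = \Omega \varphi$ with $\varphi \in \calD_k(\calC_i)$, so that $\psi_t = e^{-itH}\Omega\varphi$ and $\Omega^*(t)\psi \to \varphi$ as $t \to \infty$. Using the intertwining-type identity from the proof of Theorem~\ref{OmegaStarArg}, namely $\Omega^*(t)\psi - \varphi$ can be controlled by the tail $\int_t^\infty \|V\chi_{\calA_{\varepsilon w}}\|_{\mathrm{op}}\,dw$ (with a remainder from the microlocal non-stationary phase estimate), the heart of the matter becomes comparing $\psi_t = e^{-itH}\Omega\varphi$ with $e^{-itH_0}\varphi = e^{-itH_0}\Omega^*(t)\psi$, and then comparing $e^{-itH_0}\varphi$ with $P_\delta(\calW_{vt,m;\mathrm{out}})e^{-itH_0}\varphi$.

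The key steps, in order, are as follows. First, I would reduce to showing
\[
\|(P_\delta(\calW_{vt,m;\mathrm{out}})-\Id)e^{-itH_0}\varphi\| \leq C t^{-\ell},
\]
which is a statement purely about the \emph{free} evolution of a Schwartz function with $\supp\hat\varphi \Subset A_k(\calC_{\vec v,\gamma})$. This is where I expect to invoke the spatial non-stationary phase estimate (property~(5) of $P_\delta$) together with Proposition~\ref{classicnonstationary}: away from the classically allowed cone $A_{kt}(\calC_i)$, the free evolution $e^{-itH_0}\varphi$ is $O(t^{-\ell})$ by Proposition~\ref{classicnonstationary}, and on the complement we have $e^{-itH_0}\varphi$ essentially supported in $A_{vt}(\calC_i)$ with momentum in $A_m(\calC_{\vec v,\gamma})$ up to $O(t^{-\ell})$ errors, precisely because $v < k$ and $m < k$ and $\delta$ is small enough (this is exactly the gap condition $\delta < \min(k-m, \tfrac{k-v}{2})$). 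The constant $T_0$ enters here as the threshold from Proposition~\ref{classicnonstationary}: one needs $kt - vt$ (resp.\ $kt$) to exceed the geometric constant $c(\calC_i)$, so $T_0 = T_0(k,\calC_i)$ is forced. Second, I would pass from $e^{-itH_0}\varphi$ to $\psi_t$ via
\[
\|\psi_t - e^{-itH_0}\varphi\| = \|\Omega^*(t)\psi - \varphi\| = \|\Omega^*\psi - \Omega^*(t)\psi\|,
\]
wait — more carefully, $\psi_t = e^{-itH}\psi$ and $e^{-itH_0}\Omega^*(t)\psi = e^{-itH_0}e^{itH_0}e^{-itH}\psi = e^{-itH}\psi = \psi_t$, so in fact $\psi_t = e^{-itH_0}\Omega^*(t)\psi$ exactly, and hence $\|\psi_t - e^{-itH_0}\varphi\| = \|\Omega^*(t)\psi - \varphi\|$. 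Bounding $\|\Omega^*(t)\psi - \varphi\| = \|\Omega^*(t)\psi - \Omega^*\psi\|$ by a Cook-type integral $\int_t^\infty \|V e^{-isH_0}\varphi\|\,ds$, and then splitting $\|V e^{-isH_0}\varphi\| \leq \|V\chi_{\calA_{\varepsilon s}}\|_{\mathrm{op}}\|\varphi\| + M\|\chi_{\calA_{\varepsilon s}^c}e^{-isH_0}\varphi\|$ with the second piece handled by Proposition~\ref{classicnonstationary}, yields exactly the $\int_t^\infty \|\chi_{A_{\varepsilon s}}V\|_{\mathrm{op}}\|\psi\|\,ds + Ct^{-\ell}$ form (note $\|\varphi\| = \|\psi\|$ since $\Omega$ is an isometry on $\calD$). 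Finally, combining the two steps by the triangle inequality,
\[
\|(P_\delta(\calW_{vt,m;\mathrm{out}})-\Id)\psi_t\| \leq \|(P_\delta(\calW_{vt,m;\mathrm{out}})-\Id)e^{-itH_0}\varphi\| + 2\|\psi_t - e^{-itH_0}\varphi\|,
\]
gives \eqref{DmClaim} after renaming constants.

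The main obstacle I anticipate is the first step: verifying $\|(P_\delta(\calW_{vt,m;\mathrm{out}})-\Id)e^{-itH_0}\varphi\| \leq Ct^{-\ell}$ with the correct constraints on $v, m, \delta$ relative to $k$. One must decompose $\Id - P_\delta(\calW_{vt,m;\mathrm{out}}) = P_\delta(\calW_{vt,m;\mathrm{out}}^c)$ and split the complement into a ``wrong position'' part (space coordinate outside $A_{vt}(\calC_i)$) and a ``wrong momentum'' part (momentum coordinate outside $A_m(\calC_{\vec v,\gamma})$). For the momentum part, property~(2) of $P_\delta$ (momentum localization) handles it immediately once $\delta < k - m$, since $\supp\hat\varphi \Subset A_k(\calC_{\vec v,\gamma})$ has distance $> k - m$ from $A_m(\calC_{\vec v,\gamma})^c$ — actually one needs the \emph{compact} containment here, which is exactly why $\calD_k$ rather than $\calD$ is used. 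For the position part, property~(5) (spatial non-stationary phase) applied with $A_t = A_{vt}^c(\calC_i)$ and $D$ a neighborhood of $\supp\hat\varphi$ in $A_k(\calC_{\vec v,\gamma})$ requires $d(A_{vt}^c(\calC_i), tD) > \varepsilon' t$ for some $\varepsilon' > 0$; this is precisely the content of the cone-inclusion computation \eqref{ConeInclsion} in Proposition~\ref{classicnonstationary}, which holds for $t > T_0$ because $v < k$ gives the strict inclusion $t A_k(\calC_{\vec v,\gamma}) \supset A_{vt}(\calC_i)$ with linearly growing margin. Bookkeeping the various $\varepsilon$'s and confirming they can be chosen consistently with $\delta < \tfrac{k-v}{2}$ (the factor $2$ coming, as in Lemma~\ref{MainEstNoMB}, from the need for the margin to dominate $\delta w$ rather than just $\delta t$) is the only genuinely delicate point; everything else is a recombination of estimates already established in the excerpt.
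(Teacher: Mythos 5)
Your proposal is correct and follows essentially the same route as the paper: the same reduction to the two estimates $\|(P_\delta(\calW_{vt,m;\mathrm{out}})-\Id)e^{-itH_0}\varphi\|\leq Ct^{-\ell}$ (via momentum localization with $\delta<k-m$ plus the spatial non-stationary phase estimate and the cone-inclusion margin from $v<k$, giving $T_0$) and a Cook-type bound on $\|(\Omega-\Id)e^{-itH_0}\varphi\|$ split by $\chi_{\calA_{\varepsilon s}}$ and Proposition \ref{classicnonstationary}. Your $\Omega^*(t)$ reformulation is only cosmetic, since $\|\psi_t-e^{-itH_0}\varphi\|=\|(\Omega-\Id)e^{-itH_0}\varphi\|$, which is exactly the paper's second term.
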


\begin{proof}
Let $\psi=\Omega\varphi$ for $\varphi\in \calD_k(\mathcal{C}_{i})$, some $i\in \calI$, and some fixed $k>0$.
It suffices to show that for some choice of parameters as above that for all $t>T_0$
\begin{align}\label{term1}
    \|(P_\delta(\calW_{vt,m;\textrm{out}})-\Id)e^{-itH_0}\varphi\|<Ct^{-\ell}
\end{align}
and
\begin{align}\label{term2}
   \|(\Omega-\Id)e^{-itH_0}\varphi\|\leq Ct^{-\ell}+\int\limits_t^\infty\|\chi_{A_{\varepsilon s}}V\|_{\textrm{op}}\|\psi\|\,ds
\end{align}
in light of the inequality
\begin{align*}
    &\|(P_\delta(\calW_{vt,m;\textrm{out}})-\Id)e^{-itH}\psi\|=\|(P_\delta(\calW_{vt,m;\textrm{out}})-\Id)\Omega e^{-itH_0}\varphi\|\\
    &\leq \|(P_\delta(\calW_{vt,m;\textrm{out}})-\Id)e^{-itH_0}\varphi\|+\|(P_\delta(\calW_{vt,m;\textrm{out}})-\Id)(\Omega-\Id)e^{-itH_0}\varphi\|
\end{align*}
and the fact that $\|P_\delta(W_{vt,m;\textrm{out}})-\Id \|_{\textrm{op}}$ is bounded independently of $t$.\par
The inequality (\ref{term2}) is proven by first choosing $\varepsilon<k$ and writing
\begin{align*}
    \|(\Omega-\Id)e^{-itH_0}\varphi\|&\leq \int\limits_0^\infty\|Ve^{-i(s+t)H_0}\varphi\|\,ds= \int\limits_t^\infty\|Ve^{-isH_0}\varphi\|\,ds\\
    &\leq \int\limits_t^\infty\|\chi_{\calA_{\varepsilon s}}V\|_{\textrm{op}}\|\psi\|\,ds+M\int\limits_t^\infty\|\chi_{\calA_{\varepsilon s}^c}e^{-isH_0}\varphi\|\,ds
\end{align*}
where have used that $\|\varphi\|=\|\psi\|$. Now let $c$ be the constant from Proposition \ref{classicnonstationary} and note that
\begin{align*}
    c<ks-\varepsilon s
\end{align*}
so long as $s>T_1:=\frac{c}{k-\varepsilon}$. Therefore, Proposition \ref{classicnonstationary} with $n=\varepsilon s$ implies the desired inequality for any $t>T_1$.
Therefore, it remains to show (\ref{term1})
for some choice of parameters as above.\par
To see inequality (\ref{term1}), we write $\calC_i=\calC_{x,\vec{v},\gamma}$ and observe that
\begin{align*}
    &\|(P_\delta(\calW_{vt,m;\textrm{out}})-\Id)e^{-itH_0}\varphi\|^2\leq \braket{e^{-itH_0}\varphi, P_\delta(\calW_{vt,m;\textrm{out}}^c)e^{-itH_0}\varphi}\\
    &\leq \braket{e^{-itH_0}\varphi, P_\delta(W_{vt,m;\textrm{out}}^c(\calC_i))e^{-itH_0}\varphi} \leq \|\varphi\| \|P_\delta(W_{vt,m;\textrm{out}}^c(\calC_i))e^{-itH_0}\varphi\|
\end{align*}
Noting that 
\begin{align*}
    &W_{vt,m;\textrm{out}}^c(\calC_i)=A_{vt}^c(\calC_i)\times \bbR^d\bigsqcup A_{vt}(\calC_i)\times A_m^c(\calC_{\vec{v},\gamma})
\end{align*}
and recalling that $\supp\hat{\varphi} \Subset A_k(\calC_{\vec{v},\gamma})$, by the momentum localization properties of $P_\delta$ (Proposition \ref{psupport}) we see that
\begin{align*}
    P_\delta(W_{vt,m;\textrm{out}}^c(\calC_i))\varphi=P_\delta(A_{vt}^c(\calC_i)\times \bbR^d)\varphi
\end{align*}
as $\delta<k-m$ and $d(A_k(\calC_{\vec{v},\gamma}),A_m^c(\calC_i))>k-m$.\par
Next, choose $T_2=\frac{2\|x\|}{k-v}$ so that for any $t>T_2$ 
\begin{align*}
    &d(A_{vt}^c(\calC_i),tA_k(\calC_{\vec{v},\gamma}))>(k-v)t-\|x\|>\frac{k-v}{2}t>\delta t
\end{align*}
Proposition \ref{BasicNonstationaryPd} then implies that for any $\ell>0$ there is some $C>0$ such that
\begin{align*}
    \|P_\delta(W_{vt,m;\textrm{out}}^c(\calC_i))e^{-itH_0}\varphi\|<Ct^{-\ell}
\end{align*}
for all $t>T_2$ which proves (\ref{term2}). We then conclude that the lemma holds with $T_0=\max(T_1,T_2)$.
\end{proof}
\begin{theorem}\label{scatTheorem}
    Suppose that $\psi\in \Ran(\Omega)$. Then $\psi\in \overline{\calH_{\mathrm{scat}}}$.
\end{theorem}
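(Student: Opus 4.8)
The plan is to first establish the desired inclusion for the dense family $\bigcup_{i\in\calI}\bigcup_{k>0}\Omega(\calD_k(\calC_i))$, where Lemma \ref{RanDkLemma} applies directly, and then propagate it to all of $\Ran(\Omega)=\Omega(\calD)$ using that $\Omega$ is an isometry on $\calD$. Recall from the excerpt that $\bigcup_{k>0}\calD_k(\calC_i)$ is dense in $\calD(\calC_i)$, so $\bigcup_{i\in\calI}\bigcup_{k>0}\calD_k(\calC_i)$ is dense in $\calD=\overline{\bigcup_{i\in\calI}\calD(\calC_i)}$; and since each $\Omega(t)=e^{itH}e^{-itH_0}$ is unitary, $\|\Omega\eta\|=\lim_{t\to\infty}\|\Omega(t)\eta\|=\|\eta\|$ for every $\eta\in\calD$, i.e. $\Omega$ is an isometry.

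The first step is to show $\Omega(\calD_k(\calC_i))\subset\calH_{\mathrm{scat}}$ for each $i\in\calI$ and $k>0$. Fix such $i,k$ and let $\psi=\Omega\varphi$ with $\varphi\in\calD_k(\calC_i)$. I would pick parameters admissible for Lemma \ref{RanDkLemma}, for instance $v=\varepsilon=\frac{k}{2}$ and $m=\frac{k}{4}$, so that $\delta_0:=\min\!\big(k-m,\frac{k-v}{2}\big)=\frac{k}{4}$ and the constraints $v,\varepsilon\in(0,k)$, $0\le m<k$, $0<\delta<\delta_0$ all hold for every $\delta\in(0,\delta_0)$. For each such $\delta$, Lemma \ref{RanDkLemma} furnishes $\ell,C,T_0>0$ with
\[
\|(P_\delta(\calW_{vt,m;\mathrm{out}})-\Id)\psi_t\|\leq Ct^{-\ell}+\int_t^\infty\|\chi_{\calA_{\varepsilon s}}V\|_{\mathrm{op}}\|\psi\|\,ds\qquad(t>T_0).
\]
Letting $t\to\infty$, the first term vanishes, and after the substitution $r=\varepsilon s$ the integral equals $\varepsilon^{-1}\|\psi\|\int_{\varepsilon t}^\infty\|\chi_{\calA_r}V\|_{\mathrm{op}}\,dr$, which tends to $0$ by the generalized Enss condition (\ref{EnssCond}). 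Hence $\lim_{t\to\infty}\|(P_\delta(\calW_{vt,m;\mathrm{out}})-\Id)\psi_t\|=0$ simultaneously for all $\delta\in(0,\delta_0)$, with $v,m,\delta_0$ chosen independently of $\delta$; this is precisely the membership condition for $\calH_{\mathrm{scat}}$, so $\psi\in\calH_{\mathrm{scat}}$.

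Finally, given an arbitrary $\psi\in\Ran(\Omega)$, I would write $\psi=\Omega\varphi$ with $\varphi\in\calD$, choose $\varphi_n\in\bigcup_{i}\bigcup_{k}\calD_k(\calC_i)$ with $\varphi_n\to\varphi$, and note that $\Omega\varphi_n\in\calH_{\mathrm{scat}}$ by the first step while $\|\Omega\varphi_n-\psi\|=\|\varphi_n-\varphi\|\to0$ by the isometry property; thus $\psi\in\overline{\calH_{\mathrm{scat}}}$. The substantive work is entirely inside Lemma \ref{RanDkLemma}, which is already established, so I do not expect a genuine obstacle here; the only points needing care are the bookkeeping ones — that a single triple $(v,m,\delta_0)$ can be arranged to cover all small $\delta$ at once (immediate, as above), that the Enss tail integral decays, and that the isometry of $\Omega$ lets membership in $\overline{\calH_{\mathrm{scat}}}$ survive the approximation.
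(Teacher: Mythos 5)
Your proposal is correct and follows essentially the same route as the paper: reduce to the dense family $\bigcup_{i\in\calI}\bigcup_{k>0}\Omega(\calD_k(\calC_i))$ and apply Lemma \ref{RanDkLemma} with parameters $v,m,\varepsilon$ chosen relative to $k$ and $\delta_0<\min(k-m,\tfrac{k-v}{2})$. The only difference is that you spell out details the paper leaves implicit (an explicit admissible parameter choice, the decay of the Enss tail, and the isometry of $\Omega$ justifying the density step), all of which are fine.
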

\begin{proof}
Since $\bigcup\limits_{i\in\calI}\bigcup\limits_{k>0}\Omega(\calD_k(\mathcal{C}_{i}))$ is dense in $\Ran(\Omega)$, it suffices to show that $\Omega(\calD_k(\mathcal{C}_{i}))\subset\calH_{\textrm{scat}}$ for all $k>0$.
But this is immediate from Lemma \ref{RanDkLemma} so long as $m,\varepsilon$ and $v$ are chosen appropriately with respect to $k$ and $\delta_0$ is chosen to be less than $\min(k-m,\frac{k-v}{2})$. 
\end{proof}
\begin{remark}
The second claim in Theorem \ref{OmegaStarArg} and the above proof of Theorem \ref{scatTheorem} also show that $\Ran(\Omega)$ may be described without the parameter $m$ as
\begin{align*}
   \Omega(\calD)=\overline{\{\psi \in \calH\mid  \exists v,C,\ell>0,\delta_0>0\text{ so that }\forall \delta\in(0,\delta_0)\text{ and } t>0\,\|(P_\delta(\calW_{vt;\mathrm{out}})-\Id)\psi_t\|<Ct^{-\ell}\}}
\end{align*}
but we prefer the given characterization as  $\calH_{\textrm{scat}}$ because $\calH_{\textrm{int}}$ must be defined in terms of $m$.
\end{remark}
\subsection{Characterizing $\Ran(\Omega)^\perp$}
Recall that
\begin{align*}
      \calH_{\mathrm{int}}=\{\psi  \in \calH\mid  \forall v,m>0, \exists \delta_0>0\text{ so that }\forall \delta\in(0,\delta_0)\,\lim\limits_{t\rightarrow\infty} \|P_\delta(\calW_{vt,m;\mathrm{out}})\psi_t\|=0\}
\end{align*}
\begin{theorem}\label{Orthogonal}
Under the above definition, $\Ran(\Omega)^\perp=\calH_{\mathrm{int}}$.
\end{theorem}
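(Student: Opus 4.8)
The plan is to prove the two inclusions $\calH_{\mathrm{int}}\subseteq\Omega(\calD)^\perp$ and $\Omega(\calD)^\perp\subseteq\calH_{\mathrm{int}}$ separately. Throughout I would use the already-established identity $\Omega(\calD)=\overline{\calH_{\mathrm{scat}}}$ (Theorems \ref{OmegaStarArg} and \ref{scatTheorem}), the intertwining relation $e^{itH}\Omega=\Omega e^{itH_0}$ on $\calD$ (together with $e^{itH_0}\calD\subseteq\calD$, since $e^{itH_0}$ preserves Fourier supports), and the fact that each $P_\delta(\calW)$ is a positive operator with $0\leq P_\delta(\calW)\leq\Id$ by property (1) of the POVM, so that $\braket{\varphi,P_\delta(\calW)\varphi}=\|P_\delta(\calW)^{1/2}\varphi\|^2$ and $\|P_\delta(\calW)\varphi\|\leq\|P_\delta(\calW)^{1/2}\varphi\|$ for all $\varphi\in\calH$.

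For $\calH_{\mathrm{int}}\subseteq\Omega(\calD)^\perp$: since $\calH_{\mathrm{scat}}$ is dense in $\Omega(\calD)$, it is enough to show $\psi\perp\varphi$ whenever $\psi\in\calH_{\mathrm{int}}$ and $\varphi\in\calH_{\mathrm{scat}}$. I would let $v,m,\delta_0$ be parameters witnessing $\varphi\in\calH_{\mathrm{scat}}$, then let $\delta_1$ witness $\psi\in\calH_{\mathrm{int}}$ for this same pair $v,m$, fix $\delta<\min(\delta_0,\delta_1)$, and write $\braket{\psi,\varphi}=\braket{\psi_t,\varphi_t}$. Inserting $\Id=P_\delta(\calW_{vt,m;\mathrm{out}})+(\Id-P_\delta(\calW_{vt,m;\mathrm{out}}))$ and using self-adjointness of $P_\delta$, the contribution of the first summand is bounded by $\|P_\delta(\calW_{vt,m;\mathrm{out}})\psi_t\|\,\|\varphi\|\to0$ and that of the second by $\|\psi\|\,\|(\Id-P_\delta(\calW_{vt,m;\mathrm{out}}))\varphi_t\|\to0$, so $\braket{\psi,\varphi}=0$.

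For $\Omega(\calD)^\perp\subseteq\calH_{\mathrm{int}}$ --- the substantive direction --- I would fix $\psi\in\Omega(\calD)^\perp$, arbitrary $v,m>0$, and any $\delta\in(0,m)$, and prove \emph{directly} that $\|P_\delta(\calW_{vt,m;\mathrm{out}})\psi_t\|\to0$ as $t\to\infty$; then $\delta_0=m$ serves in the definition of $\calH_{\mathrm{int}}$. Set $\eta_t:=P_\delta(\calW_{vt,m;\mathrm{out}})\psi_t$. First, by the momentum-support property of $P_\delta$ (Proposition \ref{psupport}) together with $\delta<m$, the Fourier transform of $\eta_t$ is supported in $\bigcup_{i\in\calI}\calC_{\vec{v}_i,\gamma_i}$, so that $\eta_t\in\calD$ (exactly as for $\Omega^*\psi$ in the proof of Theorem \ref{OmegaStarArg}) and hence $e^{itH_0}\eta_t\in\calD$ as well. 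Second, choosing $\varepsilon\in(0,\min(m-\delta,v))$ and applying the operator estimate (\ref{OmegaEstWithm}) of Lemma \ref{MainEstNoMB}, for each fixed $t$ and all $s\geq t$ one has $\|(\Omega(s-t)-\Id)\eta_t\|\leq\big(\int_t^{\infty}\|V\chi_{\calA_{\varepsilon w}}\|_{\mathrm{op}}\,dw+Ct^{-1}\big)\|\psi\|$; letting $s\to\infty$ (the bound being independent of $s$, and $\Omega\eta_t$ existing by part \ref{existenceTheorem} of Theorem \ref{thms} since $\eta_t\in\calD$) yields $\|\Omega\eta_t-\eta_t\|\to0$ as $t\to\infty$ by (\ref{EnssCond}). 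Third, writing $\eta_t=\Omega\eta_t+(\eta_t-\Omega\eta_t)$ and using $\psi_t=e^{-itH}\psi$ with the intertwining relation $e^{itH}\Omega=\Omega e^{itH_0}$,
\begin{align*}
\braket{\psi_t,\eta_t}=\braket{\psi,\Omega e^{itH_0}\eta_t}+\braket{\psi_t,\eta_t-\Omega\eta_t},
\end{align*}
where the first term vanishes because $\Omega e^{itH_0}\eta_t\in\Omega(\calD)$ and $\psi\perp\Omega(\calD)$, while the second tends to $0$. Since $\braket{\psi_t,\eta_t}=\|P_\delta(\calW_{vt,m;\mathrm{out}})^{1/2}\psi_t\|^2$, this gives $\|P_\delta(\calW_{vt,m;\mathrm{out}})^{1/2}\psi_t\|\to0$, and therefore $\|\eta_t\|\leq\|P_\delta(\calW_{vt,m;\mathrm{out}})^{1/2}\psi_t\|\to0$, as required.

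The genuinely hard analysis here has already been done in Lemma \ref{MainEstNoMB} (outgoing states evolve almost freely) and in the characterization $\Omega(\calD)=\overline{\calH_{\mathrm{scat}}}$, so I do not anticipate a serious obstacle; the points to handle with care are that $P_\delta(\calW)$ is only a positive operator rather than a projection --- which forces the detour through $P_\delta(\calW)^{1/2}$ to pass from $\braket{\psi_t,\eta_t}\to0$ to $\|\eta_t\|\to0$ --- and the verification, via Proposition \ref{psupport}, that $\eta_t$ genuinely lies in the domain $\calD$ on which $\Omega$ is defined, together with checking that the parameter constraints $\delta<m$ and $\varepsilon<\min(m-\delta,v)$ are simultaneously satisfiable (which they are, for any $\delta\in(0,m)$).
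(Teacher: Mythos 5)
Your proposal is correct and follows essentially the same route as the paper: the hard inclusion rests on the bound $\|P_\delta(\calW_{vt,m;\mathrm{out}})\psi_t\|^2\leq\braket{P_\delta(\calW_{vt,m;\mathrm{out}})\psi_t,\psi_t}$, Proposition \ref{psupport} to place the outgoing piece in $\calD$, Lemma \ref{MainEstNoMB} to replace it by its image under the wave operator, and the intertwining relation to exhibit a vector in $\Omega(\calD)$ orthogonal to $\psi$, exactly as in the paper (which merely applies $e^{itH}\Omega(s-t)=\Omega(s)e^{itH_0}$ at finite $s$ before letting $s\to\infty$, rather than passing to $\Omega$ first). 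Your easy inclusion via density of $\calH_{\mathrm{scat}}$ in $\Omega(\calD)$ is just a repackaging of the paper's use of Lemma \ref{RanDkLemma}, so the arguments coincide in substance.
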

\begin{proof}
For the inclusion $\Ran(\Omega)^\perp\subset \calH_\textrm{int}$, take $\psi \in \Ran(\Omega)^\perp$ and fix any $v,m>0$ and $\delta<m$. For any $s>t>0$ we may write
\begin{align*}
    &\|P_\delta(\calW_{vt,m;\textrm{out}})\psi_t\|^2\leq \braket{P_\delta(\calW_{vt,m;\textrm{out}})\psi_t,\psi_t}\\
    &\leq \braket{\Omega(s-t)P_\delta(\calW_{vt,m;\textrm{out}})\psi_t,\psi_t}+\|\psi\|\|(\Omega(s-t)-\Id)P_\delta(\calW_{vt,m;\textrm{out}})\psi_t\|\\
    &=\braket{e^{itH}\Omega(s-t)P_\delta(\calW_{vt,m;\textrm{out}})\psi_t,\psi}+\|\psi\|\|(\Omega(s-t)-\Id)P_\delta(\calW_{vt,m;\textrm{out}})\psi_t\|\\
    &=\braket{\Omega(s)e^{itH_0}P_\delta(\calW_{vt,m;\textrm{out}})\psi_t,\psi}+\|\psi\|\|(\Omega(s-t)-\Id)P_\delta(\calW_{vt,m;\textrm{out}})\psi_t\|
\end{align*}
where we have used that
\begin{align*}
    e^{itH}\Omega(s-t)=\Omega(s)e^{itH_0}
\end{align*}
Now by applying (\ref{OmegaEstWithm}) from Lemma \ref{MainEstNoMB} to the second term, we get that for any $s\geq t>0$, and $0<\varepsilon<\min(m-\delta,v)$
\begin{align*}
    &\|P_\delta(\calW_{vt,m;\textrm{out}})\psi_t\|^2\leq \braket{\Omega(s)e^{itH_0}P_\delta(\calW_{vt;\textrm{out}})\psi_t,\psi}+Ct^{-\ell}+\int\limits_t^{\infty}\|V\chi_{\calA_{\varepsilon w}}\|_{\textrm{op}}\, dw
\end{align*}
for some constant $C$ that does not depend on $t$ or $s$. Observe that because $\delta<m$, by Proposition \ref{psupport},  $P_\delta(\calW_{vt,m;\textrm{out}})\psi_t$ lies in $\calD$ as does $e^{itH_0}P_\delta(\calW_{vt,m;\textrm{out}})\psi_t$ since the free propagator does not alter the momentum support of a state. Thus, with $t$ fixed, we may take the limit $s\rightarrow \infty $ in the above to obtain
\begin{align*}
    \|P_\delta(\calW_{vt,m;\textrm{out}})\psi_t\|^2&\leq \braket{\Omega e^{itH_0}P_\delta(\calW_{vt,m;\textrm{out}})\psi_t,\psi}+Ct^{-\ell}+\int\limits_t^{\infty}\|V\chi_{\calA_{\varepsilon w}}\|_{\textrm{op}}\, dw\\
    &=Ct^{-\ell}+\int\limits_t^{\infty}\|V\chi_{\calA_{\varepsilon w}}\|_{\textrm{op}}\, dw\xrightarrow{t\rightarrow \infty }0
\end{align*}
as $\psi\perp \Ran(\Omega) $, and by assumption (\ref{EnssCond}), which proves the inclusion.\par
Conversely, let $\psi\in \calH_{\textrm{int}}$. We will show that $\psi \perp \Omega(\calD_k(\mathcal{C}_{i}))$ for any $k>0, i \in \calI$ and conclude by density. Let $\varphi \in \Omega(\calD_k(\mathcal{C}_{i})) $ for some $k>0, i \in \calI$ and let $m,\varepsilon$ and $v$ satisfy $m,v,\varepsilon\in(0,k)$. Then by Lemma  \ref{RanDkLemma} for $\delta$ sufficiently small there exists some $T_0(\calC_i)>0$ such that there are constants $C>0$ and $\ell>0$ so that
\begin{align*}
    \|(P_\delta(\calW_{vt,m;\textrm{out}})-\Id)\varphi_t\|<Ct^{-\ell}+\int\limits_t^\infty\|\chi_{A_{\varepsilon s}}V\|_{\textrm{op}}\|\varphi\|\,ds
\end{align*}
for all $t>T_0$.\par
Then we have that for any $t>T_0$
\begin{align*}
    \braket{\psi,\varphi}&=\braket{P_\delta(\calW_{vt,m;\textrm{out}})\psi_t,\varphi}+\braket{\psi_t,(P_\delta(\calW_{vt,m;\textrm{out}})-\id)\varphi_t}\\
    &\leq\|P_\delta(\calW_{vt,m;\textrm{out}})\psi_t\|\|\varphi\|+\|\psi\|\|(P_\delta(\calW_{vt,m;\textrm{out}})-\Id)\varphi_t\|\\
    &<\|P_\delta(\calW_{vt,m;\textrm{out}})\psi_t\|\|\varphi\|+Ct^{-\ell}+\int\limits_t^\infty\|\chi_{A_{\varepsilon s}}V\|_{\textrm{op}}\|\varphi\|\,ds
\end{align*}
Since  $\psi\in \calH_\textrm{int}$, for the same $v$ and $m$, and choosing $\delta$ smaller if necessary, we have that 
\begin{align*}
    &\lim\limits_{t\rightarrow\infty}\|P_\delta(\calW_{vt,m;\textrm{out}})\psi_t\|=0
\end{align*}
so we may conclude that 
\begin{align*}
    \braket{\psi,\varphi}&<\|P_\delta(\calW_{vt,m;\textrm{out}})\psi_t\|\|\varphi\|+Ct^{-\ell}+\int\limits_t^\infty\|\chi_{A_{\varepsilon s}}V\|_{\textrm{op}}\|\varphi\|\,ds\xrightarrow{t\rightarrow \infty }0
\end{align*}
from assumption (\ref{EnssCond}). Therefore, $\psi \perp \varphi$, as needed.
\end{proof}

\subsection{Spatial characterizations of $\calH_{\textrm{scat}}$ and $\calH_{\textrm{int}}$}
In this section, we show that for some systems one can replace the microlocal descriptions of $\calH_{\textrm{scat}}$ and $\calH_{\textrm{int}}$ with descriptions that are purely spatial. Recall Theorem \ref{spaceOnlyThm}:
\setcounter{section}{3}
\setcounter{theorem}{2}
\begin{theorem}
    Suppose that $\{\calC_i\}_{i\in \calI}$ consists of cones with aperture greater than or equal to $\pi$. Then with $\calD=\overline{\bigcup_{i\in\calI}\calD(C_i)}$ we have that
    \begin{align*}
    &\Omega(\calD)=\overline{\{\psi \in\calH\mid \exists v>0,\lim\limits_{t\rightarrow \infty }\|\chi_{\calA_{vt}^c}\psi_t\|=0\}}\\
    &\Omega(\calD)^\perp=\{\psi \in\calH\mid \forall v>0,\lim\limits_{t\rightarrow \infty }\|\chi_{\calA_{vt}}\psi_t\|=0\}
\end{align*}
\end{theorem}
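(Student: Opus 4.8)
\emph{Plan.} Write $\calH_{\mathrm{scat}}^{\mathrm{sp}}=\{\psi\mid \exists v>0,\ \lim_{t\to\infty}\|\chi_{\calA_{vt}^c}\psi_t\|=0\}$ and $\calH_{\mathrm{int}}^{\mathrm{sp}}=\{\psi\mid \forall v>0,\ \lim_{t\to\infty}\|\chi_{\calA_{vt}}\psi_t\|=0\}$ for the two sets on the right-hand sides; both are norm-closed (an easy $\limsup$ argument), and we must show $\Omega(\calD)=\overline{\calH_{\mathrm{scat}}^{\mathrm{sp}}}$ and $\Omega(\calD)^\perp=\calH_{\mathrm{int}}^{\mathrm{sp}}$. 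I would first record three soft facts, \emph{none of which uses the aperture hypothesis}. Since the space-projection of $\calW_{vt,m;\mathrm{out}}$ lies in $\calA_{vt}$ and a triangle-inequality computation gives $d(\calA_{vt},\calA_{vt/2}^c)\geq vt/2$, the approximate space localization property of $P_\delta$ yields $\|P_\delta(\calW_{vt,m;\mathrm{out}})\chi_{\calA_{vt/2}^c}\|_{\mathrm{op}}=O(t^{-\ell})$; splitting a state via $P_\delta(\calW_{vt,m;\mathrm{out}})+(\Id-P_\delta(\calW_{vt,m;\mathrm{out}}))$ this gives $\calH_{\mathrm{scat}}\subset\calH_{\mathrm{scat}}^{\mathrm{sp}}$ and $\calH_{\mathrm{int}}^{\mathrm{sp}}\subset\calH_{\mathrm{int}}$. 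Next, writing $\braket{\psi_t,\phi_t}=\braket{\chi_{\calA_{vt}^c}\psi_t,\phi_t}+\braket{\psi_t,\chi_{\calA_{vt}}\phi_t}$ for $\psi\in\calH_{\mathrm{scat}}^{\mathrm{sp}},\phi\in\calH_{\mathrm{int}}^{\mathrm{sp}}$ shows $\calH_{\mathrm{scat}}^{\mathrm{sp}}\perp\calH_{\mathrm{int}}^{\mathrm{sp}}$. Granting the one remaining inclusion $\calH_{\mathrm{int}}\subset\calH_{\mathrm{int}}^{\mathrm{sp}}$, we get $\calH_{\mathrm{int}}^{\mathrm{sp}}=\calH_{\mathrm{int}}=\Omega(\calD)^\perp$ by Theorem~\ref{thms}, while from the first two facts $\Omega(\calD)=\overline{\calH_{\mathrm{scat}}}\subset\overline{\calH_{\mathrm{scat}}^{\mathrm{sp}}}$ and $\calH_{\mathrm{scat}}^{\mathrm{sp}}\subset(\calH_{\mathrm{int}}^{\mathrm{sp}})^\perp=\Omega(\calD)$; hence $\Omega(\calD)=\overline{\calH_{\mathrm{scat}}^{\mathrm{sp}}}$. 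So the whole theorem reduces to $\calH_{\mathrm{int}}\subset\calH_{\mathrm{int}}^{\mathrm{sp}}$, which is where the aperture hypothesis enters.

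For this inclusion, fix $\psi\in\calH_{\mathrm{int}}$ — reducing to $\psi\in\mathrm{Dom}(H)$ by $H$-invariance of $\calH_{\mathrm{int}}=\Omega(\calD)^\perp$ (which holds because $e^{-itH_0}$ preserves momentum support, so $e^{-itH_0}\calD=\calD$) — and fix $v>0$; we must show $\|\chi_{\calA_{vt}}\psi_t\|\to0$. Choose $v''<v$ and $m>0$ small (with $\delta$ small relative to $m$) and split
\begin{align*}
 \chi_{\calA_{vt}}\psi_t=\chi_{\calA_{vt}}P_\delta(\calW_{v''t,m;\mathrm{out}})\psi_t+\chi_{\calA_{vt}}P_\delta(\calW_{v''t,m;\mathrm{out}}^c)\psi_t.
\end{align*}
The first term has norm at most $\|P_\delta(\calW_{v''t,m;\mathrm{out}})\psi_t\|\to0$, directly from $\psi\in\calH_{\mathrm{int}}$. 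For the second term, the containment
\begin{align*}
 (\calA_{vt}\times\bbR^d)\cap\calW_{v''t,m;\mathrm{out}}^c\subset\bigcup_{i\in\calI}A_{vt}(\calC_i)\times A_m^c(\calC_{\vec v_i})=:\calV_t,
\end{align*}
valid because $A_{vt}(\calC_i)\subset A_{v''t}(\calC_i)$ (shifting a cone further along its axis shrinks it) forces the momentum into $A_m^c(\calC_{\vec v_i})$, reduces matters — modulo $O(t^{-\ell})$ errors obtained by moving spatial cutoffs past $P_\delta$ via the approximate space localization property — to proving $\braket{\psi_t,P_\delta(\calV_t)\psi_t}\to0$.

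The estimate on $\braket{\psi_t,P_\delta(\calV_t)\psi_t}$ is the heart of the argument and the expected main obstacle, and it is here that aperture $\geq\pi$ (i.e.\ $\gamma_i\geq\pi/2$) is essential. The key geometric input is that $-A_m^c(\calC_{\vec v_i})=\overline{\calC_{\vec v_i,\pi-\gamma_i}}-\tfrac{m}{\sin\gamma_i}\vec v_i$ is a bounded translate of the \emph{acute} cone about $\vec v_i$, and directions in this acute cone are recession directions of $A_n(\calC_i)$ (an extension of the geometric facts in Propositions~\ref{AGeo} and \ref{coneOnlyGeo}). Consequently the classically allowed region of $\calV_t$ under the \emph{backward} free flow does not retreat toward $\calC_i^c$: one gets $\mathfrak{C}_{-w}(\calV_t)\subset\bigcup_i A_{vt-c_0 w}(\calC_i)$ for a fixed constant $c_0$ (no momentum cutoff needed, since the loss comes only from the fixed translation $\tfrac{m}{\sin\gamma_i}\vec v_i$, not the momentum magnitude). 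Choosing $m$ small enough that $c_0<v/2$, the region stays at depth $\geq \tfrac12 vt$ for $w$ up to order $t$; since $V$ is negligible that deep in the cones, the standard comparison $\|(e^{iwH}-e^{iwH_0})P_\delta(\calV_t)\|_{\mathrm{op}}\leq\int_0^w\|Ve^{iw'H_0}P_\delta(\calV_t)\|_{\mathrm{op}}\,dw'$ together with the microlocal non-stationary phase property of $P_\delta$ and the change of variables $\int_t^{2t}\|\chi_{\calA_{\varepsilon w}}V\|_{\mathrm{op}}\,dw\to0$ shows that $e^{iwH}P_\delta(\calV_t)$ is uniformly close to $e^{iwH_0}P_\delta(\calV_t)$ for $w\lesssim t$. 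Writing $\braket{\psi_t,P_\delta(\calV_t)\psi_t}=\braket{\psi_{t-w},e^{iwH}P_\delta(\calV_t)\psi_t}\approx\braket{\psi_{t-w},e^{iwH_0}P_\delta(\calV_t)\psi_t}$ and using that $e^{iwH_0}P_\delta(\calV_t)\psi_t$ remains spatially concentrated deep in the cones, one bounds this, for a suitable $w$ of order $t$, by $\|\chi_{\calA_{v't-w}}\psi_{t-w}\|\,\|\psi\|+o(1)$ with $v'\geq v$. This couples the quantity at time $t$ and speed $v$ to the \emph{smaller} region (faster speed) at an earlier time; combining it with a maximal-velocity bound (from $\tfrac{d}{dt}\braket{\psi_t,\|x\|^2\psi_t}\leq 2\braket{\psi_t,\|x\|^2\psi_t}^{1/2}\braket{\psi_t,2H_0\psi_t}^{1/2}$, which gives $\limsup_t\|\chi_{\calA_{wt}}\psi_t\|\lesssim w^{-1}$ for $\psi$ of finite energy) and iterating up the geometric sequence of speeds forces $\braket{\psi_t,P_\delta(\calV_t)\psi_t}\to0$, hence $\|\chi_{\calA_{vt}}\psi_t\|\to0$. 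Making this self-improvement close uniformly over $i\in\calI$ — and, equivalently, pinning down that the ``escaping into the cone'' profiles produced by the reverse-time comparison carry no mass against $\Ran(\Omega)$, via a reverse-time analogue of the argument in Theorem~\ref{Orthogonal} — is the delicate point of the proof.
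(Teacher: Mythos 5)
Your outer reduction is sound and matches the paper's strategy in spirit: the soft localization facts, the orthogonality $\calH_{\mathrm{scat}}^{\mathrm{sp}}\perp\calH_{\mathrm{int}}^{\mathrm{sp}}$, and the observation that everything hinges on showing that an interacting state cannot asymptotically carry mass on the region $\calV_t=\bigcup_i A_{vt}(\calC_i)\times A_m^c(\calC_{\vec v_i})$ are all correct, and your geometric identity $-A_m^c(\calC_{\vec v,\gamma})=\overline{\calC_{\vec v,\pi-\gamma}}-\tfrac{m}{\sin\gamma}\vec v$ is exactly the point where aperture $\geq\pi$ enters (it is the paper's identity $\calW_{n,m;\mathrm{out}}\cup\calW_{n,m;\mathrm{in}}=\calA_n\times\bbR^d$ in different clothing, since it says $\calV_t\subset\calW_{vt,m;\mathrm{in}}$). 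But at the heart of the matter you do not have a proof. The missing idea is that the incoming estimate is \emph{universal}: $\slim_{t\to\infty}P_\delta(\calW_{vt,m;\mathrm{in}})e^{-itH}=0$ for \emph{every} $\psi\in\calH$ (Proposition \ref{InProp}), with no bootstrap, no a priori velocity bound, and no use of $\psi\in\calH_{\mathrm{int}}$. Its proof has two steps: (i) the Duhamel/Cook comparison $\|P_\delta(\calW_{vt,m;\mathrm{in}})(e^{-itH}-e^{-itH_0})\|_{\mathrm{op}}\to0$, using that $\mathfrak{C}_{-w}(\calW_{vt,m;\mathrm{in}})\subset\calA_{(v-m)t-\text{(loss)}}$ stays deep in the cones where $V$ is integrably small plus the microlocal non-stationary phase bound --- this part you essentially have; and (ii) for the free evolution, a simple density argument: for $\psi$ compactly supported, $\supp\psi$ lies within distance $R$ of $\calA_0^c$, and since $\mathfrak{C}_{-t}(\calW_{vt,m;\mathrm{in}})\subset\calA_{(v-m)t}$ is at distance $\gtrsim(v-m)t-R$ from that set, non-stationary phase gives $\|\chi_{\supp\psi}e^{itH_0}P_\delta(\calW_{vt,m;\mathrm{in}})\|_{\mathrm{op}}=O(t^{-\ell})$. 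Step (ii) is the piece your argument lacks, and it is what makes the whole thing close without any self-reference.

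In its place you propose a self-referential inequality coupling $\braket{\psi_t,P_\delta(\calV_t)\psi_t}$ to $\|\chi_{\calA_{v't-w}}\psi_{t-w}\|$, to be closed by a maximal-velocity bound and an iteration ``up the geometric sequence of speeds,'' and you yourself flag that closing this uniformly is unresolved --- so the key step is asserted, not proved. Moreover the auxiliary input is shaky: the bound $\limsup_t\|\chi_{\calA_{wt}}\psi_t\|\lesssim w^{-1}$ derived from $\braket{\psi_t,\|x\|^2\psi_t}$ requires $\psi$ in the domain of $|x|$ (not just finite energy), and more seriously it presumes $\calA_{wt}\subset\{\|x\|\gtrsim wt\}$, which fails for the general setting of the theorem where $\calI$ may be infinite with unbounded vertices; and the appeal to ``a reverse-time analogue of Theorem \ref{Orthogonal}'' reintroduces dependence on orthogonality to $\Ran(\Omega)$ that the incoming estimate does not need. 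So the proposal identifies the right geometry but has a genuine gap exactly where the paper's Proposition \ref{InProp} does the work; replacing your bootstrap by that Enss-type two-step argument (Duhamel comparison plus testing the free evolution against compactly supported states) repairs it.
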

\setcounter{theorem}{7}
\setcounter{section}{5}
\begin{remark}
The above theorem applies to potentials for which $\{\calC_i\}_{i\in \calI}$ also contains cones of aperture less than $\pi$. In this case, one will have a spatial characterization only for those cones of large enough aperture. See Example \ref{broken} for one such setting.
\end{remark}
So far, we have described the set of scattering states $ \mathcal{H}_{\textrm{scat}} $ as those states which asymptotically propagate into some cone $ \mathcal{C} $ with outgoing momenta, that is, those that point into $ \mathcal{C} $. To obtain a spatial characterization, it suffices to show that it is impossible for a state to propagate into $ \mathcal{C} $ with any other momentum localization, if $\gamma\geq \frac{\pi}{2}$. For this, we begin by defining the incoming subset of phase space for any collection of cones:
let
\begin{align*}
    &W_{n,m;\textrm{in}}(\calC_{x,\vec{v},\gamma})=\{(x,p) \in \bbR^{2d} \mid x\in A_n(\calC_{x,\vec{v},\gamma}), -p \in A_{-m}(\calC_{\vec{v},\gamma})\}\\
    &\calW_{n,m; \textrm{in}}=\bigcup\limits_{i\in\calI}W_{n,m;\textrm{in}}(\calC_i)
\end{align*}
See Figure \ref{fig:momentumInSets}.
\begin{figure}[ht]
\centering
\begin{tikzpicture}
 \pgfmathsetmacro {\coneendx }{3}
 \pgfmathsetmacro {\coneendy }{4}
 \pgfmathsetmacro {\smallconeendx }{sqrt(3/7)}
 \pgfmathsetmacro {\smallconeendy }{4/3*\smallconeendx }
 \pgfmathsetmacro {\pnty }{4.5}
 \pgfmathsetmacro {\pntx }{0.2}
 \pgfmathsetmacro {\An }{2}
 \pgfmathsetmacro {\Ak }{0.4}
 \pgfmathsetmacro {\deg }{37}
\draw[orange,dashed] (\coneendx,\coneendy) -- (0,0) -- (-\coneendx,\coneendy);     
\filldraw[orange]  (0,0) circle (0pt) node[anchor=north] {$\calC_i$};

\draw[red,dashed,thick] (0,0) -- (0,1);
\draw[red] (0,1) arc (90:60:1);
\filldraw[red] (-0.1,0.7) circle (0pt) node[anchor=west] {$\gamma$};

\draw[black,thick] (\coneendx,\coneendy+\An) -- (0,0+\An) -- (-\coneendx,\coneendy+\An);      
\filldraw[black]  (0,0+\An) circle (0pt) node[anchor=north] {$A_n(\calC_i)$};

\draw (\pntx,\pnty) node {x};
\draw[red,pattern=north east lines, pattern color=red] (\pntx,\pnty+\Ak) -- (\pntx-\smallconeendx,\pnty+\Ak-\smallconeendy) arc(-90-\deg:-90+\deg:1) -- cycle;
\filldraw[red] (\pntx,\pnty-1+\Ak) circle (0pt) node[anchor=north] {$W_{n,m;\textrm{in}}$};

\draw[blue,pattern=north east lines, pattern color=blue] (\pntx,\pnty+\Ak) -- (\pntx-\smallconeendx,\pnty+\smallconeendy+\Ak) arc(90+\deg:90-\deg:1) -- cycle;
\filldraw[blue] (\pntx-\smallconeendx,\pnty+\smallconeendy+\Ak) circle (0pt) node[anchor=east] {$W_{n,m;\textrm{out}}$};

\end{tikzpicture}
\caption{Illustration of the phase space sets $W_{n,m;\textrm{out}}(\calC_i)$ and $W_{n,m;\textrm{in}}(\calC_i)$: each has space coordinates inside the black cone with momentum coordinates inside the red/blue cone, respectively.}\label{fig:momentumInSets}
\end{figure}
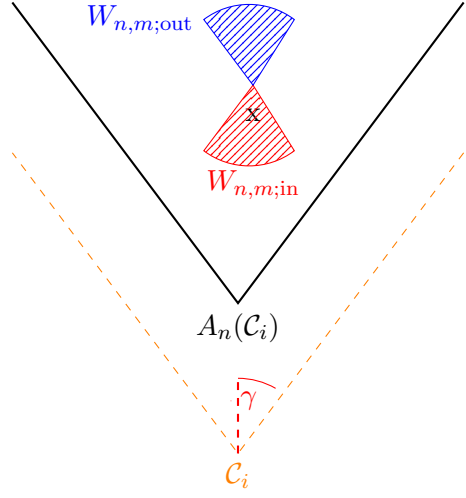
We show that asymptotically no state can concentrate in these subsets of phase space:
\begin{proposition}\label{InProp}
For any $v>0$, $0<m<v$, and $\delta< \frac{v-m}{2}$
\begin{align}\label{inLim}
    & \slim\limits_{t\rightarrow \infty} P_\delta(\calW_{vt,m;\mathrm{in}})e^{-itH}=0
\end{align}
\end{proposition}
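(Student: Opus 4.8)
The plan is to fix an arbitrary $\psi\in\calH$ and show $\|P_\delta(\calW_{vt,m;\mathrm{in}})\psi_t\|\to 0$ as $t\to\infty$; since $\|P_\delta(E)\|_{\mathrm{op}}\le 1$ this is exactly the asserted strong limit, and since $0\le P_\delta(E)\le\Id$ (so $P_\delta(E)^2\le P_\delta(E)$) it suffices to show $\langle\psi_t,P_\delta(\calW_{vt,m;\mathrm{in}})\psi_t\rangle\to 0$. Writing $\psi=e^{itH}\psi_t$, this quantity equals $\langle\psi,\,e^{itH}P_\delta(\calW_{vt,m;\mathrm{in}})\psi_t\rangle$. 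The heuristic is that a state microlocalized in the incoming region at time $t$ lies deep inside the cones $\calC_i$, far from the support of $V$, so its backward $H$-evolution from time $t$ is close to free; backward free evolution then pushes it still deeper into the cones. Since $\psi$ itself carries no mass arbitrarily far inside the cones, the pairing must vanish.

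To replace $e^{itH}$ by $e^{itH_0}$ I would use the Duhamel formula $e^{itH}P=e^{itH_0}P+i\int_0^t e^{i(t-u)H}Ve^{iuH_0}P\,du$ with $P=P_\delta(\calW_{vt,m;\mathrm{in}})$, so the error is bounded by $\|\psi\|\int_0^t\|Ve^{iuH_0}P_\delta(\calW_{vt,m;\mathrm{in}})\|_{\mathrm{op}}\,du$. Splitting $V=\chi_{\calA_{\varepsilon(u+t)}}V+\chi_{\calA_{\varepsilon(u+t)}^c}V$ for small $\varepsilon$, the first piece contributes at most $\tfrac1\varepsilon\int_{\varepsilon t}^\infty\|\chi_{\calA_\rho}V\|_{\mathrm{op}}\,d\rho\to 0$ by the generalized Enss condition \eqref{EnssCond}. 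For the second piece I would invoke property (4) of $P_\delta$: since $\mathfrak{C}_{-u}(W_{vt,m;\mathrm{in}}(\calC_i))=\mathfrak{C}_{u}(W_{vt,-m;\mathrm{out}}(\calC_i))$, Proposition \ref{coneOnlyGeo} gives
\begin{align*}
	d\bigl(\mathfrak{C}_{-u}(\calW_{vt,m;\mathrm{in}}),\,\calA_{\varepsilon(u+t)}^c\bigr)\ \ge\ vt-mu-\varepsilon(u+t)\ =\ (v-\varepsilon)t-(m+\varepsilon)u,
\end{align*}
and because $m+\delta<v$ — which is where the hypothesis $\delta<\tfrac{v-m}{2}$ enters — this exceeds $\delta u$ for all $u\in[0,t]$ once $\varepsilon$ is small, so $\|\chi_{\calA_{\varepsilon(u+t)}^c}e^{iuH_0}P_\delta(\calW_{vt,m;\mathrm{in}})\|_{\mathrm{op}}\le Ct^{-\ell}$ uniformly in $u\le t$, contributing $O(t^{1-\ell})$. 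This balancing of the $u$-dependent cutoff radius against the condition $m<v$ is the step I expect to be the main obstacle; it is the time-reversed counterpart of Lemma \ref{MainEstNoMB}.

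It then remains to show $\langle\psi,\,e^{itH_0}P_\delta(\calW_{vt,m;\mathrm{in}})\psi_t\rangle\to 0$. Fix $\rho$ with $0<\rho<v-m-\delta$ (possible since $\delta<\tfrac{v-m}{2}$). Property (4) of $P_\delta$ applied with $F=\calA_{\rho t}^c$, using $d(\mathfrak{C}_{-t}(\calW_{vt,m;\mathrm{in}}),\calA_{\rho t}^c)\ge(v-m-\rho)t>\delta t$, shows $\|\chi_{\calA_{\rho t}^c}e^{itH_0}P_\delta(\calW_{vt,m;\mathrm{in}})\|_{\mathrm{op}}\le Ct^{-\ell}$, i.e.\ $e^{itH_0}P_\delta(\calW_{vt,m;\mathrm{in}})\psi_t$ lives in $\calA_{\rho t}$ up to a vanishing error. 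Hence
\begin{align*}
	\bigl|\langle\psi,\,e^{itH_0}P_\delta(\calW_{vt,m;\mathrm{in}})\psi_t\rangle\bigr|\ \le\ \|\chi_{\calA_{\rho t}}\psi\|\,\|\psi\|\ +\ \|\psi\|^2\,\|\chi_{\calA_{\rho t}^c}e^{itH_0}P_\delta(\calW_{vt,m;\mathrm{in}})\|_{\mathrm{op}},
\end{align*}
and both terms go to $0$: the second by the estimate just obtained, the first because the decreasing sets $\calA_{\rho t}=\bigcup_i A_{\rho t}(\calC_i)$ shrink to a null set as $t\to\infty$ (each $A_{\rho t}(\calC_i)$ misses a ball of radius $\sim\rho t$ about the origin, up to the vertex offset), so $\|\chi_{\calA_{\rho t}}\psi\|\to 0$ by dominated convergence. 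Combining with the previous step yields $\langle\psi_t,P_\delta(\calW_{vt,m;\mathrm{in}})\psi_t\rangle\to 0$, which proves the proposition.
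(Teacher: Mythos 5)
Your proposal is correct, and its treatment of the interaction term is essentially the paper's: both reduce to $\int_0^t\|Ve^{iuH_0}P_\delta(\calW_{vt,m;\mathrm{in}})\|_{\mathrm{op}}\,du$ via Duhamel, split $V$ with $\chi_{\calA_{\varepsilon(u+t)}}$, and handle the remainder through the reflection identity $\mathfrak{C}_{-u}(\calW_{vt,m;\mathrm{in}})=\mathfrak{C}_{u}(\calW_{vt,-m;\mathrm{out}})$ together with Proposition \ref{coneOnlyGeo}, Lemma \ref{generalnonstationary}, and the Enss condition (\ref{EnssCond}). Where you genuinely differ is the free part: the paper proves $\slim_{t\to\infty}P_\delta(\calW_{vt,m;\mathrm{in}})e^{-itH_0}=0$ by taking $\psi$ compactly supported, placing the spatial cutoff on a fixed ($t$-independent) neighborhood of $\supp\psi$ so that the relevant distance is $(v-m)t-R$, and concluding by density; you instead pair against $\psi$, quadratic-form style, and cut at the moving region $\calA_{\rho t}$, trading the density argument for the claim $\|\chi_{\calA_{\rho t}}\psi\|\to0$. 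This variant works, and as a bonus it only uses $\delta<v-m$ rather than $\delta<\tfrac{v-m}{2}$. One caveat: your stated reason for $\|\chi_{\calA_{\rho t}}\psi\|\to0$ (each $A_{\rho t}(\calC_i)$ misses a ball of radius about $\rho t$ around the origin ``up to the vertex offset'') is not quite right for infinite families of cones, since the vertex offsets need not be uniform in $i$. The correct observation is that if some point lies outside every cone, then every point of a fixed compact set is within a uniformly bounded distance of every $\calC_i^c$, hence outside $\calA_{\rho t}$ for all large $t$; the only way $\|\chi_{\calA_{\rho t}}\psi\|\not\to0$ can occur is that balls around a fixed point lie arbitrarily deep inside single cones, which forces $\calA_r=\bbR^d$ for every $r$ and then (\ref{EnssCond}) forces $V\equiv0$ --- a degenerate configuration that the paper's own step (choosing $R$ with $\supp\psi\subset\calA_0+B_R$ and using the distance bound $(v-m)t-R$) does not cover either. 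So at the level of generality the paper actually treats, your proof is complete.
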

\begin{proof}
This proof is based on an argument of Enss recorded in \cite{SimonEnns}. For any $\psi \in \calH$ we can write 
\begin{align*}
     &\|P_{\delta }(\calW_{vt,m; \textrm{in}})e^{-itH}\psi\|\leq\|P_{\delta }(\calW_{vt,m; \textrm{in}})(e^{-itH}-e^{-itH_0})\psi\|+\|P_{\delta }(\calW_{vt,m; \textrm{in}})e^{-itH_0}\psi\| 
\end{align*}
so to prove (\ref{inLim}), it suffices to prove that for $v,m,$ and $\delta$ as above
\begin{align}
    &\lim_{t\rightarrow\infty}\|P_{\delta }(\calW_{vt,m; \textrm{in}})(e^{-itH}-e^{-itH_0})\|_{\textrm{op}}=0\label{eq:2}
\end{align}
and
\begin{align}
   & \slim\limits_{t\rightarrow \infty}P_{\delta }(\calW_{vt,m; \textrm{in}})e^{-itH_0}=0\label{eq:3}
\end{align}
To prove (\ref{eq:2}), we write, for $\varepsilon<\frac{v-m}{4}$
\begin{align*}
    &\|P_{\delta }(\calW_{vt,m; \textrm{in}})(e^{-itH}-e^{-itH_0})\|_\textrm{op}= \|(e^{itH}-e^{itH_0})P_{\delta }(\calW_{vt,m; \textrm{in}})\|_\textrm{op}\\
    &=\|(\Id -e^{-itH}e^{itH_0})P_{\delta }(\calW_{vt,m; \textrm{in}})\|_\textrm{op}\leq \int\limits_{0}^t\|e^{-iwH}(-H+H_0) e^{iwH_0}P_\delta(\calW_{vt,m; \textrm{in}})\|_{\textrm{op}}\,dw\\
    &\leq \int\limits_{0}^t\|\chi_{\calA_{\varepsilon (t+w)}}V\|_{\textrm{op}}\,dw  +M\int\limits_{0}^t\|\chi_{\calA_{\varepsilon (t+w)}^c} e^{iwH_0}P_\delta(\calW_{vt,m; \textrm{in}})\|_{\textrm{op}}\,dw\\
    &\leq  \int\limits_{t}^\infty \|\chi_{\calA_{\varepsilon w }}V\|_{\textrm{op}}\,dw  +M\int\limits_{0}^t\|\chi_{\calA_{\varepsilon (t+w)}^c} e^{iwH_0}P_\delta(\calW_{vt,m; \textrm{in}})\|_{\textrm{op}}\,dw
\end{align*}
Now we note that for any cone $\calC_{x,\vec{v},\gamma}$
\begin{align*}
     \mathfrak{C}_{-w}(W_{vt,m; \textrm{in}}(\calC_{x,\vec{v},\gamma}))&=\{x-wp\mid (x,p)\in W_{vt,m; \textrm{in}}(\calC_{x,\vec{v},\gamma})\}\\
     &=\{x+wp\mid (x,p)\in A_{vt}(\calC_{x,\vec{v},\gamma}), p\in A_{-m}(\calC_{\vec{v},\gamma})\}\\
     &=\mathfrak{C}_{w}(W_{vt,-m;\textrm{out}})
\end{align*}
so that by Proposition \ref{coneOnlyGeo}
\begin{align*}
    d(\mathfrak{C}_{-w}(\calW_{vt,m; \textrm{in}}),\calA_{\varepsilon (t+w)}^c)\geq (vt-mw)-\varepsilon(t+w)
\end{align*}
which is greater than $\delta w$ because
$w<t$, $\delta<\frac{v-m}{2}$,and $\varepsilon<\frac{v-m}{4}$.
Thus, we may apply Lemma \ref{generalnonstationary} to conclude that for any $\ell>0$ there is some $C>0$ such that
\begin{align*}
    \|\chi_{\calA_{\varepsilon (t+w)}^c} e^{iwH_0}P_\delta(\calW_{vt,m; \textrm{in}})\|_{\textrm{op}}<Ct^{-\ell}
\end{align*}
from which (\ref{eq:2}) follows immediately when combined with the Enss condition (\ref{EnssCond}).\par
To prove (\ref{eq:3}), we fix $\psi \in \calH$ compactly supported and choose $R$ so that $\supp \psi \subset \calA_0+B_R$. Then
\begin{align*}
    &\|P_{\delta }(\calW_{vt,m;\textrm{in}})e^{-iH_0t}\psi \|=\|P_{\delta }(\calW_{vt,m;\textrm{in}})e^{-iH_0t}\chi_{\calA_0+B_R}\psi \|\leq \|\chi_{\calA_0+B_R} e^{iH_0t}P_{\delta }(\calW_{vt,m;\textrm{in}})\|_\textrm{op} \|\psi\|
\end{align*}
Again by Proposition \ref{coneOnlyGeo}
\begin{align*}
    d(\mathfrak{C}_{-t}(\calW_{vt,m; \textrm{in}}),\calA_0+B_R)>(v-m)t-R
\end{align*}
and thus for $t>\frac{2R}{v-m}$
\begin{align*}
     d(\calA_0+B_R,\mathfrak{C}_{-t}(\calW_{vt,m; \textrm{in}}))>\frac{v-m}{2}t>\delta t
\end{align*}
Therefore, we can apply Lemma \ref{generalnonstationary}, to get that
\begin{align*}
    \|\chi_{\calA_0+B_R} e^{iH_0t}P_{\delta }(W_{vt,m;\textrm{in}})\|_\textrm{op}< C((v-m)t-R)^{-\ell}
\end{align*}
from which it follows that
\begin{align*}
    \lim_{t\rightarrow\infty}P_\delta(\calW_{vt,m;\textrm{in}})e^{-itH_0}\psi=0
\end{align*}
Density establishes (\ref{eq:3}), thus proving the lemma in full.
\end{proof}
\begin{proof}[Proof of Theorem \ref{spaceOnlyThm}]
The key point is that in this case
\begin{align}\label{inOutGeo}
    \calW_{n,m;\textrm{out}}\cup \calW_{n,m;\textrm{in}}=\calA_n\times \bbR^d
\end{align}
To see this, note that if $\calC_i=\calC_{x,\vec{v},\gamma}$ and $\gamma\geq \frac{\pi}{2}$ (that is, the aperture of the cone is at least $\pi$) then $\calC_{\vec{v},\gamma}^c\subset -\calC_{\vec{v},\gamma}$ since if $y\in \calC_{\vec{v},\gamma}^c$ we have
\begin{align*}
     &\braket{y,\vec{v}}\leq \cos(\gamma)\|y\| \implies\braket{-y,\vec{v}}\geq -\cos(\gamma)\|y\|>\cos(\gamma)\|y\|
\end{align*}
as $\cos(\gamma)<0$ if $\gamma>\frac{\pi}{2}$ and if $\gamma=\frac{\pi}{2}$ the above is true up to a set of measure $0$. In particular,
\begin{align*}
    A_m^c(\calC_{\vec{v},\gamma})=\calC_{\vec{v},\gamma}^c+\frac{m}{\sin(\gamma)}\vec{v}\subset-\calC_{\vec{v},\gamma}+\frac{m}{\sin(\gamma)}\vec{v}=-A_{-m}(\calC_{\vec{v},\gamma})
\end{align*}
so that (\ref{inOutGeo}) holds.\par
Now, fix $\psi\in\calH_\textrm{int}$ and $v>0$. Choose $m<v$ and $\delta$ sufficiently small and apply Proposition \ref{InProp} to see that
\begin{align*}
    \lim\limits_{t\rightarrow \infty }\|P_\delta(\calA_{vt}\times \bbR^d)\psi_t\|=\lim\limits_{t\rightarrow \infty }\|P_\delta(\calW_{vt,m;\mathrm{out}}\cup \calW_{vt,m;\mathrm{in}})\psi_t\|=0
\end{align*}
Since
\begin{align*}
    \|\chi_{\calA_{\frac{vt}{2}}}\psi_t\|\leq \|P_\delta(\calA_{vt})\psi_t\|+\|\chi_{\calA_{\frac{vt}{2}}}P_\delta(\calA_{vt}^c)\psi_t\|
\end{align*}
and
\begin{align*}
    \|P_\delta(\calA_{vt})\psi_t\|\leq \|\chi_{\calA_{2vt}}\psi_t\|+\|P_\delta(\calA_{vt})\chi_{\calA_{2vt}^2}\psi_t\|
\end{align*}
from Proposition \ref{SpaceLoc} we see that
\begin{align*}
    \|\chi_{\calA_{\frac{vt}{2}}}\psi_t\| + o(1)\leq \|P_\delta(\calA_{vt})\psi_t\|\leq \|\chi_{\calA_{2vt}}\psi_t\| + o(1)
\end{align*}
as $t\rightarrow \infty$. Therefore,
\begin{align*}
    \calH_\textrm{int}\subset \{\psi\in\calH\mid \forall v>0\lim_{t\rightarrow0}\|\chi_{\calA_{vt}}\psi_t\|=0\}
\end{align*}
Conversely, if $\lim\limits_{t\rightarrow\infty}\|\chi_{\calA_{vt}}\psi_t\|=0$ then by the above for any $\delta>0$
\begin{align*}
    \lim_{t\rightarrow\infty}\|P_\delta(\calA_{vt})\psi_t\|=0
\end{align*}
and thus for any $m>0$
\begin{align*}
    \|P_\delta(\calW_{vt,m;\textrm{out}})\psi_t\|^2= \Span{P_\delta(\calW_{vt,m;\textrm{out}})^2\psi_t,\psi_t}\leq \braket{P_\delta(\calA_{vt}\times\bbR^d)\psi_t,\psi_t}\xrightarrow{t\rightarrow\infty}0
\end{align*}
This proves the opposite inclusion
\begin{align*}
    \{\psi\in\calH\mid \forall v>0\lim_{t\rightarrow0}\|\chi_{\calA_{vt}}\psi_t\|=0\}\subset \calH_\textrm{int}
\end{align*}
The same argument shows that
\begin{align*}
    \calH_\textrm{scat}\subset\{\psi\in\calH\mid \exists v>0\lim_{t\rightarrow0}\|\chi_{\calA_{vt}^c}\psi_t\|=0\}
\end{align*}
because $\sloppy{\calA_{vt}^c\times\bbR^d\subset \calW_{vt,m;\textrm{out}}^c}$ for any $m>0$. Furthermore, if $\lim\limits_{t\rightarrow\infty}\|\chi_{\calA_{vt}^c}\psi_t\|=0$ for some $v>0$, then $\psi$ is orthogonal to $\calH_\textrm{int}$ since we have shown that any $\varphi\in\calH_\textrm{int}$ must satisfy $\lim\limits_{t\rightarrow\infty}\|\chi_{\calA_{vt}}\varphi_t\|=0$, for all $v>0$. Therefore, $\psi \in \calH_\textrm{int}^\perp=\overline{\calH_{\textrm{scat}}}$, thus proving the opposite inclusion.
\end{proof}

\section{Examples}
\setcounter{section}{2}
\begin{example}[Single cone]
Suppose that $\{\calC\}_{i\in\calI}$ consists of a single cone $\calC=\calC_{x,\vec{v},\gamma}$. Then $\calD=\calC_{\vec{v},\gamma}$ and Theorem \ref{thms} gives the following microlocal description:
    \begin{align*}
        &\Omega(\calD)=\overline{\{\psi \in \calH\mid \exists v,m,\delta_0>0,\text{ so that  }\forall \delta\in(0,\delta_0)\lim_{t\rightarrow\infty}\|(P_\delta([A_{vt}(\calC)\times A_m(\calC)]^c)\psi_t\|=0\}}\\
        &\Omega(\calD)^\perp=\{\psi \in \calH \mid  \forall v,m>0, \exists \delta_0>0 \text{ so that  }\forall \delta\in(0,\delta_0)\,\,\lim\limits_{t\rightarrow\infty} \|P_\delta(A_{vt}(\calC)\times A_m(\calC))\psi_t\|=0\}
    \end{align*}
    This indicates that $\Omega(\calD)$ consists of states which propagate into $\calC$ with momenta in $A_m(\calC)$. When $\gamma <\frac{\pi}{2}$, this is the best description our theorems afford. It does not rule out a state in $\Omega(\calD)^\perp$ which propagates into $\calC$, but with the wrong momenta and that thus could ``bounce" off of the boundary of $\calC$.\par 
    However, when $\gamma\geq \frac{\pi}{2}$, Theorem \ref{spaceOnlyThm} shows that in fact
    \begin{align*}
        &\Omega(\calD)=\overline{\{\psi \in\calH\mid \exists v>0,\lim\limits_{t\rightarrow \infty }\|\chi_{A_{vt}^c(\calC)}\psi_t\|=0\}}\\
        &\Omega(\calD)^\perp=\{\psi  \in \calH\mid  \forall v>0\,\,\lim\limits_{t\rightarrow\infty} \|\chi_{A_{vt}(\calC)}\psi_t\|=0\}
    \end{align*}
    because in this case we have shown that is impossible for a state to propagate into $ \calC $ with momenta pointing away from $\mathcal{C}$ (this is the content of Proposition \ref{InProp}).
\end{example}
\begin{example}[Short-range scattering]
	As explained in the introduction, we may choose $ \{\mathcal{C}\}_{i\in \calI} $ so that $ \calA_r=B_r^c $. Relative to this collection of cones, the condition (\ref{EnssCond}) becomes the classical Enns condition
	\begin{align*}
		\|V\chi_{B_{r}^c}\|_\textrm{op}\in L^1([0,\infty),dr)
	\end{align*}
	which is one of many short-range scattering assumptions in the literature. Here, $\calD$ is in fact equal to all of $\mathcal{H}$. In this setting, Theorem \ref{spaceOnlyThm} shows that
    \begin{align*}
	&\Ran(\Omega)=\overline{\{\psi \in\calH\mid \exists v>0,\lim\limits_{t\rightarrow \infty }\|\chi_{B_{vt}}\psi_t\|=0\}}\\
        &\Ran(\Omega)^\perp=\{\psi \in\calH\mid \forall v>0,\lim\limits_{t\rightarrow \infty }\|\chi_{B_{vt}^c}\psi_t\|=0\}
        \end{align*}
    This result may be contrasted with the usual asymptotic completeness statement for short-range scattering, which is
    \begin{align*}
    	&\Ran(\Omega)=\mathcal{H}_{\textrm{c}}(H)\\
    	&\Ran(\Omega)^\perp=\mathcal{H}_{\textrm{pp}}(H)
    \end{align*}
    This latter description may be connected to the dynamics of $H$ via the RAGE theorem \cite{AmreinGeorgescu,Ruelle}, which is a crucial ingredient in the original argument of Enss. A standard formulation of the RAGE theorem (see for example \cite{teschl2009mathematical}) is
    \begin{align*}
    	&\calH_{\textrm{c}}(H)=\{\psi\in \calH\mid \lim\limits_{n\rightarrow\infty }\lim\limits_{T\rightarrow\infty } \frac{1}{T}\int\limits_0^T\|\chi_{B_n}\psi_t\|dt=0\}\\
        &\calH_{\textrm{pp}}(H)=\{\psi\in \calH\mid \lim\limits_{n\rightarrow\infty }\sup\limits_{t\geq 0} \|\chi_{B_n^c}\psi_t\|dt=0\}
    \end{align*}
    In particular, the space varible $n$ is decoupled from $t$, whereas in our result $ n=vt $ for some velocity $v$. 
\end{example}
\begin{example}[Subspace potential]
	Let $S_r$ be the points within distance $ r $ from some fixed subspace of $ \bbR^d $. We explained in the introduction that $ S_r $ may be written as $\calA_r^c$ for some appropriately chosen collection of cones. In this setting, $ \calD=\mathcal{H} $ and Theorem \ref{spaceOnlyThm} shows that
	if
	\begin{align*}
		\|V\chi_{S_{r}^c}\|_\textrm{op}\in L^1([0,\infty),dr)
	\end{align*}
	then
    \begin{align*}
        &\Ran(\Omega)=\overline{\{\psi \in\calH\mid \exists v>0,\lim\limits_{t\rightarrow \infty }\|\chi_{S_{vt}}\psi_t\|=0\}}\\
        &\Ran(\Omega)^\perp=\{\psi \in\calH\mid \forall v>0,\lim\limits_{t\rightarrow \infty }\|\chi_{S_{vt}^c}\psi_t\|=0\}
    \end{align*}
    which recovers the main result of \cite{black2022scattering}.
\end{example}
\begin{example}[Broken subspace]
	In this case, $ \{\mathcal{C}\}_{i\in\calI} $ consists of two cones $ \mathcal{C}_1 $ and $ \mathcal{C}_2 $, the first with $ \gamma_1<\frac{\pi}{2} $ and the second with $ \gamma_2=\pi-\frac{\pi}{2} $. Let $ \calD_1=\calD(\calC_1) $ and $ \calD_2=\calD(\calC_2) $ be the domains of $ \Omega $ corresponding to each cone. Then relative to $ \mathcal{C}_1 $ we obtain only a microlocal description
     \begin{align*}
        &\Omega(\calD_1)=\overline{\{\psi \in \calH\mid \exists v,m,\delta_0>0\text{ so that }\forall\delta\in (0,\delta_0) \lim_{t\to\infty}\|(P_\delta([A_{vt}(\calC_1)\times \calC_1]^c)\psi_t\|=0\}}\\
        &\Omega(\calD_1)^\perp=\{\psi  \in \calH\mid  \forall v,m>0, \exists \delta_0>0, \text{ so that  }\forall \delta\in(0,\delta_0)\,\,\lim\limits_{t\rightarrow\infty} \|P_\delta(A_{vt}(\calC_1)\times A_m(\calC_1))\psi_t\|=0\}
    \end{align*}
    whereas for the second cone we obtain a purely spatial description
    \begin{align*}
         &\Omega(\calD_2)=\overline{\{\psi \in\calH\mid \exists v>0,\lim\limits_{t\rightarrow \infty }\|\chi_{A_{vt}(\calC_2)}\psi_t\|=0\}}\\
        &\Omega(\calD_2)^\perp=\{\psi \in\calH\mid \forall v>0,\lim\limits_{t\rightarrow \infty }\|\chi_{A_{vt}^c(\calC_2)}\psi_t\|=0\}
    \end{align*}
    In other words, any state which propagates into the larger cone $ \mathcal{C}_2 $ at a linear rate must be a scattering state, but for $ \mathcal{C}_1 $ this is only the case for states with momenta which also point into $ \mathcal{C}_1 $.
\end{example}
\setcounter{section}{6}
\appendix
\section{Existence of the POVM $P_\delta$}\label{DaviesProperties}
\begin{lemma}
There exists a Positive Operator Valued Measure (POVM), $P_\delta$, defined on the phase space $\bbR^d_x\times \bbR^d_p$, with the following properties, for any $E\subset \bbR^{2d}$ Borel
\begin{enumerate}
    \item(Observable) $P_\delta(\bbR^{2d})=\id$.
    \item(Momentum localization) Let $B\subset \bbR^d$ and $D\subset \bbR^d$ be Borel sets such that $d(B,D)>\delta$. Then for any $E\subset \bbR^d\times B$ Borel and $\psi \in \calH$ such that $\supp \hat{\psi} \subset D$
    \begin{align*}
        P_\delta(E) \psi =0
    \end{align*}
    \item(Approximate space localization) Let $A\subset \bbR^d$ and $D\subset\bbR^d$ be Borel sets so that $\sloppy{d(D,A)>0}$. Then for any $\ell>0$ there exists some constant $C>0$ so that for all $E\subset A\times \bbR^d$
    \begin{align*}
        \|P_{\delta}(E)\chi_{D}\|_{\textrm{op}}<C[d(A,D)]^{-\ell}
    \end{align*}
    \item(Microlocal non-stationary phase estimate) Let $\mathfrak{C}_t(E)\subset \bbR^{d}$ denote the classically allowed region associated to $E\subset \bbR^{2d}$ at time $t$:
    \begin{align*}
        \mathfrak{C}_t(E)=\{x+tp\mid (x,p)\in E\}
    \end{align*}
    Let $F\subset \bbR^d$ be Borel. For any $\ell>0$ there exists $C>0$ such that 
    \begin{align*}
        \|\chi_{F}e^{-itH_0}P_\delta(E)\|_\mathrm{op}\leq Cd(|t|)^{-\ell}
    \end{align*}
    for all $t$ such that $d(t):=d(\mathfrak{C}_t(E),F)>\delta |t|$.
    \item(Spatial non-stationary phase estimate) Let $\{A_t\}_{t\geq0}$ be collections of Borel subsets of $\bbR^d$.\par
    Then for any $\varphi \in \calS$ such that $\supp \hat{\varphi} \Subset D$ Borel, $\ell>0$, and $\varepsilon>0$ there exists some constant $C(\psi,\ell,\varepsilon,\delta)>0$ such that 
    \begin{align*}
        \|P_{\delta}(A_t\times\bbR^d)e^{-itH_0}\varphi \|< Ct^{-\ell}
    \end{align*}
    for all $t$ such that $d(A_{t},tD)>\varepsilon t$.
\end{enumerate}
\end{lemma}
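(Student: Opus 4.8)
The plan is to realize $P_\delta$ as an anti-Wick (Berezin) quantization built from a family of coherent states, following the construction of Davies. Fix $g_\delta\in\calS(\bbR^d)$ with $\|g_\delta\|=1$ and with $\hat g_\delta$ supported in the ball of radius $\delta/2$, and for $(q,p)\in\bbR^{2d}$ set $\phi_{q,p}(x)=e^{ipx}g_\delta(x-q)$. Define, for Borel $E\subset\bbR^{2d}$,
\[
  P_\delta(E)\psi=(2\pi)^{-d}\int_E \braket{\phi_{q,p},\psi}\,\phi_{q,p}\,dq\,dp .
\]
Positivity of each $P_\delta(E)$ is clear since the integrand is a positive rank-one operator, and countable additivity holds in the weak operator topology by monotone convergence. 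Property (1) is the standard coherent-state resolution of the identity: performing the $p$-integral by Plancherel gives $\int_{\bbR^d}|\braket{\phi_{q,p},\psi}|^2\,dp=(2\pi)^d\int|g_\delta(x-q)|^2|\psi(x)|^2\,dx$, and integrating this over $q$ yields $\braket{\psi,P_\delta(\bbR^{2d})\psi}=\|\psi\|^2$. Property (2) is immediate from $\supp\widehat{\phi_{q,p}}=p+\supp\hat g_\delta$, which is disjoint from $D$ whenever $p\in B$ and $d(B,D)>\delta$, so $\braket{\phi_{q,p},\psi}=0$ for every $(q,p)$ in $E\subset\bbR^d\times B$.

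The workhorse for the remaining properties is the explicit formula
\[
  P_\delta(A\times\bbR^d)=M_{\mu_A},\qquad \mu_A(x)=\int_A|g_\delta(x-q)|^2\,dq ,
\]
obtained by performing the $p$-integral first (it produces a delta function in position); here $M_{\mu_A}$ is multiplication by $\mu_A$, and $0\le\mu_A\le1$ since $\|g_\delta\|=1$. Property (3) then follows from the monotonicity $P_\delta(E)\le P_\delta(A\times\bbR^d)$ together with $\sup_{x\in D}\mu_A(x)\le\int_{\|y\|>d(A,D)}|g_\delta(y)|^2\,dy$, which is $O(d(A,D)^{-2\ell})$ by the Schwartz decay of $g_\delta$. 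Property (5) follows from the same formula: $\|M_{\mu_{A_t}}e^{-itH_0}\varphi\|^2$ is split over $\{d(x,A_t)>\tfrac{\varepsilon t}{2}\}$, where $\mu_{A_t}(x)$ is $O(t^{-2\ell})$ by Schwartz decay, and its complement, where $d(x,tD)>\tfrac{\varepsilon t}{2}$ and hence $|(e^{-itH_0}\varphi)(x)|$ is rapidly decaying by Lemma \ref{simpleNonstationary} (applied with an open set squeezed between $\supp\hat\varphi$ and $D$).

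Property (4) is the heart of the matter, and I would prove it using the covariance of coherent states under the free flow. A direct computation gives
\[
  (e^{-itH_0}\phi_{q,p})(x)=e^{i(xp-t\|p\|^2/2)}\,(e^{-itH_0}g_\delta)(x-q-tp),
\]
so $e^{-itH_0}|\phi_{q,p}\rangle\langle\phi_{q,p}|e^{itH_0}=|\phi^{(t)}_{q+tp,p}\rangle\langle\phi^{(t)}_{q+tp,p}|$, where $\phi^{(t)}_{Q,P}(x)=e^{iPx}(e^{-itH_0}g_\delta)(x-Q)$ is the coherent state built from the spread profile $e^{-itH_0}g_\delta$, still $L^2$-normalized by unitarity. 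Changing variables by the measure-preserving flow $(q,p)\mapsto(q+tp,p)$ yields $e^{-itH_0}P_\delta(E)e^{itH_0}=\widetilde P^{(t)}_\delta(\Phi_t(E))$, where $\Phi_t(E)=\{(x+tp,p):(x,p)\in E\}$ (whose projection to position space is $\mathfrak{C}_t(E)$) and $\widetilde P^{(t)}_\delta$ is the POVM built from $e^{-itH_0}g_\delta$, again a resolution of the identity with the same constant. Since $\Phi_t(E)\subset\mathfrak{C}_t(E)\times\bbR^d$, the monotonicity of $\widetilde P^{(t)}_\delta$, the workhorse formula applied to $\widetilde P^{(t)}_\delta$, and the inequality $\|\chi_F T\|_{\textrm{op}}^2\le\|\chi_F T\chi_F\|_{\textrm{op}}$ (valid for $0\le T\le\Id$ since $T^2\le T$) give
\[
  \|\chi_F e^{-itH_0}P_\delta(E)\|_{\textrm{op}}^2
  =\|\chi_F\widetilde P^{(t)}_\delta(\Phi_t(E))\|_{\textrm{op}}^2
  \le\sup_{x\in F}\int_{\mathfrak{C}_t(E)}|(e^{-itH_0}g_\delta)(x-q)|^2\,dq .
\]
For $x\in F$ and $q\in\mathfrak{C}_t(E)$ we have $\|x-q\|\ge d(t)>\delta\,|t|$, so Lemma \ref{simpleNonstationary} applied to $g_\delta$ (whose Fourier transform sits compactly in $B_\delta$) bounds the integrand by $C_N(1+\|x-q\|+|t|)^{-2N}$; taking $N$ large and integrating gives $\|\chi_F e^{-itH_0}P_\delta(E)\|_{\textrm{op}}\le Cd(t)^{-\ell}$, the case $d(t)\le1$ being trivial since the operator norm is at most $1$.

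The main obstacle is precisely that $E$ may have infinite Lebesgue measure, so the naive estimate obtained by moving absolute values inside the defining integral of $P_\delta(E)$ — a Schur test in $(q,p)$ — diverges; $P_\delta(E)$ is bounded only by virtue of cancellation among the coherent states. The covariance identity sidesteps this: it converts the oscillatory phase-space integral $e^{-itH_0}P_\delta(E)e^{itH_0}$ into an honest multiplication operator whose symbol is a mollified indicator of $\mathfrak{C}_t(E)$, and the only effect of the free evolution is that the mollifier $|e^{-itH_0}g_\delta|^2$ spreads — but its total mass stays frozen at $\|g_\delta\|^2$ by unitarity, while Lemma \ref{simpleNonstationary} supplies exactly the decay beyond the spreading scale $\delta|t|$ that closes the estimate.
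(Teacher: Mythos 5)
Your construction is the same Davies coherent-state POVM as in the paper, and your argument is correct, but you organize the key estimates differently. For the microlocal non-stationary phase estimate (property (4)) the paper works directly with the quadratic form: it bounds $\|P_\delta(E)e^{itH_0}\chi_F\psi\|^2$ by $\braket{\cdot,P_\delta(E)\cdot}$, computes $e^{-itH_0}\eta_{x,p;\delta}$ explicitly, inserts the indicator of $\{|y-(x+tp)|>d(t)\}$, enlarges the integration domain to all of $\bbR^{2d}$, and after the change of variables $x'=x+tp$ and Plancherel in $p$ reduces to $\int_{B_{d(t)}^c}|(e^{-itH_0}\eta_\delta)(x')|^2\,dx'$; you instead package the same computation as two structural identities — the covariance $e^{-itH_0}P_\delta(E)e^{itH_0}=\widetilde P^{(t)}_\delta(\Phi_t(E))$ under the free flow, and the fact that the POVM of a cylinder set $A\times\bbR^d$ is multiplication by the mollified indicator $\mu_A$ — and then apply the same non-stationary phase input (Lemma \ref{simpleNonstationary} applied to the evolved profile, whose mass is conserved by unitarity). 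The analytic core is thus identical, but your route buys some things: the multiplication-operator formula gives a self-contained proof of the approximate space localization (3), which the paper simply cites from \cite{black2022scattering}, and it streamlines (5) and the passage from the phase-space integral to the position-space decay in (4), at the cost of having to justify the operator identities (which you do correctly via quadratic forms, positivity, and the measure-preserving change of variables, including the $T^2\le T$ trick and the trivial bound when $d(t)$ is small). One technical remark: $\mathfrak{C}_t(E)$ is a projection of a Borel set and need not be Borel, so when you invoke $\widetilde P^{(t)}_\delta(\mathfrak{C}_t(E)\times\bbR^d)$ you should replace $\mathfrak{C}_t(E)$ by the closed (hence Borel) superset $\{q\mid d(q,F)\geq d(t)\}$, which is all your final estimate uses; the paper's insertion of $\chi_{\{|y-(x+tp)|>d(t)\}}$ handles the same point, so this is a cosmetic fix rather than a gap.
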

\begin{proof}
To this end, we will use the phase space observable formalism developed in \cite{davies1976quantum,davies1980enss} and used in \cite{black2022scattering}.\par
Choose $\eta\in\calS$, such that $\|\eta\|=1$ and $\supp \hat{\eta}\subset B_1$. Let $\eta_\delta$ be such that $\hat{\eta}_\delta(p)=\delta^{-\frac{d}{2}}\hat{\eta}(\frac{p}{\delta})$, a rescaling of $\eta$, so that $\supp \hat{\eta}_\delta\subset B_\delta$ and $\|\eta_{\delta}\|=1$.\par
Now define the following family of coherent states by translating $\eta_\delta$ in phase space:
\begin{align*}
    &\hat{\eta}_{x,p;\delta}(\xi)=e^{-ix\xi}\hat{\eta}_\delta(\xi-p)
\end{align*}
We use this to define a family, depending on $\delta>0$, of positive-operator-valued measures: for any $E\subset \bbR^{2d}$ Borel and $\psi \in \calH$ let
\begin{align*}
    P_\delta(E)\psi=(2\pi)^{-d}\iint\limits_E \braket{\eta_{x,p;\delta},\psi} \eta_{x,p;\delta} \,dx\,dp
\end{align*}
The various properties of $P_\delta$ are proved in a series of propositions below.\par
In Appendix A of \cite{black2022scattering} we proved the following properties of $P_\delta$:
\begin{proposition}[Observable]
For any $\delta>0$ we have $ P_\delta(\bbR^{2d})=\id$.
\end{proposition}
\begin{proposition}[Momentum localization]\label{psupport}
Let $B\subset \bbR^d$ and $D\subset \bbR^d$ be Borel sets such that $d(B,D)>\delta$. Then for any $E\subset \bbR^d\times B$ Borel and $\psi \in \calH$ such that $\supp \hat{\psi} \subset D$
    \begin{align*}
        P_\delta(E) \psi =0
    \end{align*}
\end{proposition}
\begin{proposition}[Approximate space localization]\label{SpaceLoc}
 Let $A\subset \bbR^d$ Borel and any set $D\subset\bbR^d$ Borel such that $d(D,A)>0$, for any $\ell>0$ we have some constant $C>0$ so that
    \begin{align*}
        \|P_{\delta}(A\times \bbR^d)\chi_{D}\|_{\mathrm{op}}<C[d(A,D)]^{-\ell}
    \end{align*}
\end{proposition}

Finally we prove two estimates relating $P_\delta$ to the free propagator $e^{-itH_0}$, both based on the principle of of non-stationary phase. The first is similar to Lemma 2 of Theorem XI.112 in \cite{RSVol3}, but adapted to $P_\delta$. This lemma and its proof are similar to Lemma 3 in \cite{yoneyama}. 
\begin{lemma}[Microlocal non-stationary phase estimate]\label{generalnonstationary}
Let $\mathfrak{C}_t(E)\subset \bbR^{d}$ denote the classically allowed region associated to $E\subset \bbR^{2d}$ at time $t$:
    \begin{align*}
        \mathfrak{C}_t(E)=\{x+tp\mid (x,p)\in E\}
    \end{align*}
    Let $F\subset \bbR^d$ be Borel. For any $\ell>0$ there exists $C>0$ such that 
    \begin{align*}
        \|\chi_{F}e^{-itH_0}P_\delta(E)\|_\mathrm{op}\leq Cd(|t|)^{-\ell}
    \end{align*}
    for all $t$ such that $d(t):=d(\mathfrak{C}_t(E),F)>\delta |t|$.
\end{lemma}
\begin{proof}
We start by noting that for any $\psi \in \calH$, by the boundedness of $P_\delta$
\begin{align*}
     \|P(E)\psi\|^2=\braket{\psi,P^2(E)\psi}&\leq \braket{\psi,P(E)\psi}=(2\pi)^{-d}\iint\limits_E \braket{\eta,\psi}\braket{\psi,\eta} dx dp=(2\pi)^{-d}\iint\limits_E |\braket{\eta,\psi}|^2dx dp
\end{align*}
We will estimate the norm of the adjoint operator $P_\delta(E)e^{itH_0}\chi_F$. For $\psi \in\calH$, by the above inequality 
\begin{align*}
    \|P_\delta(E)e^{itH_0}\chi_F\psi\|^2&\leq (2\pi)^{-d}\iint\limits_E|\Span{\eta_{x,p;\delta},e^{itH_0}\chi_F\psi}|^2\,dx\,dp\\
    &=(2\pi)^{-d}\iint\limits_E|\int
    \limits_{\bbR^d}\overline{e^{-itH_0}\eta_{x,p;\delta}}(y)\chi_F(y)\psi(y)\,dy|^2\,dx\,dp
\end{align*}
We now compute
\begin{align*}
    &(e^{-itH_0}\eta_{x,p;\delta})(y)=(2\pi)^{-\frac{d}{2}}\int\limits_{\bbR^d}e^{i\xi\cdot(y-x)-it\frac{\xi^2}{2}}\hat{\eta}_\delta(\xi-p)\,d\xi\\
    &=e^{ip\cdot(y-x)-it\frac{p^2}{2}}(2\pi)^{-\frac{d}{2}}\int\limits_{\bbR^d}e^{i\xi\cdot(y-x)-it\frac{\xi^2}{2}-it\xi\cdot p}\hat{\eta}_\delta(\xi)\,d\xi=e^{ip\cdot(y-x)-it\frac{p^2}{2}}(e^{-itH_0}\eta_\delta)(y-(x+tp))
\end{align*}
Recalling that for any $(x,p)\in E$ and $y\in F$, $|y-(x+tp)|>d(t)
$, we see that
\begin{align*}
    &|\int\limits_{\bbR^d}\overline{e^{-itH_0}\eta_{x,p;\delta}}(y)\chi_F(y)\psi(y)\,dy|=|\int\limits_{\bbR^d}e^{-ipy}\overline{(e^{-itH_0}\eta_\delta)}(y-(x+tp))\chi_F(y)\psi(y)\,dy|\\
    &=|\int\limits_{\bbR^d}e^{-ipy}\overline{(e^{-itH_0}\eta_\delta)}(y-(x+tp))\chi_{\{|y-(x+tp)|>d(t)\}}(y)\chi_F(y)\psi(y)\,dy|\\
    &=(2\pi)^{\frac{d}{2}}|\calF[\overline{(e^{-itH_0}\eta_\delta)}(\cdot-(x+tp))\chi_{\{|\cdot-(x+tp)|>d(t)}\}(\cdot)\chi_F(\cdot)\psi(\cdot)](p)|
\end{align*}
We now perform the change of variables $x'=x+tp$ and apply the Plancharel theorem to see that
\begin{align*}
    &(2\pi)^{-d}\iint\limits_E|\int
    \limits_{\bbR^d}\overline{e^{-itH_0}\eta_{x,p;\delta}}(y)\chi_F(y)\psi(y)\,dy|^2\,dx\,dp\\
    &\leq(2\pi)^{-d} \iint\limits_{\bbR^{2d}}(2\pi)^d|\calF[\overline{(e^{-itH_0}\eta_\delta)}(\cdot-(x+tp))\chi_{\{|\cdot-(x+tp)|>d(t)\}}(\cdot)\chi_F(\cdot)\psi(\cdot)](p)|^2\,dx\,dp\\
    &=_{x'=x+tp}\iint\limits_{\bbR^{2d}}|\calF[\overline{(e^{-itH_0}\eta_\delta)}(\cdot-x')\chi_{\{|\cdot-x'|>d(t)\}}(\cdot)\chi_F(\cdot)\psi(\cdot)](p)|^2\,dx'\,dp\\
    &=\iint\limits_{\bbR^{2d}}|\overline{(e^{-itH_0}\eta_\delta)}(y-x')\chi_{\{|y-x'|>d(t)\}}(y)\chi_F(y)\psi(y)|^2\,dx'\,dy\\
    &=\int\limits_{\bbR^{d}}|\chi_F(y)\psi(y)|^2\int\limits_{\{x'\mid |y-x'|>d(t)\}}|\overline{(e^{-itH_0}\eta_\delta})(y-x')|^2\,dx'\,dy\leq \|\psi\|^2\int\limits_{B_{d(t)}^c}|(e^{-itH_0}\eta_\delta)(x')|^2\,dx'
\end{align*}
Since $d(t)>\delta t$ by assumption and $\supp \hat{\eta}\subset B_\delta$, we see that if $x'\in B_{d(t)}^c$ then $\frac{x'}{t}\not\in B_\delta$ so we may apply Lemma \ref{simpleNonstationary}
to see that for any $\ell>0$ there exists $C>0$ depending only on $\eta$ and $\delta$ such that
\begin{align*}
    \int\limits_{\{x'\mid |y-x'|>d(t)\}}|\overline{(e^{-itH_0}\eta_\delta})(y-x')|^2\,dx'\leq C\int\limits_{B_{d(t)}^c}(1+\|x'\|+|t|)^{-\ell}\,dx'\leq C(1+d(t)+|t|)^{-\ell+d}
\end{align*}
Thus, we conclude that
\begin{align*}
    \|P_\delta(E)e^{itH_0}\chi_F\psi\|\leq C(1+|t|+d(t))^{-\ell+d}\|\psi\|^2
\end{align*}
as claimed.
\end{proof}
The second lemma is essentially a quite standard non-stationary phase estimate on $e^{-itH_0}$, see for instance the Corollary to Theorem XI.14 from \cite{RSVol3}.
\begin{proposition}[Spatial non-stationary phase estimate]\label{BasicNonstationaryPd}
 Let $\{A_t\}_{t\geq0}$ be a collection of Borel subsets of $\bbR^d$. Then for any $\varphi \in \calS$ such that $\supp \hat{\varphi} \Subset D$ Borel, $\ell>0$, and $\varepsilon>0$ there exists some constant $C(\psi,\ell,\varepsilon,\delta)>0$ such that 
    \begin{align*}
        \|P_{\delta}(A_t\times\bbR^d)e^{-itH_0}\varphi \|< Ct^{-\ell}
    \end{align*}
    for all $t$ such that $d(A_{t},tD)>\varepsilon t$.
\end{proposition}
\begin{proof}
Let $\varphi\in\calS$ and $\ell,\varepsilon>0$ and $D$ be as above. Then we can write
\begin{align*}
     \|P_{\delta}(A_{t} \times \bbR^d)e^{-itH_0}\varphi\|\leq \|P_{\delta}(A_{t} \times \bbR^d)\chi_{[A_{t}+B_{\frac{\varepsilon}{2}t}]^c}\|_{\textrm{op}} \|\varphi\|+\|\chi_{A_{t}+B_{\frac{\varepsilon}{2}t}}e^{-itH_0}\varphi\|
\end{align*}
Since $d(A_t,[A_{t}+B_{\frac{\varepsilon}{2}t}]^c)>\frac{\varepsilon}{2}t$, by Property \ref{SpaceLoc} we get that
\begin{align*}
    \|P_{\delta}(A_{t} \times \bbR^d)\chi_{[A_{t}+B_{\frac{\varepsilon}{2}t}]^c}\|_{\textrm{op}}<Ct^{-\ell}
\end{align*}
We can write
\begin{align*}
    \|\chi_{A_{t}+B_{\frac{\varepsilon}{2}}}e^{-itH_0}\varphi\|^2=\int\limits_{A_{t}+B_{\frac{\varepsilon}{2}}} |e^{-itH_0}\varphi(y)|^2 dy
\end{align*}
Next we note that 
\begin{align*}
    d(A_t+B_{\frac{\varepsilon}{2}t},tD)\geq d(A_t,tD)-\frac{\varepsilon}{2}t>\frac{\varepsilon}{2}t
\end{align*}
So we conclude that $y\in A_t+B_{\frac{\varepsilon}{2}t} $ implies that  $\frac{y}{t}\not \in D$, and so by Lemma \ref{classicnonstationary} we get 
\begin{align*}
     |e^{-itH_0}\varphi(y)|\leq C(1+\|y\|+t)^{-\ell-d}
\end{align*}
Therefore,
\begin{align*}
    \|\chi_{A_t+B_{\frac{\varepsilon}{2}t}}e^{-itH_0}\varphi\|^2\leq (1+t)^{-\ell}
\end{align*}
as needed.
\end{proof}

With this lemma we proved all the needed properties for $P_\delta$.
\end{proof}

\bibliographystyle{amsplain}
\bibliography{bibliography,bibliography2}

\providecommand{\bysame}{\leavevmode\hbox to3em{\hrulefill}\thinspace}
\providecommand{\MR}{\relax\ifhmode\unskip\space\fi MR }
\providecommand{\MRhref}[2]{%
  \href{http://www.ams.org/mathscinet-getitem?mr=#1}{#2}
}
\providecommand{\href}[2]{#2}
\begin{thebibliography}{10}

\bibitem{agmon1975spectral}
S.~Agmon, \emph{Spectral properties of {S}chr{\"o}dinger operators and
  scattering theory}, Annali della Scuola Normale Superiore di Pisa-Classe di
  Scienze \textbf{2} (1975), no.~2, 151--218.

\bibitem{AmreinGeorgescu}
W.~O. Amrein and V.~Georgescu, \emph{On the characterization of bound states
  and scattering states in quantum mechanics}, Helvetica Physica Acta
  \textbf{46} (1974), no.~5, 635--658.

\bibitem{black2022scattering}
A.~Black and T.~Malinovitch, \emph{Scattering for {S}chr\"{o}dinger operators
  with potentials concentrated near a subspace}, arXiv preprint
  arXiv:2204.00712 (2022).

\bibitem{davies1976quantum}
E.~B. Davies, \emph{Quantum theory of open systems}, Academic Press, 1976.

\bibitem{davies1977scattering}
\bysame, \emph{Scattering from infinite sheets}, Mathematical Proceedings of
  the Cambridge Philosophical Society \textbf{82} (1977), no.~2, 327–334.

\bibitem{davies1980enss}
\bysame, \emph{On {E}nss’ approach to scattering theory}, Duke Mathematical
  Journal \textbf{47} (1980), no.~1, 171--185.

\bibitem{Enss}
V.~Enss, \emph{Asymptotic completeness for quantum mechanical potential
  scattering}, Communications in Mathematical Physics \textbf{61} (1978),
  no.~3, 285--291.

\bibitem{exner2020spectral}
P.~Exner, \emph{Spectral properties of soft quantum waveguides}, Journal of
  Physics A: Mathematical and Theoretical \textbf{53} (2020), no.~35, 355302.

\bibitem{kitadaYajima}
H.~Kitada and K.~Yajima, \emph{{A scattering theory for time-dependent
  long-range potentials}}, Duke Mathematical Journal \textbf{49} (1982), no.~2,
  341 -- 376.

\bibitem{RSVol3}
M.~Reed and B.~Simon, \emph{Methods of modern mathematical physics - {III}:
  {S}cattering {T}heory}, vol.~3, Elsevier, 1979.

\bibitem{Ruelle}
D.~Ruelle, \emph{A remark on bound states in potential-scattering theory}, Il
  Nuovo Cimento A (1965-1970) \textbf{61} (1969), no.~4, 655--662.

\bibitem{SimonEnns}
B.~Simon, \emph{Phase space analysis of simple scattering systems: extensions
  of some work of {E}nss}, Duke Mathematical Journal \textbf{46} (1979), no.~1,
  119--168.

\bibitem{teschl2009mathematical}
G.~Teschl, \emph{Mathematical methods in quantum mechanics}, Graduate Studies
  in Mathematics \textbf{99} (2009), 106.

\bibitem{yoneyama}
T.~Yoneyama and K.~Kato, \emph{Characterization of the ranges of wave operators
  for {S}chr\"{o}dinger equations via wave packet transform}, Funkcial. Ekvac.
  \textbf{63} (2020), no.~1, 19--37.

\end{thebibliography}
\end{document}